\def\csname ver@subfig.sty\endcsname{}
\newcounter{thm}
\newcounter{ex}
\newcounter{re}
\newtheorem{Theorem}[thm]{Theorem}
\newtheorem{Example}[ex]{Example}
\newtheorem{Remark}[re]{Remark}
\newtheorem{Corollary}[thm]{Corollary}
\newtheorem{Definition}[thm]{Definition}
\newcommand{\set}[1]{\left\{#1\right\}}
\newcommand{\trr}{\triangleright}
\newcommand{\brr}{\blacktriangleright}
\newcommand{\rrt}{\triangleleft\,} 
\newcommand{\abs}[1]{\left\vert#1\right\vert}
\newcommand{\ass}{\stackrel{\textup{\tiny def}}{=}}
\newcommand{\brak}[1]{\left|#1\right)}
\newcommand{\dep}{\delta}
\begin{document}

\title[Computing with coloured tangles]{Computing with coloured tangles}

\author{Avishy Y. Carmi and Daniel Moskovich}

\address{Faculty of Engineering Sciences \& \\ Center for Quantum Information and Technology \\ Ben-Gurion University of the Negev, Beer-Sheva 8410501, Israel}


\thanks{The authors thank M.~Buliga and L.H.~Kauffman for useful discussions that inspired the present note.}%
\subjclass{68Q05; 03D10; 57M25}%
\keywords{Diagrammatic algebra; Low dimensional topology; Computation; Turing machine; Interactive proof}%

\begin{abstract}
We suggest a diagrammatic model of computation based on an axiom of distributivity. A diagram of a decorated coloured tangle, similar to those that appear in low dimensional topology, plays the role of a circuit diagram. Equivalent diagrams represent bisimilar computations. We prove that our model of computation is Turing complete, and with bounded resources that it can decide any language in complexity class $\mathrm{IP}$, sometimes with better performance parameters than corresponding classical protocols.
\end{abstract}
\maketitle

\section{Introduction}

The present research represents a step in a programme whose goal is to study topological aspects of information and computation. Time is a metric notion, and so any such topological aspects would presumably have no internal notion of time. We consider a notion of computation which is independent of time and which is natively formulated in terms of information. We construct a diagrammatic calculus whose elements we call \emph{tangle machines}. These were first defined in \citep{CarmiMoskovich:15}. A key feature of tangle machines is that they come equipped with a natural notion of \emph{equivalence} which originates in the beautiful diagrammatic algebra of low dimensional topology. Tangle machines serve in this paper as abstract flowcharts of information in computation. We prove that the computational paradigm that we propose contains Turing Machines and Interactive Proofs (thus it does not `lose anything'), and it also contains additional models (\textit{e.g.} Section~\ref{SS:TangIP}).

The world we observe around us evolves along a time axis, so a tangle machine could not be used as a blueprint for a classical computer. Time is a more nebulous concept in the quantum realm, however, and it might be that tangle machine constructions are relevant for adiabatic quantum computations or in other quantum contexts \citep{CarmiMoskovich:15}. In particular, they naturally incorporate the axiom of uniform no-cloning (Remark~\ref{R:NoCloning} in Section~\ref{SS:Quagma}). The main relevance of our work would probably be to isolate and access natively topological aspects of classical and quantum computation. We also speculate that tangle machine computations can emerge physically via dynamical processes on a tangle machine, given a set of input colours, and that perhaps something like this actually occurs in nature. After all, natural computers are not Turing machines.

How might tangle machines manifest themselves in nature? The authors make the following speculation. Evolutionary biology provides an analogue to tangle machines in the notions of phenotype versus genotype \citep{Churchill:74,Johannsen:11}. The external characteristics of an organism such as its appearance, physiology, morphology, as well as its behaviours are collectively known as a \emph{phenotype}. The \emph{genotype} on the other hand refers to the inherent and immutable information encoded in the genome. Two phenotypes may look entirely different but may nevertheless share the same genotype. Could information about an organism be encoded as a tangle machine, where equivalent machines represent different phenotypes which share the same genotype? Might the process of evolution of an organism be described by a series of basic transformations akin to the Reidemeister moves exerted by the environment on the organism, which change its phenotype while preserving its genotype, along with occasional `violent' local moves on a current configuration which change its genotype? Might tangle machines describe a way in which nature process its information primitives--- its organisms?

There are two obvious advantages to a topological model of computation. The first is that it is very flexible by construction. Bisimilar computations (Definition~\ref{D:Bisimulation}) are represented by topologically equivalent objects, which are related in a simple way (Section~\ref{sec:lowdim}). The second, which we do not discuss in this paper, is that we have a notion of \emph{topological invariants} which are characteristic quantities which are intrinsic to a bisimilarity class of computations.


In the introduction we briefly introduce tangle machines in Section~\ref{SS:WhatTM} after which we state our main results in Section~\ref{SS:Results} and give scientific context in Section~\ref{SS:Background}.

\subsection{What is a tangle machine computation?}\label{SS:WhatTM}

A tangle machine is built up out of \emph{registers} each of which may hold an element of a set $Q$. The set $Q$ comes equipped with a set $B$ of binary operations representing basic computations. For $\trr\in B$, we read $x\trr y$ as `the result of running the programme $\trr y$ on input data $x$'. An alternative evocative image is that $x\trr y$ is a `fusion of information $x$ with information $y$ using algorithm $\trr$'. Our binary operations satisfy the following axioms which equip $(Q,B)$ with what is called a \emph{quandle} structure (see Section~\ref{SS:QuagmaDefn} for this and extensions):

\begin{description}
\item[Idempotence] $x\trr x=x$ for all $x\in Q$ and for all $\trr\in B$. Thus, $x$ cannot concoct any new information from itself.
\item[Reversibility] The map $\trr y\colon\, Q\to Q$, which maps each colour $x\in Q$ to a corresponding colour $x\trr y\in Q$, is a bijection for all $(y,\trr)\in (Q,B)$. In particular, if $x\trr y = z\trr y$ for some $x,y,z\in Q$ and for some $\trr\in B$, then $x=z$. Thus, the input $x$ of a computation may uniquely be reconstructed from the output $x\trr y$ together with the programme $\trr y$.
\item[Distributivity] For all $x,y,z\in Q$ and for all $\trr,\brr \in B$:
\begin{equation} (x\trr y)\brr z= (x\brr z)\trr (y\brr z)\enspace .\end{equation}
\noindent This is the main property. It says that carrying out a computation $\brr z$ on an output $x\trr y$ gives the same result as carrying out that computation both on the input $x$ and also on the state $y$, and then combining these as $(x\brr z)\trr (y\brr z)$. In the context of information, this is a \emph{No Double Counting} property \citep{CarmiMoskovich:14}.
\end{description}

Later in this paper, in Remark~\ref{R:NoCloning} in Section~\ref{SS:Quagma}, we show that uniform \emph{no-cloning} and \emph{no-deleting}, which are fundamental properties of quantum information, follow from Reversibility and Distributivity for an appropriate colouring by a generalization of a quandle called a \emph{quagma}. This observation argues for Reversibility and Distributivity as being nature's most fundamental information symmetries.

The basic unit of computation in a tangle machine is an \emph{interaction} representing multiple inputs $x_1,\ldots, x_k$ independently fed into a programme $\trr y$ as depicted in Figure~\ref{F:kebaba}. These are concatenated (perhaps also with wyes) to form a \emph{tangle machine}--- see Section~\ref{S:TM}.

\begin{figure}
\centering
\begin{minipage}{1.6in}
\psfrag{a}[c]{$x_1$}\psfrag{x}[l]{$x_2$}\psfrag{y}[c]{$x_k$}\psfrag{b}[c]{$y$}\psfrag{c}[c]{$\trr$}\psfrag{d}[c]{$z_1$}\psfrag{u}[c]{$z_2$}\psfrag{v}[c]{$z_k$}
\includegraphics[width=1.2in]{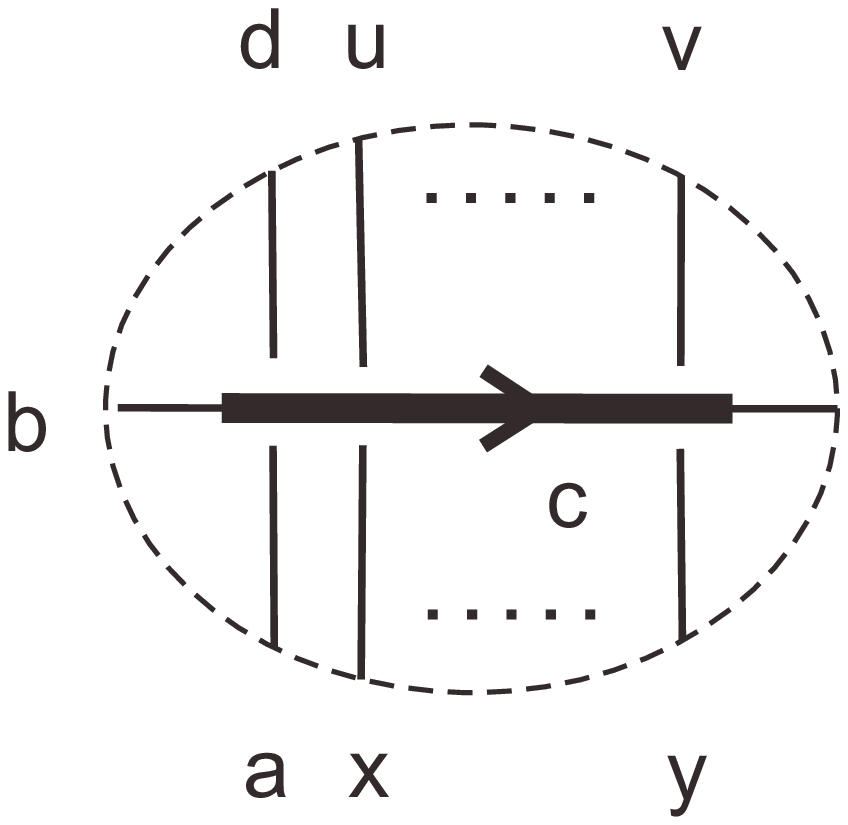}
\end{minipage}
\caption{\label{F:kebaba}An interaction in a machine, where $z_i\ass x_i \trr y$ for $i=1,2,\ldots,k$.}
\end{figure}

A \emph{tangle machine computation} begins with an initialization of a specified set of \emph{input registers} to chosen colours in $Q$. A disjoint set of \emph{output registers} is chosen. If the colours of the input registers uniquely determine colours for the output registers, then the colours of the output registers are the result of the computation. Otherwise the computation cannot take place. See Definition~\ref{D:TMComputation} and Figure~\ref{F:SampleComputation}). This provides a model of computation.

\begin{Remark}
More generally we may allow the result of the computation to be the (possibly empty) set of all possible colours out output registers. This level of generality is not required in this paper.
\end{Remark}

\begin{figure}
\centering
\begin{minipage}{0.4\textwidth}
\psfrag{a}[c]{$\trr_a$}\psfrag{b}[c]{$\trr_b$}\psfrag{c}[c]{$\trr_c$}\psfrag{d}[c]{$\trr_d$}
\psfrag{e}[c]{$\mathrm{In}_1$}\psfrag{f}[c]{$\mathrm{In}_2$}\psfrag{g}[c]{$\mathrm{Out}$}
\includegraphics[width=\textwidth]{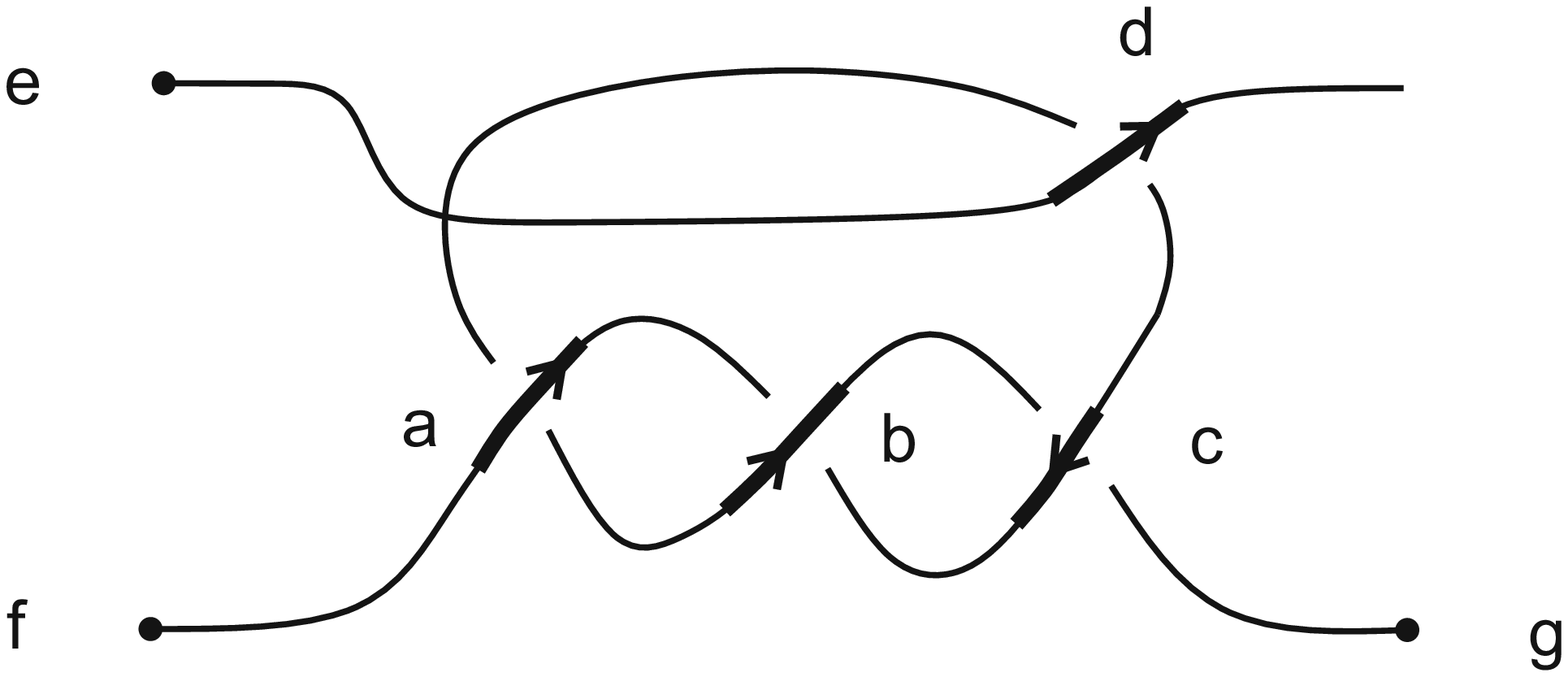}
\end{minipage}
\caption{\label{F:SampleComputation}A sample computation. Determining colours for input registers $\mathrm{In}_1$ and $\mathrm{In}_2$ uniquely and instantaneously determines the colour for the output register $\mathrm{Out}$.}
\end{figure}

A tangle machine computation has the following features:

\begin{enumerate}
\item Whereas the alphabet of a classical Turing machine is discrete (usually just $0$ and $1$ and maybe $2$), the alphabet $Q$ of a tangle machine can be any set, discrete or not. Two values of $Q$ may be chosen to represent $0$ and $1$, while the rest may represent something else--- perhaps electric signals.
\item Whereas a Turing machine computation is sequential with each step depending only on the state of the read/write head and on the scanned signal, a tangle machine computation is instantaneous and is dictated by an oracle. Time plays no role in a tangle machine computation.
\item A tangle machine computation may or many not be deterministic (colours may represent random variables and not their realizations), and it may or may not be bounded (contain a bounded number of interactions). Quandles and tangle machines are flexible enough to admit several different interpretations. In this paper, we use tangle machines to realize both logic gates (deterministic, composing into perhaps unbounded computations) and interactive proof computations (probabilistic, bounded size).
\item A tangle machine is flexible. There is a natural and intuitive set of local moves relating bisimilar tangle machine computations (Section~\ref{sec:lowdim}).
\item A tangle machine representation is abstract. A tangle machine computation takes place on the level of information itself, with no reference to time. The axioms of a quandle have intrinsic interpretation in terms of preservation and non-redundancy of information.
\end{enumerate}

\subsection{Results}\label{SS:Results}

The purpose of this note is to show the following theorems:

\begin{Theorem}\label{thm:boolean}
Any binary Boolean function can be realized by a tangle machine computation.
\end{Theorem}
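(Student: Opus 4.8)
The plan is to reduce the problem to a fixed universal gate set and then to realise each gate by a small tangle machine over one fixed colouring structure $(Q,B)$, using auxiliary input registers frozen to constant colours together with auxiliary ``garbage'' output arcs.

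First I would fix $n$ and a Boolean function $f\colon\{0,1\}^n\to\{0,1\}$, pick two distinct colours of $Q$ to play the roles of the bits $0$ and $1$, and invoke ordinary Boolean logic: $f$ is computed by a finite circuit built from a universal gate set --- it is enough to handle \textsc{not} and \textsc{and} --- together with the constants $0,1$ and with fan-out (duplication of a wire). Translating to tangle machines, the circuit's wires become registers, its constants become input registers initialised to the chosen colours, fan-out is performed by wyes, and composition of gates is the concatenation of tangle machines from Section~\ref{S:TM}: identifying the output register of one gadget with an input register of the next makes colours propagate consistently, so that if each gadget computes its gate then the assembled machine computes $f$ in the sense of Definition~\ref{D:TMComputation}. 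This reduction is routine; the substance is in building the gadgets.

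The hard part is realising each gate by a tangle machine while respecting all three axioms at every interaction. A single interaction as in Figure~\ref{F:kebaba} cannot realise an irreversible gate such as \textsc{and} on the nose, because Reversibility forces $x\mapsto x\trr y$ to be a bijection whereas \textsc{and} is not injective, and Idempotence forbids a colour from acting nontrivially on itself, which kills the naive ``control'' reading of an interaction. The remedy, familiar from reversible computing, is to embed each gate in a reversible gadget that is fed extra ancilla input registers frozen to constant colours and emits extra garbage output arcs; since the output registers of a tangle machine computation are only a subset of all its registers, the induced input-to-output map may be non-injective even though every interaction relation is a bijection. Thus it suffices to realise one reversible universal gate --- I would use the Toffoli gate $(x_1,x_2,x_3)\mapsto(x_1,x_2,x_3\oplus x_1x_2)$, from which \textsc{not}, \textsc{and} and fan-out follow by freezing inputs --- as a tangle machine over a single fixed $Q$.

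Finally I would choose $Q$. Alexander quandles ($x\trr y=tx+(1-t)y$) are useless here, since Distributivity makes every composition of such interactions affine, and an affine function of the input colours --- with any linear encoding of the bits and any number of ancillas and garbage arcs --- cannot agree on the Boolean cube with the nonlinear component $x_3\oplus x_1x_2$ of the Toffoli gate. I would therefore take $Q$ to be (a connected component of) the conjugation quandle of a suitable finite non-abelian group $G$, with $x\trr y=y^{-1}xy$, so that all three axioms hold automatically and an interaction conjugates the colour of an understrand by the colour of its overstrand. A sequence of interactions then conjugates a fixed ``marker'' colour by a product $\gamma_1\cdots\gamma_m$ of overstrand colours, and a one-interaction ``selector'' gadget lets each $\gamma_j$ be one of two prescribed elements of $G$ according to one input bit; the resulting machine is a permutation branching program over $G$, and with $G=S_5$ and Barrington's theorem --- or, more elementarily, by an explicit hand construction for the single three-bit function computed by the Toffoli gate --- the marker colour can be made to finish in one of two states recording the value of the computed bit, which is then read off the output register. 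The main obstacle throughout is this construction of the gadgets: the three axioms together are rigid enough that no single interaction is itself a logic gate, so the whole content lies in the ancilla/garbage encoding and in verifying that a genuinely non-linear colouring structure does realise a universal gate.
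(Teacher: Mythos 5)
Your argument is correct, and it is essentially the route the paper itself sketches in Section~\ref{SS:Barrington}: encode bits as elements of the conjugation quandle of a nonabelian simple group, realize the Toffoli gate as a coloured braid via Barrington's theorem (the paper cites the explicit $132$--crossing, $14$--strand $A_5$--coloured braid of Alagic--Jeffery--Jordan), and recover $\mathrm{NOT}$, $\mathrm{AND}$ and duplication by freezing ancillas. The paper acknowledges this realization as ``neither hard nor new'' and spends its actual effort on two different constructions: the quagma approach of Section~\ref{SS:Quagma}, which works over $2\times 2$ rational matrices and gets around your (correct) observation that purely affine operations cannot produce $\mathrm{AND}$ by \emph{mixing} conjugation with the affine operations $\trr_{0.5},\trr_{2}$ in a quagma, where distinct operations need not distribute over one another; and the wye approach of Section~\ref{SS:Wye}, which keeps an affine (Fox $3$--colouring) quandle and imports the nonlinearity through $\max$/$\min$ wyes. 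Your route buys a single uniform quandle colouring with no wyes and no ad hoc matrix identities to verify, at the cost of a large gadget; the paper's quagma and wye gadgets are tiny but lean on extra structure. Your reversibility/ancilla discussion correctly identifies why a single interaction cannot be an $\mathrm{AND}$ gate. One slip to fix: in this paper a wye is \emph{not} a fan-out device --- it takes two inputs and emits their $\max$ or $\min$ (Figure~\ref{F:MaxMinGates}), so it merges rather than duplicates; duplication is done by a \emph{multiplexer} submachine (Figure~\ref{F:Multiplexer}), or, in your setting, by the Toffoli gadget with frozen ancillas, which you do note later and which is the statement that should carry the weight.
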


This is neither hard nor new--- a previous such realization (not with tangle machines but with coloured braids) is recalled in Section~\ref{SS:Barrington}.

By Turing completeness of the boolean circuit model, we have the following:

\begin{Corollary}
Tangle machines (with an unbounded number of interactions) are Turing complete.
\end{Corollary}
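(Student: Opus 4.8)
The plan is to deduce the Corollary from Theorem~\ref{thm:boolean} by exhibiting, for any given Turing machine $M$, a family of tangle machine computations that carries out $M$'s step function and iterates it. First I would recall the standard fact that any Turing machine can be simulated by a family of Boolean circuits: if $M$ runs in time $t(n)$ on inputs of length $n$, then its computation on a fixed input can be unrolled into a circuit of size polynomial in $t(n)$ whose gates are all taken from a fixed finite basis (say \textsc{and}, \textsc{or}, \textsc{not}, or equivalently \textsc{nand}). The circuit reads an encoding of the input and of the tape/head/state configuration at time $i$ and outputs the encoding at time $i+1$; stacking $t(n)$ such layers and reading off the halting bit computes $M$. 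This is the ``Turing completeness of the boolean circuit model'' invoked in the statement.

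Next I would convert this circuit into a tangle machine. By Theorem~\ref{thm:boolean}, each gate in the finite basis is realized by a fixed tangle machine computation: its input registers are initialized with colours in $Q$ representing the relevant bits (two chosen elements of $Q$ playing the roles of $0$ and $1$), and its output register is then uniquely and instantaneously determined. The key structural point is that tangle machines are closed under the operations needed to assemble a circuit: the output register of one gate-gadget can be identified with (i.e.\ wired to) an input register of the next by concatenation, and a single bit can be routed to several gates using wyes (fan-out), exactly the diagrammatic primitives mentioned in Section~\ref{SS:WhatTM}. Composing the gate-gadgets according to the wiring diagram of the circuit yields one tangle machine with an unbounded --- but finite, for each fixed input length --- number of interactions, whose designated output register holds the circuit's output bit whenever the input registers are initialized to the encoding of a valid input. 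Because tangle machine computations are deterministic here (chosen colours for inputs determine the colour of \textsc{Out}), the composite faithfully computes the circuit, hence one step, hence by stacking $t(n)$ copies the full run of $M$.

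Finally I would address the one genuine subtlety: Turing completeness properly means simulating a single machine on \emph{all} inputs, including inputs that cause non-halting, rather than just deciding a fixed finite function. So the construction must be uniform in the input: there is one recipe that, given $n$ and the time bound, produces the $n$-th tangle machine, and a machine that does not halt within $t$ steps simply corresponds to taking the limit over ever-larger $t$ --- which is legitimate precisely because we have allowed an \emph{unbounded} number of interactions, as the Corollary's hypothesis states. In other words, the class of languages decidable by tangle machines of unbounded size contains every computably enumerable (indeed every computable) language, which is what Turing completeness asserts. I expect the main obstacle to be not the circuit simulation, which is textbook, but checking that the diagrammatic wiring operations (concatenation, wyes) really do preserve the ``colours of inputs uniquely determine colours of outputs'' property when gate-gadgets are composed in series and in parallel; once Theorem~\ref{thm:boolean} is in hand for the individual gates and the composition lemma is verified, the Corollary follows immediately.
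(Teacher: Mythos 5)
Your proposal is correct and follows essentially the same route as the paper, which derives the Corollary in one line from Turing completeness of the boolean circuit model once a universal gate set and a fan-out device have been realized as tangle machines in Section~\ref{S:TuringCompleteness}. The only small slip is attributing fan-out to wyes (which merge two inputs into their $\max$ or $\min$); in the paper duplication is performed by the \emph{multiplexer} of Figure~\ref{F:Multiplexer} (or Figure~\ref{F:MultiplexerC}), which is exactly the composition-preserving wiring primitive your argument needs.
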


In Section~\ref{S:TuringSimulation} we further prove the following.

\begin{Theorem}\label{T:TuringSimulation}
Any Turing machine can be simulated by a tangle machine. Such a tangle machine is coloured by a quandle $(Q,B)$
whose set of binary operations $B$ has cardinality $\mathcal{O}(n)$ where $n$ is the number of states in the finite control of
the underlying Turing machine.
\end{Theorem}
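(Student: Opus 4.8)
The plan is to build, for a given Turing machine, an explicit tangle machine whose underlying diagram is a periodic vertical stacking of a single fixed finite ``time-step gadget'' $\Gamma$, coloured by a quandle $(Q,B)$ having one binary operation per state together with a bounded number of auxiliary operations, so that the bound $|B|=\mathcal{O}(n)$ is automatic. Since the Reversibility axiom forces every map $\trr y\colon Q\to Q$ to be a bijection, whereas the global transition of a Turing machine is in general not injective, I would first replace the given machine $M$ by a reversible Turing machine $M'$ simulating it, at the cost of only a constant-factor increase in the number of states (the classical Bennett--Lecerf construction), which leaves the $\mathcal{O}(n)$ bound intact. From now on the one-step map of $M'$ on garbage-augmented configurations is a bijection.

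Next I would fix the geometry. A configuration of $M'$ is represented by a horizontal row of registers, one per tape cell, coloured by its symbol or by its symbol decorated with ``head present in state $q$'', and time runs downward, so the simulating tangle machine is the infinite vertical repetition of the gadget $\Gamma$ realising one step of $M'$; an input colouring specifies the initial tape word along the top row, and, when $M'$ halts, designated output registers lower down carry the halting configuration. Because a single register carries a single colour and quandles admit no cloning, the update inside $\Gamma$ cannot be done cell-by-cell in parallel; instead $\Gamma$ performs a left-to-right followed by a right-to-left \emph{sweep} in which one token wire, carrying the current state (or an ``already fired'' marker), passes through one interaction with each tape register in turn. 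The operation used while the token carries state $q$ is $\trr_q$: on meeting the head register with the token holding state $q$ it reads the scanned symbol from the programme colour, rewrites that register according to $\delta(q,\cdot)$, clears the head, and reprograms the token so that the head, now carrying the new state, is deposited on the appropriate neighbour; the two opposite sweeps together with a constant-size set $B_0$ of auxiliary operations handle both directions of head motion. Since $\Gamma$ is the same at every level the construction is uniform, and $B=\{\trr_q\}\cup B_0$ has $|B|=\mathcal{O}(n)$.

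The substantive work, and the step I expect to be the main obstacle, is choosing $Q$ and the operations $\trr_q$ so that $(Q,B)$ is genuinely a quandle. Idempotence is arranged by fiat. Reversibility is the first real difficulty: the naive prescription ``perform the transition, act as the identity elsewhere'' is \emph{not} injective, because a register that loses its head marker collides with the head-free registers among which it is mapped; this is exactly why the colour set must be enriched with Bennett-style history/garbage data, designed so that each $\trr_q y$ extends to a bona fide permutation of $Q$ for every fixed $y$. Distributivity, including the mixed identity $(x\trr_q y)\brr z=(x\brr z)\trr_q(y\brr z)$ relating two different state-operations, must then be checked through the case analysis -- head register versus neighbour versus inert register, matching versus non-matching token state, and the passive programme colour, which has to be set up to carry only data that every operation leaves fixed -- which is lengthy but, once the colour bookkeeping is pinned down, essentially mechanical.

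Finally I would verify faithfulness: that the colourings of the infinite diagram are precisely the space-time histories of $M'$, so that the colours of the output registers are uniquely determined by those of the input registers exactly when $M'$, hence $M$, halts, and that they then encode the halting configuration. Assembling these steps yields a tangle machine simulating $M$ and coloured by a quandle with $|B|=\mathcal{O}(n)$, as claimed.
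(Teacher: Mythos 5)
Your plan diverges from the paper's in a way that leaves the central difficulty unresolved. The paper does not build a bespoke quandle encoding the transition function: it colours everything by the \emph{linear quandle} on $\mathds{Q}$, with operations $x\trr_s y=(1-s)x+sy$, for which idempotence, reversibility and mutual distributivity hold automatically, and it pushes all of the Turing-machine logic into the \emph{circuit}: gadgets for negation, addition, an indicator and a selector (built from these affine operations together with $\max$/$\min$ wyes), a mask-generating machine for the head pointer, and one ``hardwired transition'' submachine per pair $(q,u)$, each using three operations $\trr_{\bar q+2},\trr_{\bar a+2},\trr_{\bar\epsilon+2}$; since there are at most $3n$ such pairs, $|B|\le 9n+\mathcal{O}(1)$. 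In your proposal the entire burden is instead placed on designing operations $\trr_q$ that ``perform the transition'' and then verifying the quandle axioms, and this is exactly the step you wave through as ``essentially mechanical.'' It is not: mutual distributivity $(x\trr_q y)\brr z=(x\brr z)\trr_q(y\brr z)$ is a severe algebraic constraint --- it forces every map $\cdot\brr z$ to be an automorphism of every operation $\trr_q$ --- and a generic ``read symbol, rewrite, move head'' map has no reason to satisfy it. Without an explicit $(Q,B)$ and a verification of distributivity, the proof is missing its core; the paper's choice of an Alexander-type quandle is precisely what makes the problem tractable.

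Second, your opening move --- replacing $M$ by a reversible machine on the grounds that Reversibility of the quandle would otherwise be violated by the non-injective global transition of $M$ --- addresses a non-problem. The Reversibility axiom constrains only the single maps $x\mapsto x\trr y$ on $Q$ for fixed $y$ and $\trr$; it says nothing about the composite input-to-output map of a tangle machine built from many concatenated interactions and wyes, which is an oracle-determined feed-forward computation and need not be injective (the paper's own $\mathrm{AND}$ gate is not). The Bennett reversibilization is therefore unnecessary, though harmless for the $\mathcal{O}(n)$ count.
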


Tangle machines, whose notion of computation is based on an oracle which produced output colours from input colours, can in-fact perform super-Turing computations. See Remark~\ref{R:superturing}.

Colours of registers in tangle machines evolve at \emph{interactions}. If we bound the number of interactions, tangle machine computations include computations in a complexity class which we call $\mathrm{TangIP}$ which includes inside it a class which we call $\mathrm{BraidIP}$. Letting $\chi$ denote the number of interactions in the machine, letting $\dep$ denote a `noise parameter', and letting $c$ and $s$ denote \emph{completeness} and \emph{soundness} correspondingly (see Section~\ref{SS:IntroIP}), we have the following:

\begin{Theorem}
\label{thm:existence}
$\mathrm{IP} \subseteq \mathrm{BraidIP} \left\{ \dep, \chi \right\}$ where:
\begin{equation}
\label{eq:bounds}
I(c\dep) < \chi  < \frac{1}{I(1-s\dep)},
\end{equation}
with $I(p) \ass -p^{-1} \log p$. The growth rate of $\chi$ is $\mathcal{O}(\frac{1}{\dep})$ as $\dep\to 0$.
\end{Theorem}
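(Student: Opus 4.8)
The plan is to start from an arbitrary $L\in\mathrm{IP}$, fix an interactive proof system $(P,V)$ for it, and build a coloured braid that \emph{is} the protocol. Since $V$ runs in polynomial time, its next-message function in each round is a Boolean circuit, which I would compile into a coloured braid by composing the gate realizations furnished by Theorem~\ref{thm:boolean} (the Barrington-type construction of Section~\ref{SS:Barrington}). Stacking the rounds gives one braid whose strands carry three kinds of registers: \emph{verifier registers}, forced by the circuit via distributivity; \emph{coin registers}, initialized to uniformly random colours encoding $V$'s private randomness; and \emph{prover registers}, whose colours are supplied by the oracle at each interaction and play the role of $P$'s messages --- the linear order on the braid's interactions supplies exactly the rounds of interaction, so the oracle's choice at interaction $i$ may depend on the colours exposed by interactions $1,\dots,i-1$. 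A distinguished output register then holds $\mathtt{accept}$ or $\mathtt{reject}$, so that a colouring of the braid is literally a transcript of $(P,V)$: existence of a prover colouring forcing $\mathtt{accept}$ with probability $\geq c$ over the coin registers is completeness, and absence of any prover colouring forcing $\mathtt{accept}$ with probability $>s$ when $x\notin L$ is soundness. In the noiseless regime this already decides $L$ with the native parameters $c,s$ of $(P,V)$, and the number of interactions $\chi$ is polynomial in the running time of $V$.

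The substance of the theorem is the effect of the noise parameter $\dep$ --- at each interaction the propagated colour is, with probability $\dep$, replaced by an adversarially chosen one (equivalently, the distributive consistency at that interaction may silently fail) --- and the resulting two-sided bound on $\chi$. For the lower bound I would argue that, to keep completeness at $x\in L$ above $c$ once noise is present, the decision colour must be spread redundantly across many strands and recovered by an error-correcting decode at the output; a Shannon-type counting argument shows that such an encoding survives the noise only when the number of interactions exceeds the rate $I(c\dep)$, giving $I(c\dep)<\chi$. For the upper bound, when $x\notin L$ each noisy interaction is a fresh opportunity for the adversary to corrupt a $\mathtt{reject}$ into an $\mathtt{accept}$; bounding the accumulated corruption by a union/entropy estimate keeps the protocol sound only while $\chi< 1/I(1-s\dep)$. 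Expanding $I$ near its arguments, $I(1-s\dep)=-(1-s\dep)^{-1}\log(1-s\dep)\sim s\dep$ as $\dep\to 0$, so the admissible range for $\chi$ reaches up to order $1/\dep$, which is the claimed $\mathcal{O}(1/\dep)$ growth rate; to make the range non-empty in the first place one amplifies $(P,V)$ by sequential repetition so that $c$ is pushed toward $1$ and $s$ toward $0$ (a doubly-logarithmic number of repetitions in $1/\dep$ suffices), which is absorbed into the asymptotics.

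The step I expect to be the real obstacle is precisely this quantitative bookkeeping: turning the soft statement ``noise degrades completeness and soundness'' into the sharp thresholds $I(c\dep)$ and $1/I(1-s\dep)$. This needs the right redundant encoding of the decision bit into braid strands, a careful analysis of how a single noisy interaction perturbs a colouring that is otherwise completely forced by distributivity and reversibility, and an argument that the adversary gains nothing by correlating its noise injections across interactions beyond what the union bound already charges --- in effect a diagrammatic, interactive analogue of the noisy-channel coding theorem with rate function $I(p)=-p^{-1}\log p$. Everything preceding it --- the compilation of $V$, the bookkeeping of register types, and the observation that a colouring is a transcript --- is routine given Theorem~\ref{thm:boolean}.
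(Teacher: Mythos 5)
Your proposal does not engage with what the theorem actually asserts, and the step you defer as ``the real obstacle'' is in fact the entire content of the paper's proof. The class $\mathrm{BraidIP}\{\dep,\chi\}$ of Section~\ref{S:DeformingIP} is not a class of noisy braid-compiled circuits: its elements are networks of verifiers carrying belief statistics $\brak{W_t}=a\brak{\mathrm{True}}+b\brak{\mathrm{False}}$, in which an entire deformed $\mathrm{IP}$ exchange with the prover is packaged inside each single interaction, whose only effect is the update $\brak{W_{t+1}}=(1-h)\brak{W_t}+h\brak{V_t}$ with $h=c\dep$ if $x\in L$ and $h=s\dep$ if $x\notin L$ (equations~\eqref{eq:ipdef4} and~\eqref{E:BeliefTransition}). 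Compiling the verifier's next-message circuit into a coloured braid via Theorem~\ref{thm:boolean} produces an object from the Turing-completeness track of Sections~\ref{S:TuringCompleteness}--\ref{S:TuringSimulation}, not a member of $\mathrm{BraidIP}$; transcripts, coin registers, and adversarial colour corruption play no role in the definition you need to satisfy.

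Consequently your sketch never derives the bounds $I(c\dep)<\chi<1/I(1-s\dep)$, and the Shannon-type coding argument you propose would not produce them. The paper's derivation is elementary: take the cascade of Figure~\ref{fig:pm}, in which a single sceptic with $\brak{W_0}=\brak{\mathrm{False}}$ passes under $\chi$ agents ($\chi-1$ of them initialized to $\frac{1}{2}\brak{\mathrm{True}}+\frac{1}{2}\brak{\mathrm{False}}$ and the last to $\brak{\mathrm{True}}$), iterate the update rule to obtain $\brak{W_\chi}$ in closed form, and observe that the network decides $L$ precisely when $(1-c\dep)^\chi<c\dep$ and $(1-s\dep)^\chi>s\dep$, i.e.\ $\log(c\dep)/\log(1-c\dep)<\chi<\log(s\dep)/\log(1-s\dep)$; the stated $I$-bounds then follow from the inequality $\log(1-p)<-p$. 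No error-correcting encoding, no union bound over correlated corruptions, and no amplification of the underlying proof system is involved --- the thresholds are exact consequences of a short geometric computation, not asymptotic consequences of a channel-coding theorem.
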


Thus, tangle machine computations with bounded interactions can decide any language in class $\mathrm{IP}$, which is known to equal $\mathrm{PSPACE}$ \citep{Shamir:92}.

Moreover, in the special setting of non-adaptive $3$--bit probabilistically checkable proofs ($\mathrm{PCP}$), there exists a tangle machine which achieves a better soundness parameter than the best known classical single-verifier non-adaptive $3$--bit $\mathrm{PCP}$ algorithm (Section~\ref{SS:PCPHC}). Yet the soundness parameter for this tangle machine is above the conjectured lower limit. This suggests the possibility that tangle machines behave like very good single verifiers. 

Finally, in Section~\ref{sec:lowdim}, we discuss equivalence of machines. Machine equivalence formally parallels equivalence of tangled objects in low dimensional topology, and it gives us a formalism with which to discuss bisimulation. After justifying the definition, we introduce a notion of \emph{zero knowledge} for machines, in which the proof is kept secure from untrusted verifiers at intermediate nodes.

Tangle machines are thus revealed to be a flexible model for computation.


\subsection{Other low dimensional topological approaches to computation}\label{SS:Background}

The first person to consider distributivity as an axiom of primary importance, and to suggest diagrammatic calculi for logic (and perhaps by extension to computation) was American philosopher Charles Saunders Peirce. The following Peirce quotation was pointed out by Elhamdadi \citep{Elhamdadi:14}: ``These are other cases of the distributive principle\ldots These formulae, which have hitherto escaped notice, are not without interest.'' \citep{Peirce:80}.

The idea to use diagrammatic calculi from low-dimensional topology to model computation was pioneered by Louis Kauffman, who used knot and tangle diagrams to study automata \citep{Kauffman:94}, nonstandard set theory, and lambda calculus \citep{Kauffman:95,BuligaKauffman:13}. There is also a diagrammatic $\pi$ calculus formulation of virtual tangles \citep{MeredithSnyder:10}. The diagrammatic calculus of braids (braids are a special class of tangles) also underlies topological quantum computing--- see \textit{e.g.} \citep{KauffmanLomonaco:04,Nayak:08}. Universal logic gates (Toffoli gates) have been realized using coloured braids \citep{OgburnPreskill:99,Kitaev:03,Monchon:03}. This has led to a proposal for circuit obfuscation--- masking the true purpose of a circuit--- using braid equivalence \citep{Alagic:14}. Buliga has suggested to represent computations using a calculus of coloured tangles which is different from ours \citep{Buliga:11b}. In another direction, a different diagrammatic calculus, originating in higher category theory, has been used in the theory of quantum information--- see \textit{e.g.} \citep{AbramskyCoecke:09,BaezStay:11,Vicary:12}.

In our approach, the tangle diagrams themselves are computers, representing a flowchart of information during a computation whose basic operations are distributive (compare~\citep{Roscoe:90}). This is not a diagrammatic lambda calculus or pi calculus, but rather it is a natively low dimensional topological approach to computation. In this note, we relate this approach to other approaches by showing that tangle machine computation is Turing complete, and in the bounded resource setting that it can decide any language in the complexity class $\mathrm{IP}$.

\subsection{Contents of this paper}\label{SS:Contents}
We begin in Section~\ref{S:Models} by recalling relevant models of computation such as Turing machines, $\mathrm{IP}$, and $\mathrm{PCP}$. Then, in Section~\ref{S:TM}, we recall the formalism of tangle machines \citep{CarmiMoskovich:15}. Our definition is more general than the one used in that paper. Next in Section~\ref{S:TuringCompleteness} we show that tangle machines are Turing complete, and we show in Section~\ref{S:TuringSimulation} how tangle machines may simulate Turing machines. Restricting to a bounded resources setting, we construct networks of deformed $\mathrm{IP}$ verifiers in Section~\ref{S:DeformingIP}, defining a complexity class $\mathrm{BraidIP}$. In Section~\ref{sec:defip} we show that $\mathrm{IP}\subseteq \mathrm{BraidIP}$. Section~\ref{sec:efficientip} shows how to make our network computations more efficient by getting rid of the global time axis, making use of a machine we call the Hopf--Chernoff machine. Restricting further to a $\mathrm{PCP}$ proofs, in Section~\ref{sec:pcp} we show that the Hopf--Chernoff machine gives us perfect completeness and a better soundness parameter than the best-known non-adaptive $3$--bit $\mathrm{PCP}$ verifier. Finally, we define equivalence of machines in Section~\ref{sec:lowdim}, where we also discuss the tangle machine analogue of a zero knowledge proof.

\section{Models of computation}\label{S:Models}

In this section, mainly to fix terminology and notation, we recall the notion of a Turing machine (Section~\ref{SS:Turing}), of decidable languages (Section~\ref{SS:Decidable}), of interactive proof (Section~\ref{SS:IntroIP}), and of probabilistically checkable proof (Section~\ref{SS:IntroPCP}).

\subsection{Turing machines}\label{SS:Turing}

The theory of computation and complexity theory are based on the notion of a \emph{Turing machine} \citep{Turing:37}. We recall its definition, following \citep{Hopcroft:01}.

\begin{Definition}
A \emph{Turing machine} is a triple $(\Sigma, \mathcal{S}, \delta)$ where
\begin{itemize}
\item $\Sigma$ is a finite set of symbols called the \emph{alphabet} which contains a ``blank'' symbol.
\item $\mathcal{S}$ is a finite set of ``machine states'' with $q_0 \in \mathcal{S}$ and $q_h \in \mathcal{S}$ being, respectively, the initial and final (halting) states.
\item $\delta\colon\, \mathcal{S} \times \Sigma \longrightarrow \mathcal{S} \times \Sigma \times \epsilon$ is a transition function.
\end{itemize}
\end{Definition}
The set $\epsilon = \{0,1,2\}$ indicates the movement of a tape (\emph{Left}, \emph{Stationary}, \emph{Right}), or equivalently indicates the movement of a reading/writing (R/W) head following a writing operation. For convenience and without loss of generality, we limit the alphabet to three colours, $\Sigma= \{0,1,2\}$, where $2$ represents the blank symbol.

A Turing machine is composed of two primary units. A \emph{finite control unit} remembers the current state and determines the next state based on the current reading from a \emph{memory unit}. The memory unit records symbols on a finite, possibly unbounded \emph{tape}. Reading and writing operations retrieve or modify a symbol in the current position of the R/W head along the tape.

\subsection{Computable functions and decidable languages}\label{SS:Decidable}

Computable functions are the basic objects of study in computability theory. In the context of Turing machines, a partial function $f\colon\, \Sigma^k\to \Sigma$ is \emph{computable} if there exists a Turing machine that terminate on the input $x$ (\emph{input} means tape content) with the value $f(x)$ stored on the memory tape if $f(x)$ is defined, and which never terminates on input $x$ if $f(x)$ is undefined.

A related notion is the notion of a decidable language. A set $L\subseteq \{0,1\}^\ast$, called a \emph{language}, is said to be \emph{decided} by Turing machine $M$ if there exists a computable function $f\colon\, \Sigma^k\to \Sigma$ satisfying $f(x)=1$ if $x\in L$ and $f(x)=0$ if $x\notin L$ for all $x \in \{0,1\}^\ast$. A language is \emph{decidable} if it is decided by some Turing machine.

\subsection{Interactive proof}\label{SS:IntroIP}

The \emph{interactive proof} model of computation involves \emph{bounded resources}, by which we mean that computations are constrained to make use of only a finite number of steps, polynomial in the length $\abs{x}$ of the word $x \in \{0,1\}^\ast$ \citep{Goldwasser:89}. Again, the goal is to determine whether $x\in L$ or $x\notin L$ for a language $L$. A \emph{verifier} $V$ interrogates a \emph{prover} $P$ who claims to have a proof that $x\in L$. Both the prover and the verifier are assumed to be honest and queries are assumed to be independent. We are given two parameters, \emph{completeness} $c$ and \emph{soundness} $s$, with $c,s \in [0,1]$. For the classical setting of $\mathbf{IP}$ we set $c= 2/3$ and $s=1/3$. The verifier $V$ believes that $x\in L$ at time $t$ with probability $V_t$. This belief is updated each time $P$ responds to a query, beginning from $V_0= 0$. We say that the statement $x\in L$ is \emph{decided at time $t$} if:

\begin{equation}
\label{eq:ipdef2}
\begin{array}{ll}
\text{\emph{(Completeness)}} & x \in L \; \longrightarrow \; \Pr(V_t= 1) \geq c; \\[1ex]
\text{\emph{(Soundness)}} & x \notin L \; \longrightarrow \; \Pr(V_t=1) \leq s.
\end{array}
\end{equation}

The class $\mathrm{IP}$ (Interactive Polynomial time) consists of those languages $L$ that are decidable in time $\chi$ polynomial in $\abs{x}$. A celebrated result in complexity theory states that $\mathrm{IP}$ equals $\mathrm{PSPACE}$, the class of problems solvable by a Turing machine in polynomial space \citep{Shamir:92}.

The class $\mathrm{IP}$ can be expanded to the class $\mathrm{MIP}$ in which the verifier has access to not one but many provers, which can be interrogated independently \citep{BenOr:88}. It has been shown that $\mathrm{MIP}$ equals the large class $\mathrm{NEXPTIME}$ \citep{Babai:91}.

\subsection{Probabilistically checkable proofs}\label{SS:IntroPCP}

The class $\mathrm{PCP}_{c,s}(r(\abs{x}),q(\abs{x}))$ is a restriction of $\mathrm{IP}$ in which the verifier is a polynomial-time Turing machine with access to $\mathcal{O}(r(\abs{x}))$ uniformly random bits and the ability to query only $\mathcal{O}(q(\abs{x}))$ bits of the proof, with completeness $c$ and soundness $s$ \citep{AroraSafra:98, MoshkovitzRaz:10}. The celebrated $\mathrm{PCP}$ Theorem states that $\mathrm{PCP}[\mathcal{O}(\log \abs{x}),\mathcal{O}(1)] = \mathrm{NP}$ \citep{Arora:98}. For $\mathrm{PCP}$, the prover is thought of as an oracle. We restrict to the case of $3$--bit $\mathrm{PCP}$ verifiers, \textit{i.e.} to those for which $q(\abs{x})=3$, and to those which are \emph{non-adaptive}, \textit{i.e.} for which verifier queries are independent of previous responses by the oracle.

The ideal $\mathrm{PCP}$ verifier would have $c=1$ with minimal soundness $s$. A simple and good $3$--bit $\mathrm{PCP}$ verifier with $s=\frac{3}{4}+\sigma\approx 0.75$ for arbitrarily small $\sigma>0$ was designed by H{\aa}stad \citep{Has:97}. The current record is $s= \frac{20}{27}+\sigma\approx 0.741$ for an arbitrarily small $\sigma>0$ \citep{KhotSaket:06}. It is conjectures that the lowest possible value of $s$ is $\frac{5}{8}=0.625$ \citep{Zwick:98}.

\section{Tangle machines}\label{S:TM}

In this section we define tangle machines, first ignoring colours (Section~\ref{SS:NoCol}) and then, after defining the sets we colour with (Section~\ref{SS:QuagmaDefn}), then with colours (Section~\ref{SS:TMCol}), where we also define tangle machine computations.

\subsection{Without colours}\label{SS:NoCol}

We define here a \emph{tangle machine}, without colours, to be a diagram which occurs by concatenating (connecting endpoints) of a finite number of \emph{generators} of the form in Figure~\ref{F:Generators}. Thickened lines in interactions (Figure~\ref{F:Generators}\ref{F:Interaction}) are called \emph{agents}, and each thin lines are called \emph{patients}. Only agents are directed, and their is no compatibility condition between directions of different agents.

\begin{figure}
\centering
\begin{subfigure}{.28\textwidth}
  \centering
\includegraphics[width=1in]{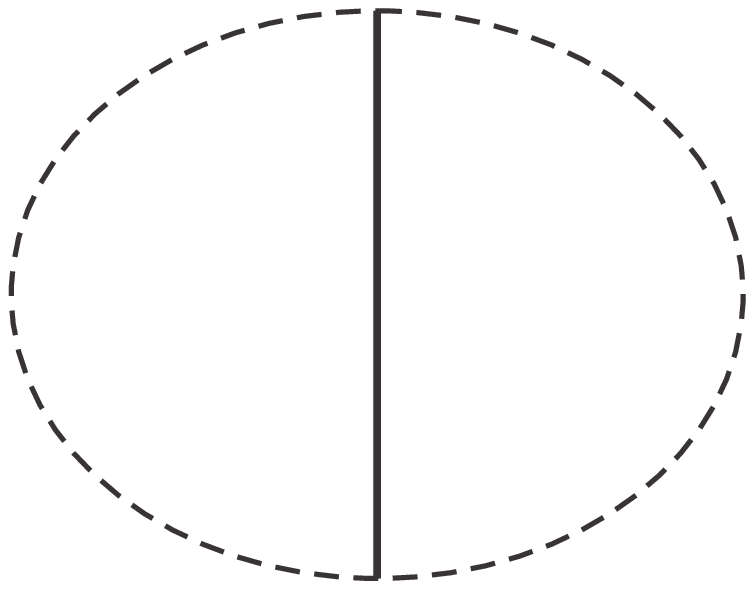}
\caption{Strut.}\label{F:Strut}
\end{subfigure}%
\begin{subfigure}{.28\textwidth}
  \centering
\includegraphics[width=1in]{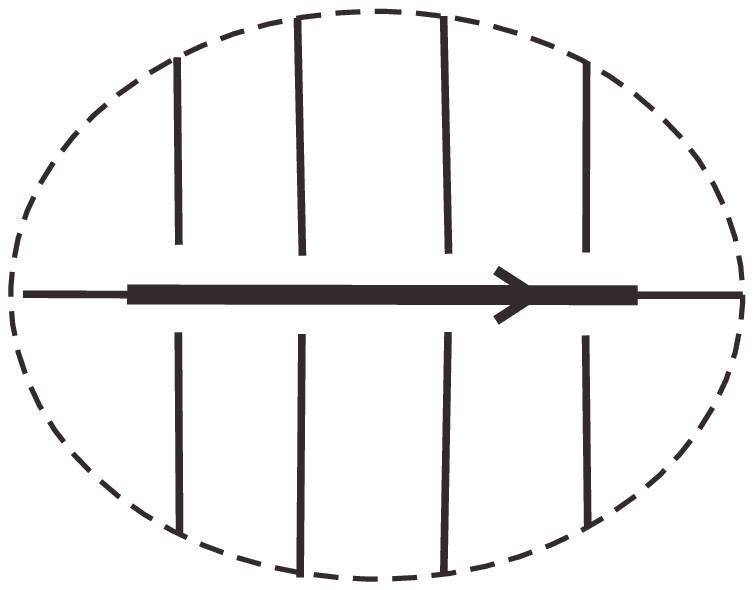}
\caption{Interaction.}\label{F:Interaction}
\end{subfigure}
\begin{subfigure}{.28\textwidth}
  \centering
\includegraphics[width=1in]{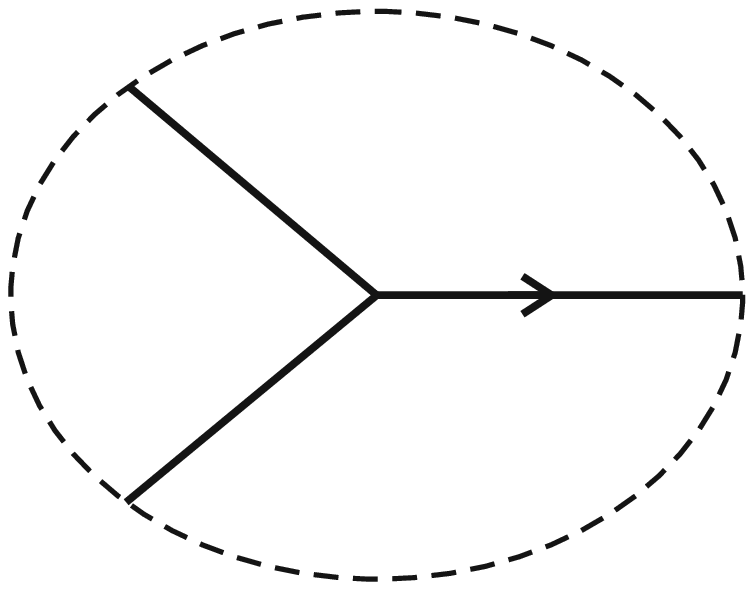}
\caption{Wye.}\label{F:Y}
\end{subfigure}
\caption{\label{F:Generators} Generators for tangle machines are struts and interactions. Generators for trivalent tangle machines are struts, interactions, and wyes.}
\end{figure}


Arcs in a tangle machine, ending as an under-arcs passing under an agent or at a wye or at a machine endpoint (thus ``continuing right through agents''), are called \emph{registers}. Later on, registers will contain colours.

\emph{Tangle machines} have struts and interactions as generators, whereas \emph{trivalent tangle machines} have struts, interactions, and wyes as generators. An example of a machine constructed by concatenation is presented in Figure~\ref{F:concat}. As in the theories of virtual knots and of w-knotted objects, concatenation lines may intersect \citep{BarNatanDancso:13,Kauffman:99}. Also, as in the theory of \emph{disoriented~tangles}, no compatibility condition is imposed for directions of concatenated agents \citep{ClarkMorrisonWalker:09}.

\begin{figure}
\centering
\begin{minipage}{0.75\linewidth}
\includegraphics[width=\linewidth]{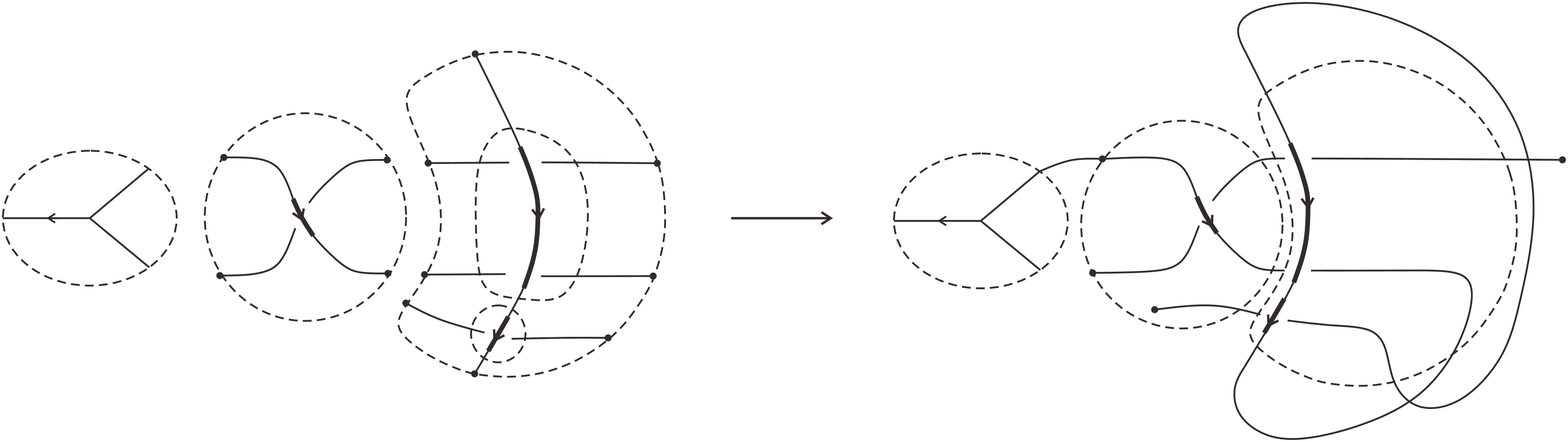}
\end{minipage}
\caption{\label{F:concat}Concatenation of tangle machines.}
\end{figure}

\subsection{Colours}\label{SS:QuagmaDefn}

Let $Q$ be a set equipped with a family $B$ of binary operations which satisfy the following three axioms:

\begin{description}
\item[Idempotence] $x\trr x=x$ for all $x\in Q$ and for all $\trr\in B$.
\item[Reversibility] The map $\trr y\colon\, Q\to Q$, which maps each colour $x\in Q$ to a corresponding colour $x\trr y\in Q$, is a bijection for all $(y,\trr)\in (Q,B)$. In particular, if $x\trr y = z\trr y$ for some $x,y,z\in Q$ and for some $\trr\in B$, then $x=z$.
\item[Distributivity] For all $x,y,z\in Q$ and for all $\trr \in B$:
\begin{equation}\label{E:Distributivity} (x\trr y)\trr z= (x\trr z)\trr (y\trr z)\enspace .\end{equation}
\end{description}

\begin{Remark}
Reversibility may be weakened by requiring only that $\trr y$ be an injection for all $y$, but not necessarily a surjection. This is indeed the case for the machines that we describe in the context of interactive proofs.
\end{Remark}

The pair $(Q,B)$ is called an \emph{quagma}. It is called an \emph{quandle} if the operations in $B$ distribute over one another, in the sense that:

\begin{equation}\label{E:Distributivity2} (x\trr y)\brr z= (x\brr z)\trr (y\brr z)\enspace ,\end{equation}

\noindent for all $\trr,\brr\in B$. This definition is a variant of definitions found in \citep{Buliga:11a,Przytycki:11,Ishii:13}. If we weaken reversibility to require only that $\trr y$ be an injection for all $y\in Q$ and for all $\trr \in B$, the resulting structure is called a \emph{quandloid} \citep{CarmiMoskovich:14}. Quagmas, quandloids, and quandles serve as the content of registers in tangle machines. 

\begin{Example}[Conjugation quandle]\label{E:QuantumGate}
Colours might be elements of a group $\Gamma$, and the operation might be conjugation:
\begin{equation}
x\brr y = y^{-1}xy\enspace .
\end{equation}
The pair $(\Gamma,\set{\brr})$ is called a \emph{conjugation quandle}. Such quandles feature in knot theory, \textit{e.g.} \citep{Joyce:82}.
\end{Example}

\begin{Example}[Linear quandle]\label{E:LinearQuandle}
Colours might real numbers and the operations might be convex combinations:
\begin{equation}
x\trr_s y = (1-s)x + sy \qquad s\in \mathds{R}\setminus\set{1}\enspace.
\end{equation}
The pair $\left(Q,\set{\trr_s}_{s\in \mathds{R}}\right)$ is called a \emph{linear quandle}.
\end{Example}

\subsection{With colours}\label{SS:TMCol}

A \emph{colouring} of a tangle machine $M$ is an assignment $\varrho$ of a binary operation in $B$ to each agent and an assigment $\rho$ of an element of $Q$ to each register, such that, at each interaction, the colour $z\in Q$ of the patient to the right of the agent (according to the right-hand rule) equals the colour of its corresponding patient to the left of the agent $x\in Q$ right-acted on by the colour of the agent $y\in Q$ via the operation of the agent $\trr\in B$:
    \begin{equation}
\begin{minipage}{80pt}\psfrag{a}[r]{$_x$}\psfrag{b}[c]{$_y$}\psfrag{c}[l]{$_{x\,\trr\, y}$}\includegraphics[width=80pt]{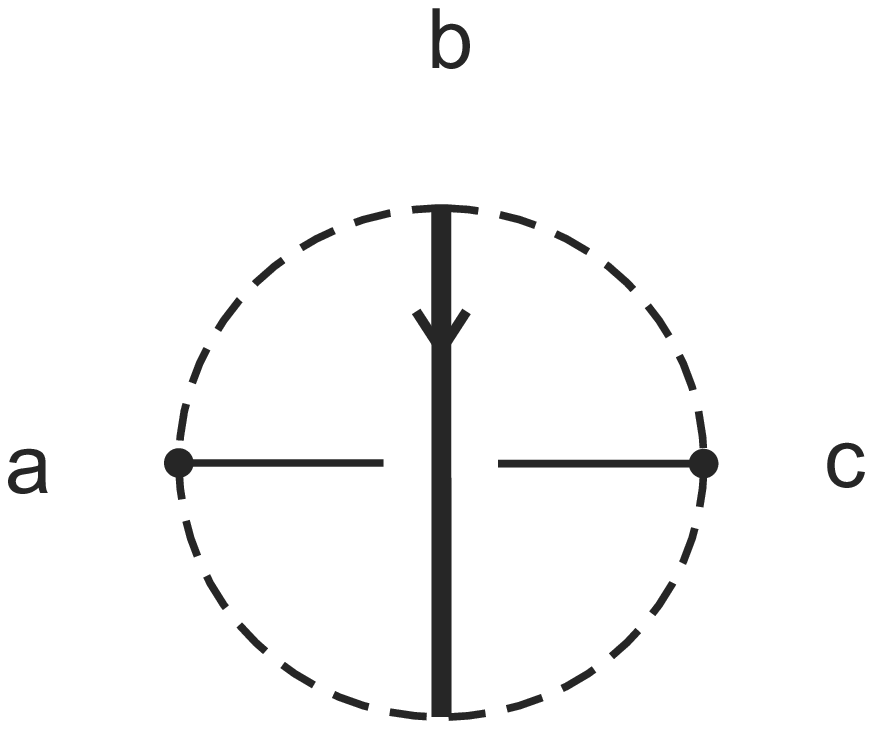}\end{minipage}
\end{equation}
Thus, an interaction `realizes' the action of the quagma or of the quandle.

To colour a trivalent tangle machines, $Q$ has to come equipped with a complete ordering. We include also an assigment of either $\max$ or $\min$ to each wye, so that the exiting register of the wye is the maximum of the two inputs if the wye is labeled $\max$ and is their minimum otherwise. We graphically denote a $\min$ label with a white circle at the branch-point of the wye, and a $\max$ label by a black circle. See Figure~\ref{F:MaxMinGates}.

 \begin{figure}
\centering
\begin{subfigure}{.38\textwidth}
  \centering
\psfrag{a}[c]{\small $x$}
\psfrag{b}[c]{\small $y$}
\psfrag{c}[l]{\small $\max(x,y)$}
\includegraphics[width=1.3in]{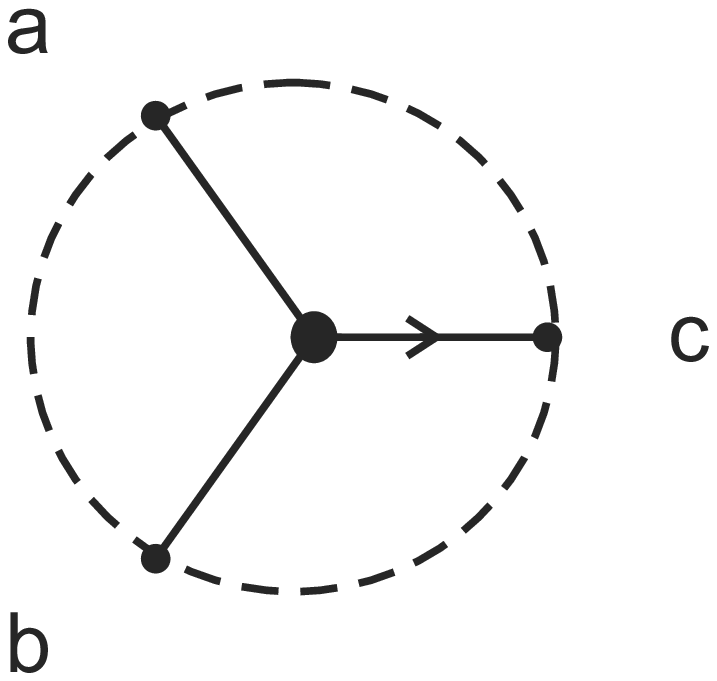}
\caption{}
\end{subfigure}%
\begin{subfigure}{.38\textwidth}
  \centering
\psfrag{a}[c]{\small $x$}
\psfrag{b}[c]{\small $y$}
\psfrag{c}[l]{\small $\min(x,y)$}
\includegraphics[width=1.3in]{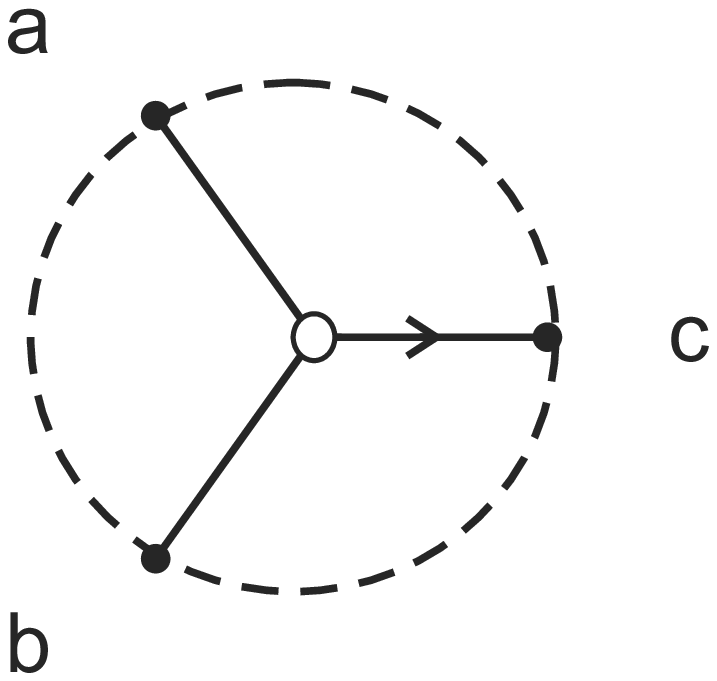}
\caption{}
\end{subfigure}
\caption{\label{F:MaxMinGates} Colouring wyes coloured by $\max$ and $\min$.}
\end{figure}

\begin{Remark}
The notion of tangle machine given above is more general than in \citep{CarmiMoskovich:14,CarmiMoskovich:15}, in which all tangle machines are quandle-coloured and without wyes.
\end{Remark}

Finally, we come to the notion of a tangle machine computation.

\begin{Definition}\label{D:TMComputation}
A \emph{computation} of a (trivalent) tangle machine $M$ is:
\begin{enumerate}
\item A choice of a set $S_\textit{in}$ of \emph{input registers} in $M$.
\item A choice of a set $S_\textit{out}$ of \emph{output registers} in $M$ with $S_\textit{in}\cap S_\textit{out}=\emptyset$.
\item A colouring $\varrho$ of all agents in $M$ (and an assignment of either $\max$ or $\min$ to each wye).
\item A colouring $\rho_\textit{in}$ of all registers in $S_\textit{in}$.
\item A unique (oracle) determination of a colouring $\rho_\textit{out}$ of all registers in $S_\textit{out}$. If $\rho_\textit{in}$ does not uniquely determine $\rho_\textit{out}$, the computation cannot take place.
\end{enumerate}
\end{Definition}

\section{Tangle machines are Turing complete}\label{S:TuringCompleteness}

Our goal in this section is to realize the universal set of gates, $\mathrm{NOT}(\neg)$ and $\mathrm{AND}(\wedge)\}$, as well as a \emph{multiplexer} which duplicates the content of a register, using tangle machines. This realizes the boolean circuit model, thus showing that tangle machines are Turing complete.

\begin{Remark}\label{R:superturing}
A tangle machine computation, which is based on an oracle which tells us the colours of output registers given colours of input registers, can carry out super-Turing computations. Consider the conjugation quandle of a group $\Gamma$ with unsolvable conjugacy problem, and colour agents $\mathrm{In}_1$ and $\mathrm{In}_2$ by elements of $\Gamma$. The colour of $\mathrm{Out}$ might not be Turing computable, but the tangle machine computation computes it.

\begin{equation}
\begin{minipage}{80pt}\psfrag{a}[r]{$_{\mathrm{In}_1}$}\psfrag{b}[c]{$_{\mathrm{Out}}$}\psfrag{c}[l]{$_{\mathrm{In}_2}$}\includegraphics[width=80pt]{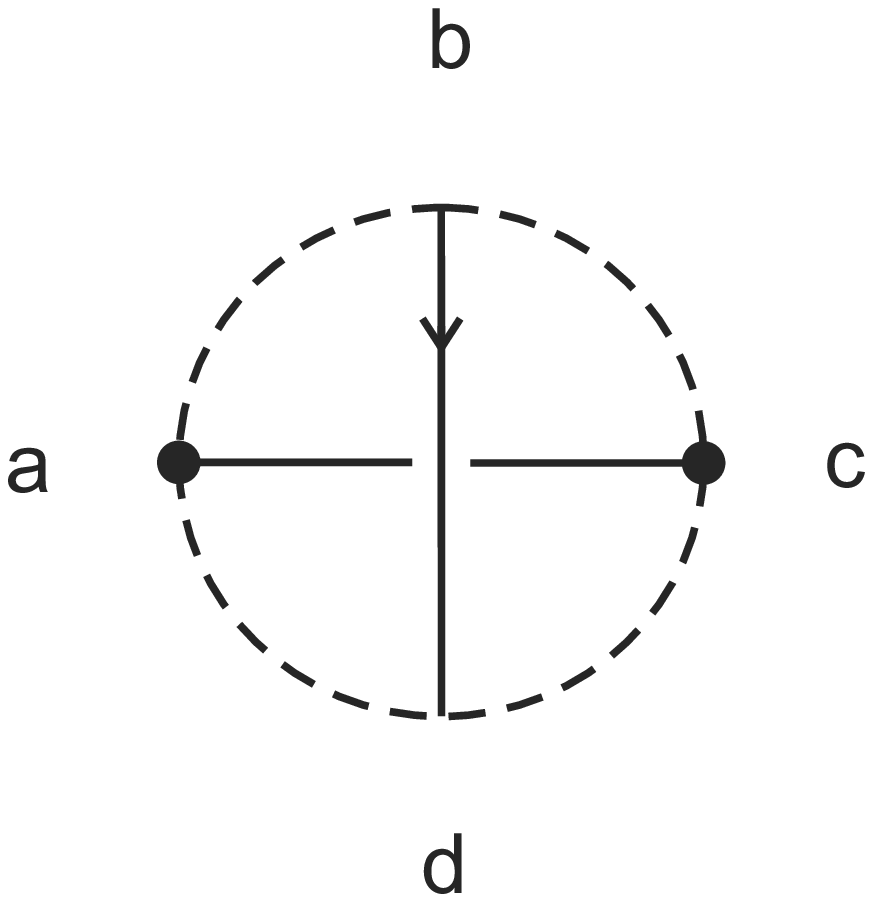}\end{minipage}
\end{equation}
\end{Remark}

\subsection{Quagma approach}\label{SS:Quagma}

Choose the set of colours to be $Q\ass \mathds{Q}^{2 \times 2}$, equipped with a set $B$ of binary operations whose elements are the following:

\begin{subequations}
\begin{equation}
X\blacktriangleright Y = \left\{
               \begin{array}{ll}
                 Y^{-1}XY, & \hbox{if $\det(Y)\neq 0$;} \\
                 X, & \hbox{otherwise.}
               \end{array}
             \right.
\end{equation}
\begin{equation}
X\trr_s Y = (1-s)X+sY, \; \; \text{for $s=\frac{1}{2}$ or $s=2$.}
\end{equation}
\end{subequations}
The structure $(Q,\{\trr_{0.5},\trr_2,\blacktriangleright\})$, henceforth referred to simply as $Q$, is a quagma.

To realize boolean logic, let $A_0\ass \left(\begin{matrix}0 & 1 \\ 1 & 0\end{matrix}\right)$ and $A_1\ass \left(\begin{matrix}1 & \phantom{-}0 \\ 0 & -1\end{matrix}\right)$ stand in for the digits $0$ and $1$ correspondingly. Incidentally, these happen to coincide with Pauli spin matrices of quantum mechanics. 


A $\mathrm{NOT}$ gate is described by a single interaction

\begin{equation}
\begin{minipage}{80pt}
\psfrag{a}[r]{$_{X}$}
\psfrag{b}[c]{$_{\mathrm{A_0 + A_1}}$}
\psfrag{c}[l]{$_{\neg X}$}\includegraphics[width=80pt]{crosm}\end{minipage}
\qquad \qquad \qquad
\small
\begin{tabular}{l*{2}{c}r}
$X$              & $\neg X$ \\
\hline
$A_0$       & $A_1$ \\
$A_1$       & $A_0$
\end{tabular}
\end{equation}
where the respective truth-table is shown to the right of the diagram.
Explicitly,
\begin{equation}
\neg X = X \brr (A_0 + A_1) = \left(\begin{matrix}1 & \phantom{-}1 \\ 1 & -1\end{matrix}\right)^{-1} X \left(\begin{matrix}1 & \phantom{-}1 \\ 1 & -1\end{matrix}\right) = \frac{1}{2} \left(\begin{matrix}1 & \phantom{-}1 \\ 1 & -1\end{matrix}\right) X \left(\begin{matrix}1 & \phantom{-}1 \\ 1 & -1\end{matrix}\right)
\end{equation}

\noindent The input is the register labeled $X$, the output is the register labeled $\neg X$, and the remaining register is always coloured by the constant value $A_0+A_1$.

Realizing an $\mathrm{AND}$ gate can be split into several successive computations. Let $X$ and $Y$ be the inputs to the gate, and write $\mathbf{0}$ for $\left(\begin{matrix}0 & 0 \\ 0 & 0\end{matrix}\right)$. The following instructions end up with the desired operation $X \wedge Y$.
\begin{enumerate}
\item $\beta_1 = (A_0 + A_1) \brr (X \trr_{_{0.5}} Y) = (X + Y)^{-1} (A_0 + A_1) (X + Y)$

\item $\beta_2 = \beta_1 \trr_{_{0.5}} (A_0 + A_1) = \frac{1}{2} \beta_1 + \frac{1}{2} (A_0+A_1)$

\item $\beta_3 = (\beta_2 \trr A_0) \trr_{_{0.5}} \beta_2 = \frac{1}{2} A_0 \beta_2 A_0 + \frac{1}{2} \beta_2$

\item $X \wedge Y = A_1 \brr (\beta_3 \trr_{_{0.5}} A_1) = (\beta_3 + A_1)^{-1} A_1 (\beta_3 + A_1)$
\end{enumerate}
A tangle machine which realizes these steps is given below together with the respective truth-table.
\hspace{5pt}
\begin{equation}
\psfrag{1}[l]{\small $\trr_{_{0.5}}$}
\psfrag{2}[c]{\small $_\brr$}
\psfrag{3}[c]{\small $\phantom{a}\trr_{_2}$}
\psfrag{0}[c]{\small $\mathbf{0}$}
\psfrag{a}[c]{\small $X$}
\psfrag{b}[c]{\small $Y$}
\psfrag{g}[c]{\small $\beta_1$}
\psfrag{h}[c]{\small $\beta_2$}
\psfrag{f}[c]{\small $\beta_3$}
\psfrag{c}[c]{\small $X \wedge Y$}
\psfrag{s}[c]{\small $A_0+A_1$}
\psfrag{t}[c]{\small $A_0$}
\psfrag{u}[c]{\small $A_1$}
\includegraphics[width=0.7\textwidth]{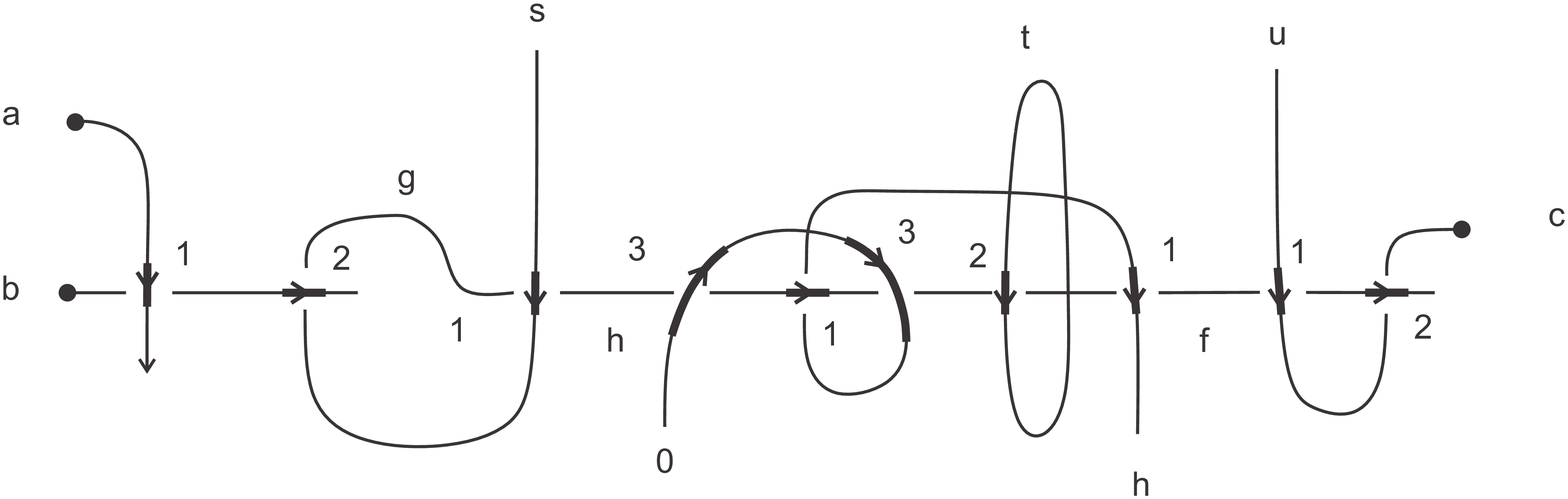}
\end{equation}
\hspace{5pt}
\begin{equation}
\small
\begin{tabular}{l*{6}{c}r}
$X$              & $Y$ & $\beta_1$ & $\beta_2$ & $\beta_3$ & $X \wedge Y$ \\
\hline
$A_0$       & $A_0$ & $A_0 - A_1$ & $A_0$ & $A_0$ & $A_0$ \\
$A_0$       & $A_1$ & $A_0 + A_1$ & $A_0 + A_1$ & $A_0$ & $A_0$ \\
$A_1$       & $A_0$ & $A_0 + A_1$ & $A_0 + A_1$ & $A_0$ & $A_0$ \\
$A_1$       & $A_1$ & $A_1 - A_0$ & $A_1$ & $\mathbf{0}$ & $A_1$ \\
\end{tabular}
\end{equation}

In addition to the universal set of gates we also need to be able to duplicate the content of a register. The conventional
boolean circuit model includes junction points along wires. The tangle machine analogue for such a junction is a \emph{multiplexer},
that is a machine whose output colours are duplicates of the colour in one of its inputs. A multiplexer takes an input of the form
$\{X,\underbrace{0, \ldots,0}_{n-1 \; \text{times}}\}$, and outputs $\{\underbrace{X, \ldots, X}_{n \; \text{times}}\}$. The operation of a multiplexer is
captured by the machine in Figure~\ref{F:Multiplexer}.

\begin{figure}
\centering
\psfrag{1}[l]{\small $\trr_{_{0.5}}$}
\psfrag{2}[c]{\small $\,\trr_{_{2}}$}
\psfrag{a}[c]{\small $\mathbf{0}$}
\psfrag{b}[c]{\small $X$}
\includegraphics[width=0.3\textwidth]{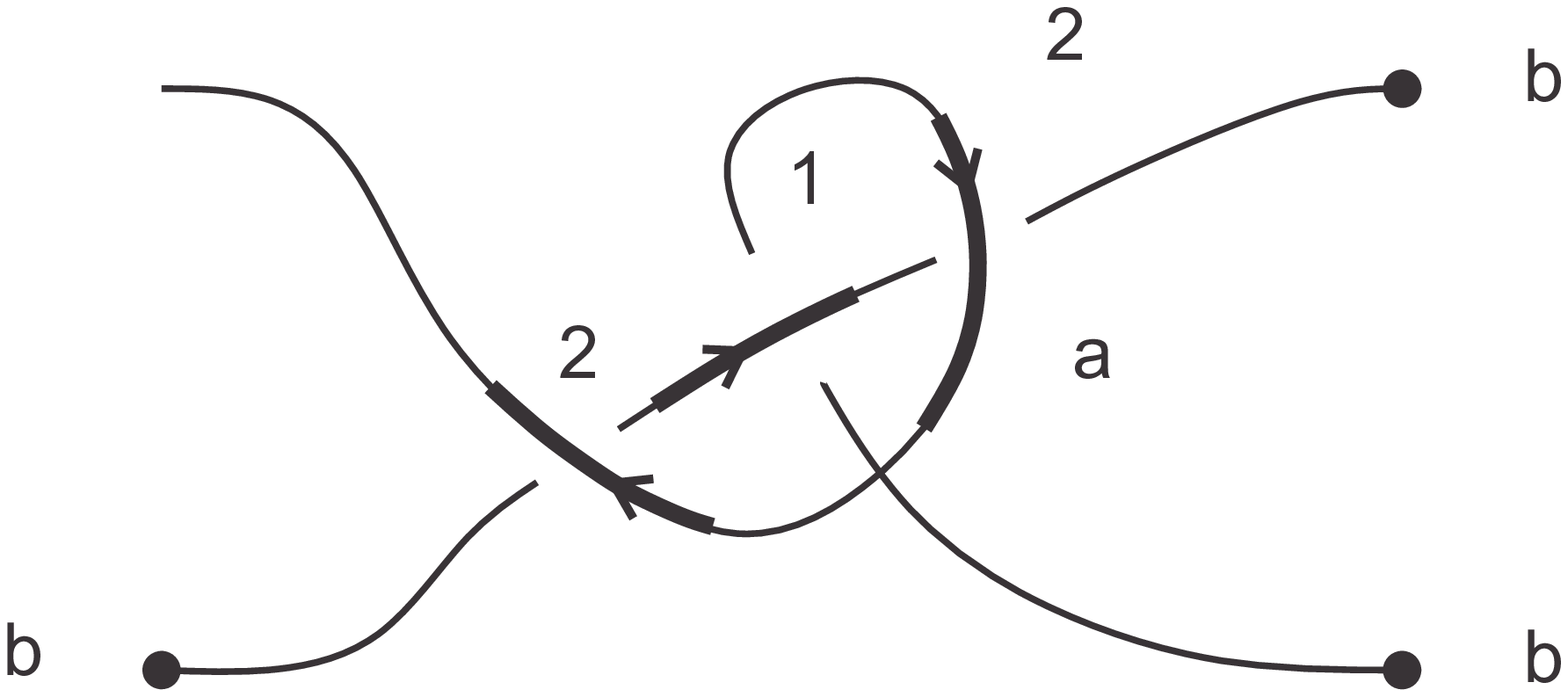}
\caption{\label{F:Multiplexer}A multiplexer.}
\end{figure}

\begin{Remark}\label{R:NoCloning}
Two fundamental properties of quantum information are \emph{no-cloning} and \emph{no-deleting}. The uniform version of no-cloning states that there does not exist a unitary operator $C$  for which $C (A \otimes e) C^\dagger = A \otimes A$ for all states $A$, where $e$ denotes the identity operator. The uniform version of no-deleting states that there does not exist a unitary operator $U$ for which $U (A\otimes A) U^\dagger = A\otimes e$ for all states $A$. Both of these statements are captured by the quagma axioms. Consider the quagma $(Q,\set{\trr_{0.5},\brr})$ where $Q$ is the set of invertible operators and $\brr$ is any binary operation which distributes over $\trr_{0.5}$, \textit{e.g.} conjugation. If $U$ were a universal cloning operator with respect to $\brr$, \textit{i.e.} if $(A\otimes e)\brr U= A\otimes A$ for all $A$, then for any state $A$ both machines in Figure~\ref{F:NoCloning} would carry out the same computation and in particular $\mathrm{Out}_1=\mathrm{Out}_1^\prime$. But then $(A \trr_{0.5} B) \otimes (A \trr_{0.5} B) = \mathrm{Out}_1 = \mathrm{Out}_1^\prime = (A \otimes A) \trr_{0.5} (B \otimes B)$, which is false in general. Thus \emph{universal cloning violates distributivity}. No-deleting follows from no-cloning because if $U$ were a universal deleting operator with respect to $\brr$ then $U$ would also be a universal cloning operator with respect its inverse operation $\blacktriangleleft$ which exists thanks to reversibility. This would violate distributivity, because $\blacktriangleleft$ also distributes over $\trr_{0.5}$ as can be seen by applying $\blacktriangleleft Z$ to both sides of the equation:
\[
X\trr_{0.5} Y = \left(\rule{0pt}{12pt}(X\blacktriangleleft Z)\trr_{0.5}(Y\blacktriangleleft Z)\right)\brr Z\enspace .
\]
We parenthetically note that no-cloning and no-deleting are also captured by a different diagrammatic calculus, that of categorical quantum mechanics \citep{Abramsky:10}.
\end{Remark}

\begin{figure}
\centering
\psfrag{0}[r]{\small $A \otimes e$}
\psfrag{1}[r]{\small $B \otimes e$}
\psfrag{2}[c]{\small $C$}
\psfrag{a}[l]{\small $\mathrm{Out}_1$}
\psfrag{b}[c]{\small $ $}
\psfrag{c}[c]{\small $ $}
\psfrag{t}[c]{\small $ $}
\psfrag{s}[c]{\small $ $}
\psfrag{e}[l]{\small $\mathrm{Out}_1^\prime$}
\psfrag{f}[c]{\small $ $}
\psfrag{g}[c]{\small $ $}
\psfrag{x}[c]{$\trr_{0.5}$}
\psfrag{y}[c]{$\brr$}
\includegraphics[width=0.35\textwidth]{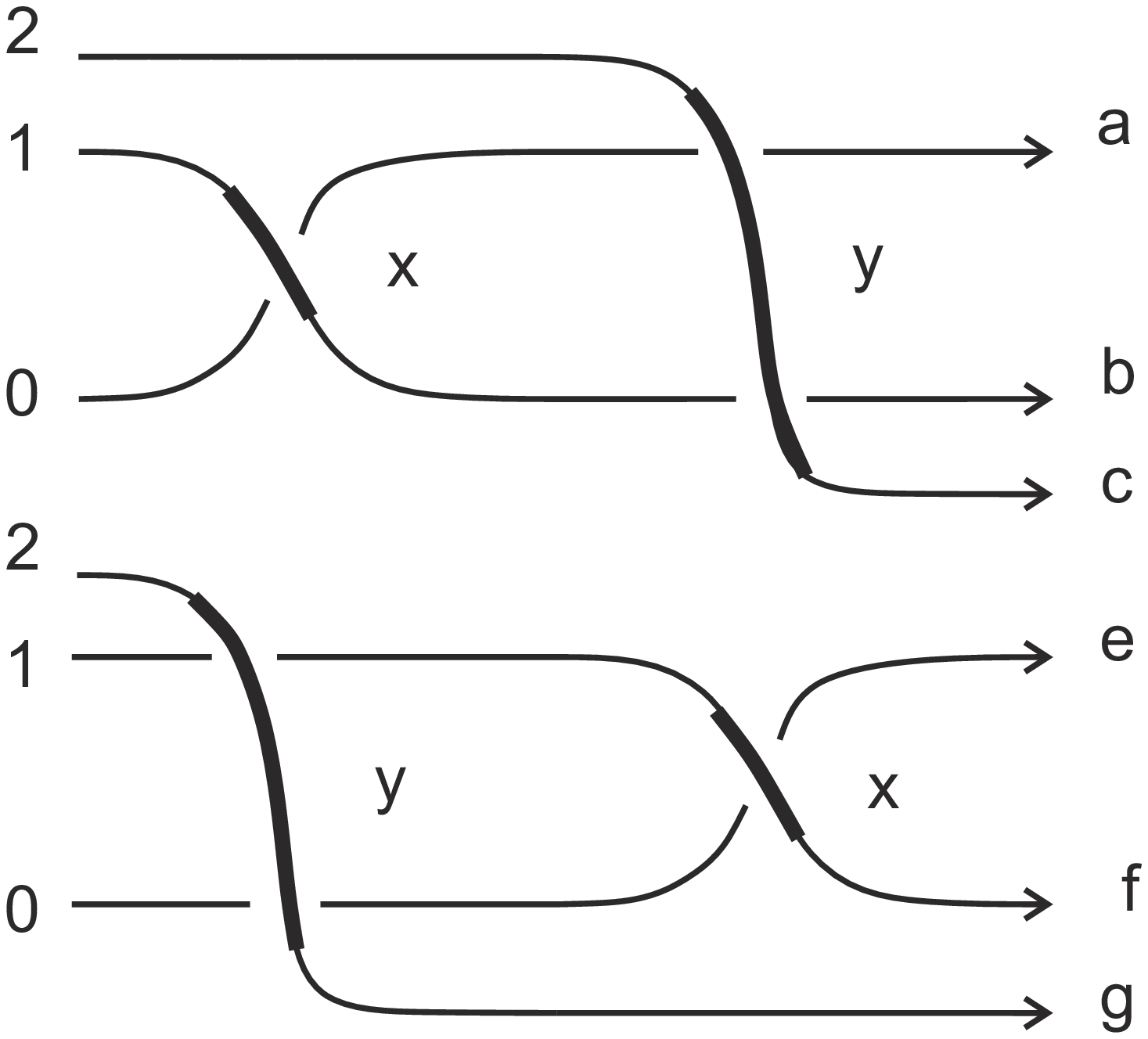}
\caption{\label{F:NoCloning}Universal cloning violates distributivity.}
\end{figure}

\subsection{Wye approach}\label{SS:Wye}

Realizing a universal set of logic gates is easier if we allow wyes, and may be realized with a quandle colouring.

We colour our machines by elements of the quandle $Q\ass \{0,1,2\}$ subject to the quandle operation $x\trr y= 2y-x\bmod 3$ (this is a \emph{Fox $3$--colouring}). We totally order $Q$ by $0<1<2$. Colour-code $0$ as red, $1$ as blue, and $2$ as green. For this particular quandle the direction of the agent does not matter because the operation $\trr$ is its own inverse. Let $0$ stand in for the digit zero and $1$ stand in for the digit one. A universal set of logic gates and a multiplexer can be obtained as in Figure~\ref{F:GatesC}, where an incoming arrow represents input and an outgoing arrow represents output.

 \begin{figure}
\centering
\begin{subfigure}{.28\textwidth}
  \centering
\psfrag{a}[c]{\small $X$}
\psfrag{2}[cb]{\small $2$}
\psfrag{c}[c]{\small $\neg X$}
\includegraphics[width=1in]{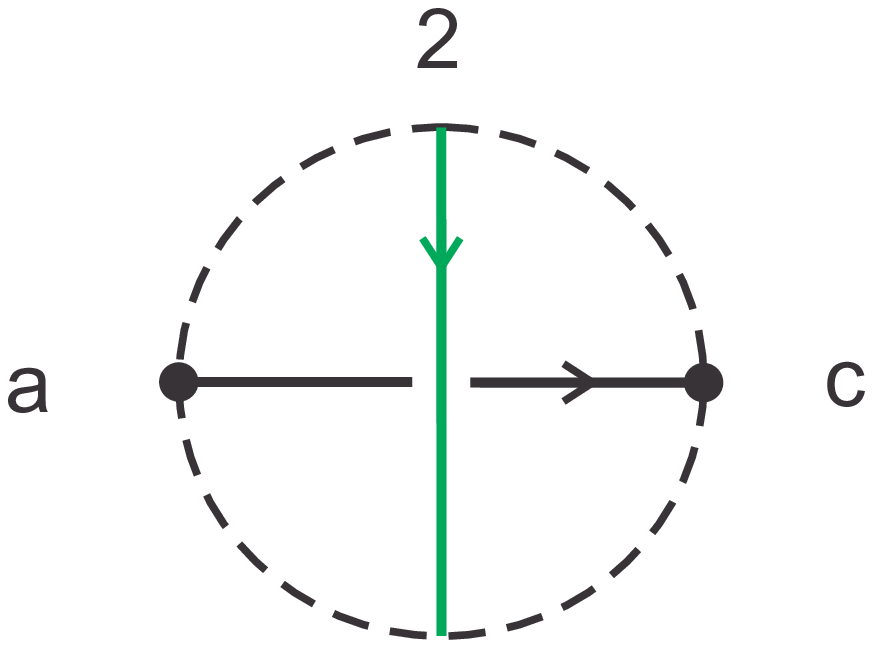}
\caption{$\mathrm{NOT}$ gate.}\label{F:NotGate}
\end{subfigure}%
\begin{subfigure}{.28\textwidth}
  \centering
\psfrag{a}[c]{\small $X$}
\psfrag{b}[c]{\small $Y$}
\psfrag{c}[l]{\small $\min(X,Y)$}
\includegraphics[width=0.8in]{min_gate}
\caption{$\mathrm{AND}$ gate.}\label{F:AndGate}
\end{subfigure}
\begin{subfigure}{.38\textwidth}
  \centering
\psfrag{b}[c]{\small $X$}\psfrag{a}[c]{\,$0$}
\includegraphics[width=0.7\textwidth]{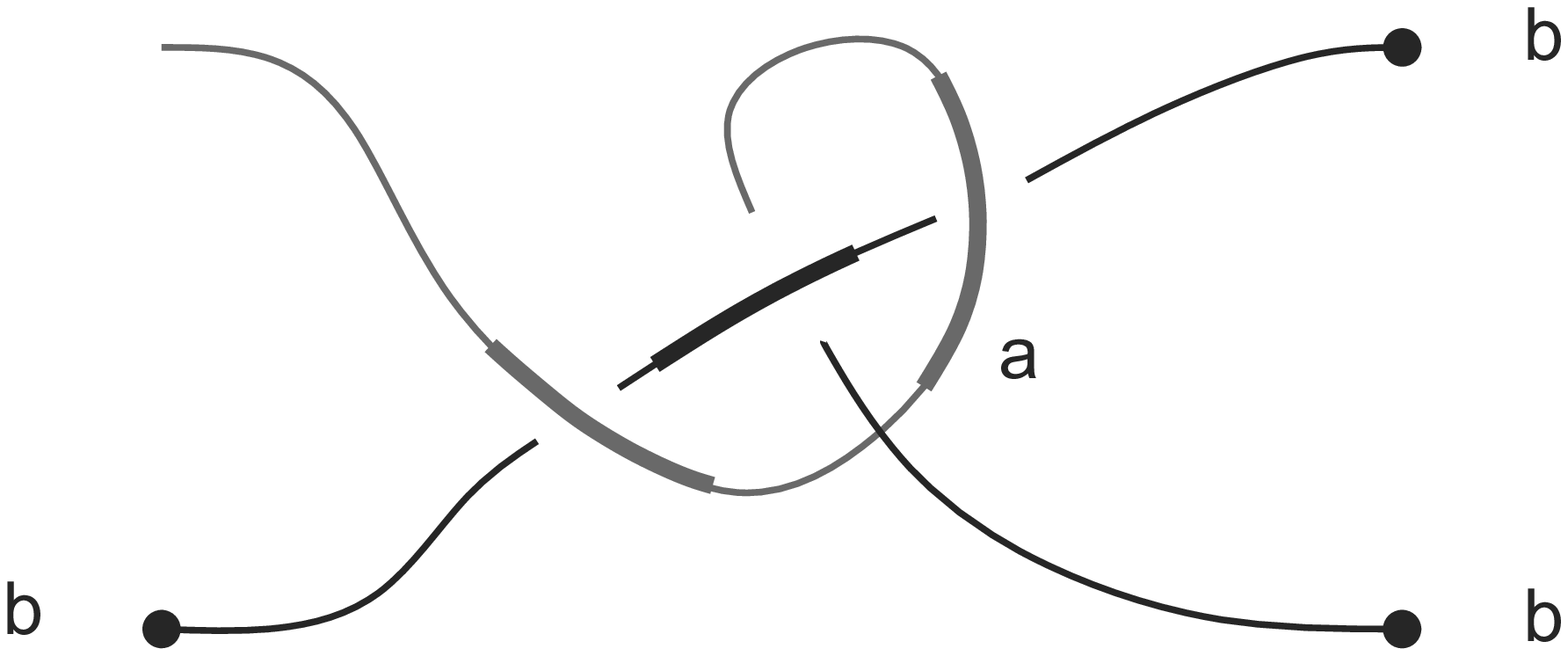}
\caption{Multiplexer.}\label{F:MultiplexerC}
\end{subfigure}
\caption{\label{F:GatesC} Logic gates for the $3$--colour approach. Inputs are the colours on the left, and outputs are colours on the right.}
\end{figure}


\subsection{Nonabelian simple group approach}\label{SS:Barrington}

The $\mathrm{AND}$ gate constructed in Section~\ref{SS:Quagma} was realized as a tangle machine coloured by a quagma which is not a quandle, and the $\mathrm{AND}$ gate of Section~\ref{SS:Wye} used a wye. In fact, it is possible to realize a universal set of logic gates with a conjugation quandle coloured tangle machine without using wyes.

The construction is based on Barrington's Theorem \citep{Krohn:66,Barrington:89}. Following \citep{Monchon:03}, Appendix A of \citep{Alagic:14} constructs a $132$ crossing $14$ strand braid coloured by the conjugation quandle of the finite simple group $A_5$ which realizes the Toffoli gate that is a universal logic gate \citep{Fredkin:82}. This braid is made a tangle machine by interpreting each crossing as an interaction. Three of its registers are inputs, three are outputs, and it contains a number of \emph{ancilla} which are what we called \emph{control registers}.

\section{Turing machine simulation}\label{S:TuringSimulation}

In this section, we show how to simulate a Turing machine using a tangle machine. 

\subsection{Turing tangle machine}

In this section we make use of the linear quandle whose underlying set of elements $Q$ is the set of the rational numbers and whose set of operations $B$ is:

\begin{equation}
x\trr_s y = (1-s)x + sy \qquad s\in \mathds{Q}\setminus\set{1}\enspace.
\end{equation}

There are several sub-machines that recur in the construction of a \emph{Turing tangle machine}, the tangle machine analog of a Turing machine. To simplify our diagrams we represent these sub-machines graphically.

\begin{description}

\item[Multiplexer] We indicate a multiplexer by splitting a strand.
\begin{equation}
\psfrag{b}[c]{\small $x$}
\includegraphics[width=0.15\textwidth]{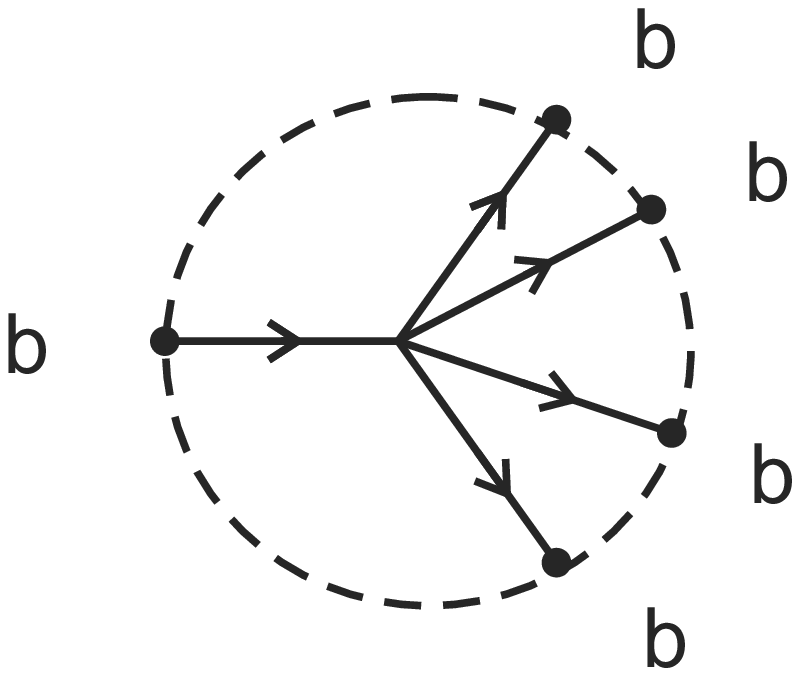}
\end{equation}
\noindent The multiplexer is realized by composing diagrams of the form in Figure~\ref{F:Multiplexer}.

\item[Negation] $\neg x \ass 1-x$.
\begin{equation}
\psfrag{a}[c]{\small $x$}
\psfrag{c}[c]{\small $\neg x$}
\psfrag{b}[c]{\small $0.5$}
\psfrag{d}[c]{\small $\trr_2$}
\psfrag{x}[c]{\large $\neg$}
\includegraphics[width=0.5\textwidth]{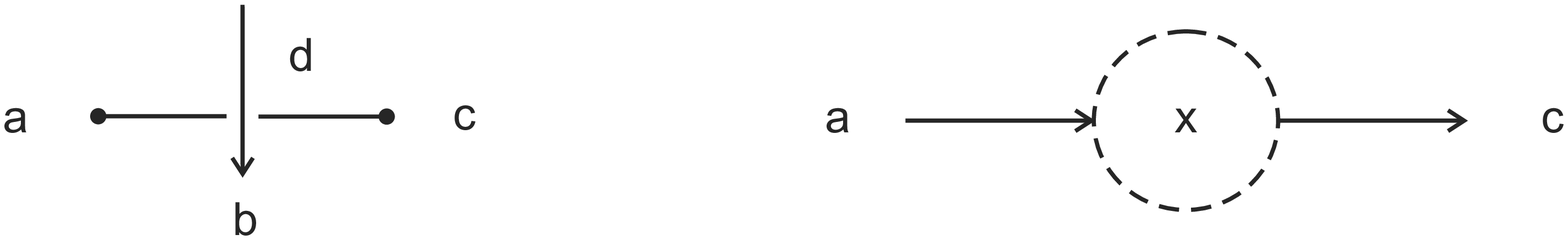}
\end{equation}

\item[Addition]
\begin{equation}
\psfrag{a}[c]{\small $x$}
\psfrag{c}[c]{\small $y$}
\psfrag{b}[c]{\small $x+y$}
\psfrag{d}[c]{\small $\trr_2$}
\psfrag{s}[c]{\small $\trr_{0.5}$}
\psfrag{0}[c]{\small $0$}
\psfrag{+}[c]{+}
\includegraphics[width=0.57\textwidth]{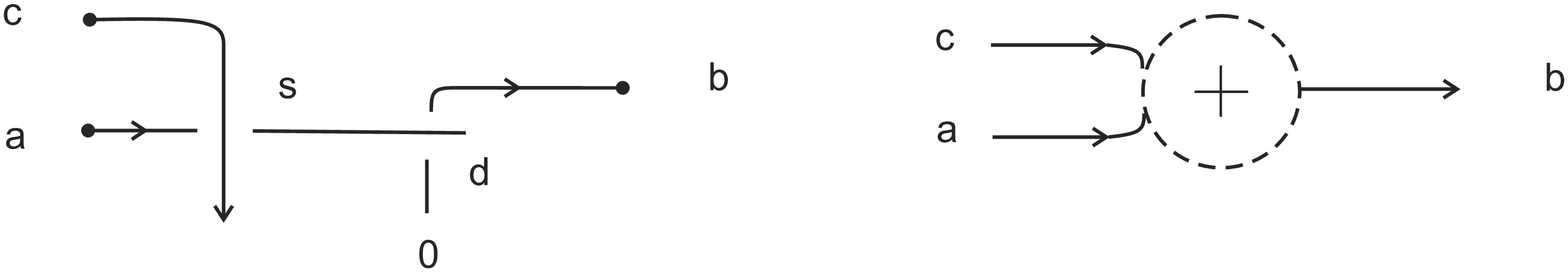}
\end{equation}

\item[Indicator] If $x\geq 1$ then$\imath(x) = 1$, if $0<x<1$ then $\imath(x)=x$, otherwise $\imath(x) = 0$.
\begin{equation}
\psfrag{a}[c]{\small $x$}
\psfrag{b}[c]{\small $\imath(x)$}
\psfrag{x}[c]{$\neg$}
\psfrag{y}[c]{$\; \imath$}
\psfrag{m}[c]{\small $\max$}
\psfrag{0}[c]{\small $0$}
\includegraphics[width=0.70\textwidth]{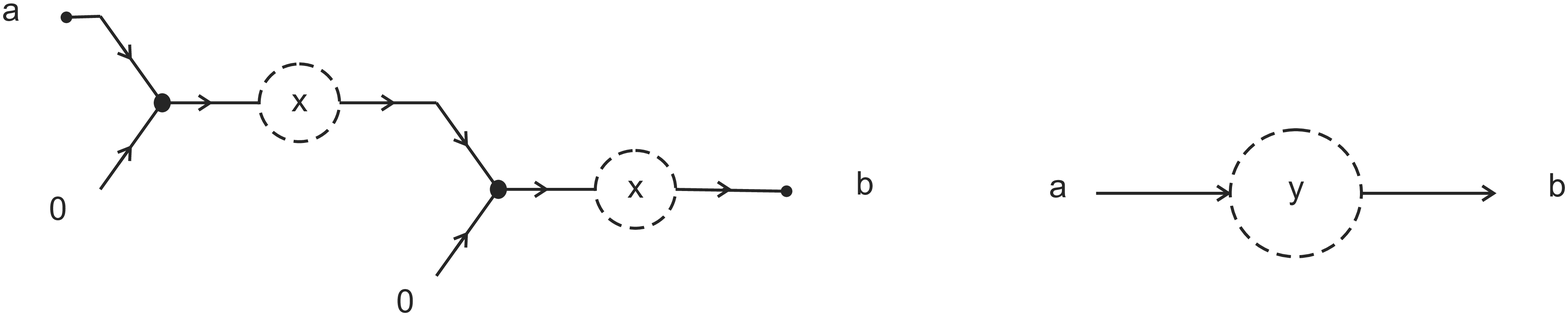}
\end{equation}

\item[Beta] A function $\beta(x)$ which satisfies $\beta(0) = -1$, $\beta(\frac{1}{2})=0$, and $\beta(1) = 1$.
\begin{equation}
\psfrag{a}[c]{\small $x$}
\psfrag{b}[c]{\small $\beta(x)$}
\psfrag{x}[c]{$\neg$}
\psfrag{y}[c]{\small $\; \beta$}
\psfrag{m}[c]{\small $\max$}
\psfrag{0}[c]{\small $0$}
\psfrag{s}[c]{\small $\trr_2$}
\psfrag{d}[c]{\small $\trr_{0.5}$}
\includegraphics[width=0.70\textwidth]{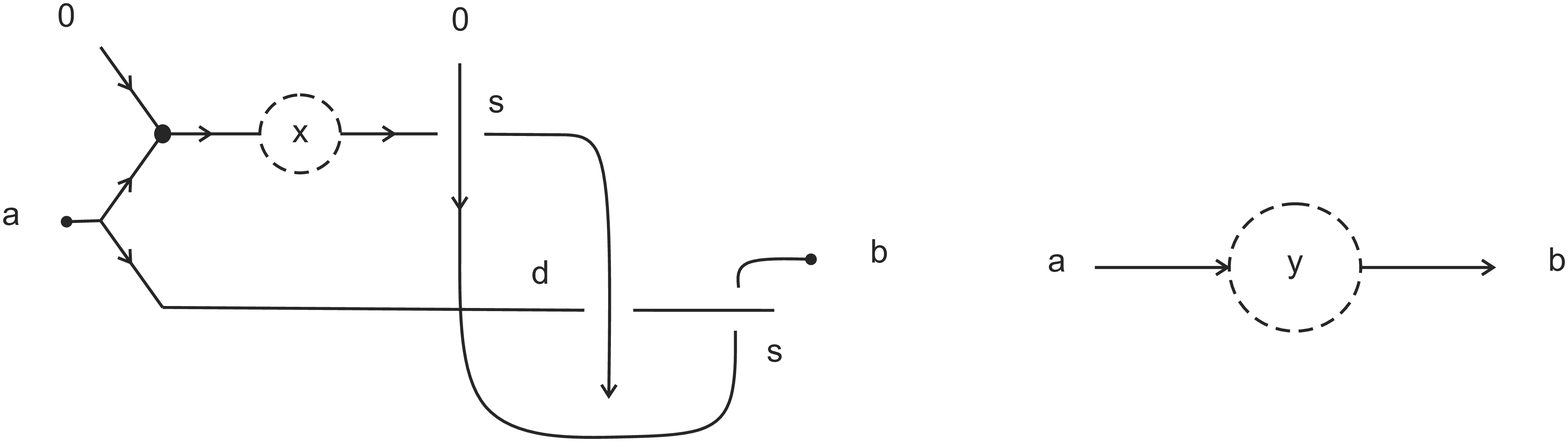}
\end{equation}

\item[Selector] Depending on the colour $c$ of a control strand, either $x$ or $y$ emerges as the output, $z$.
To be precise, $z=x$ if $c=0$ and $z=y$ if $c=1$.
\begin{equation}
\psfrag{a}[c]{\small $y$}
\psfrag{u}[c]{\small $\; \tau$}
\psfrag{b}[c]{\small $x$}
\psfrag{c}[c]{\small $c$}
\psfrag{z}[c]{\small $z$}
\psfrag{g}[c]{\small $s$}
\psfrag{x}[c]{$\neg$}
\psfrag{y}[c]{\small $\; \beta$}
\psfrag{m}[c]{\small $\max$}
\psfrag{0}[c]{\small $0$}
\psfrag{s}[c]{\small $\trr_s$}
\psfrag{d}[c]{\small $\trr_{0.5}$}
\includegraphics[width=0.85\textwidth]{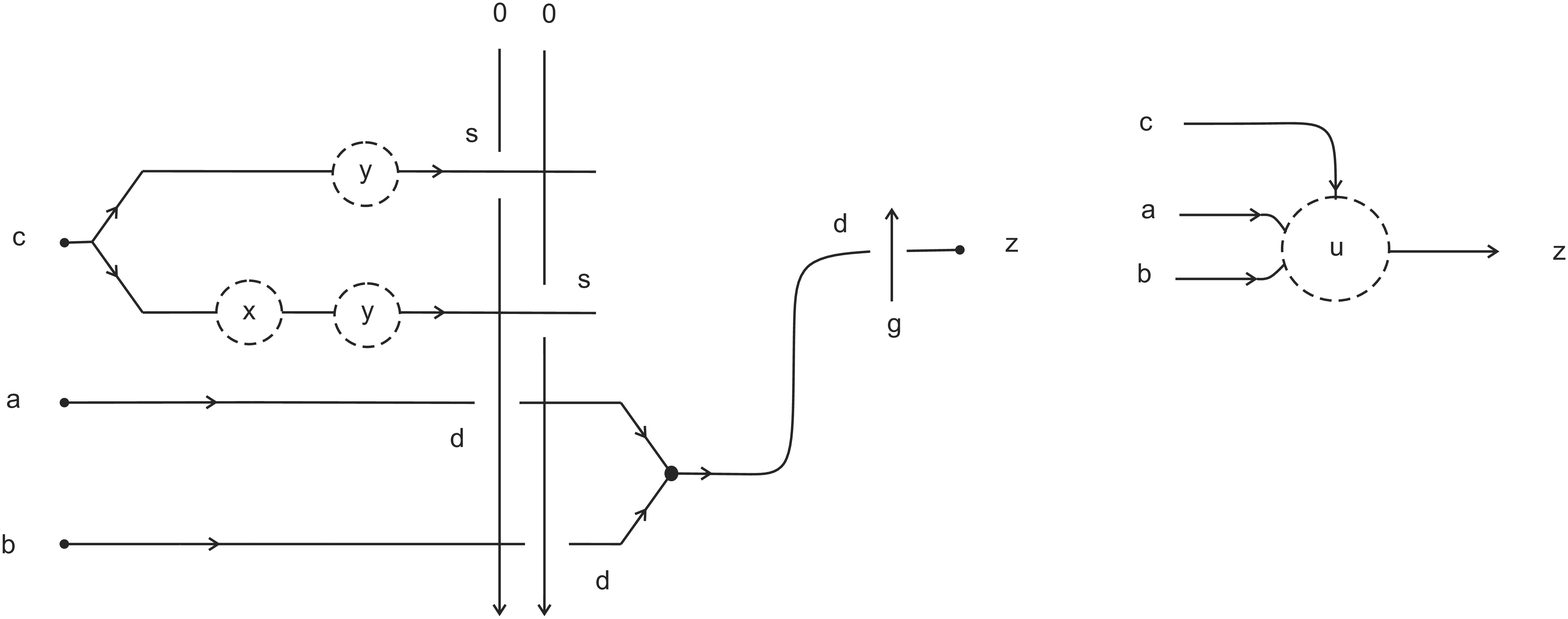}
\end{equation}
where $s$ is an arbitrary constant whose value is greater than $2$.

\item[Mask generating machine] This sub-machine is shown in Figure~\ref{fig:mask}. Its input registers are a register coloured by an integer $p\in Q$ called a \emph{pointer}, and a sequence of coloured registers together called a \emph{mask}. Its output registers are a register coloured $p$ and one register coloured $0$ for all other input coloured registers, except for a single register coloured $1$ in the $p$th position of the output.
  \end{description}

  \begin{figure}[htb]
  \centering
\psfrag{1}[c]{\small $1$}
\psfrag{2}[c]{\small $2$}
\psfrag{3}[c]{\small $3$}
\psfrag{p}[c]{\small $p$}
\psfrag{v}[c]{\small $\vdots$}
\psfrag{c}[c]{\small \emph{mask}}
\psfrag{k}[c]{\small $\; \gamma$}
\psfrag{y}[c]{$\neg$}
\psfrag{x}[c]{\small $\; \imath$}
\psfrag{m}[c]{\small $\min$}
\psfrag{0}[c]{\small $0$}
\psfrag{d}[c]{\small $\trr_2$}
\psfrag{s}[c]{\small $\trr_{0.5}$}
\psfrag{t}[c]{\small $\rrt_{0.5}$}
\includegraphics[width=0.75\textwidth]{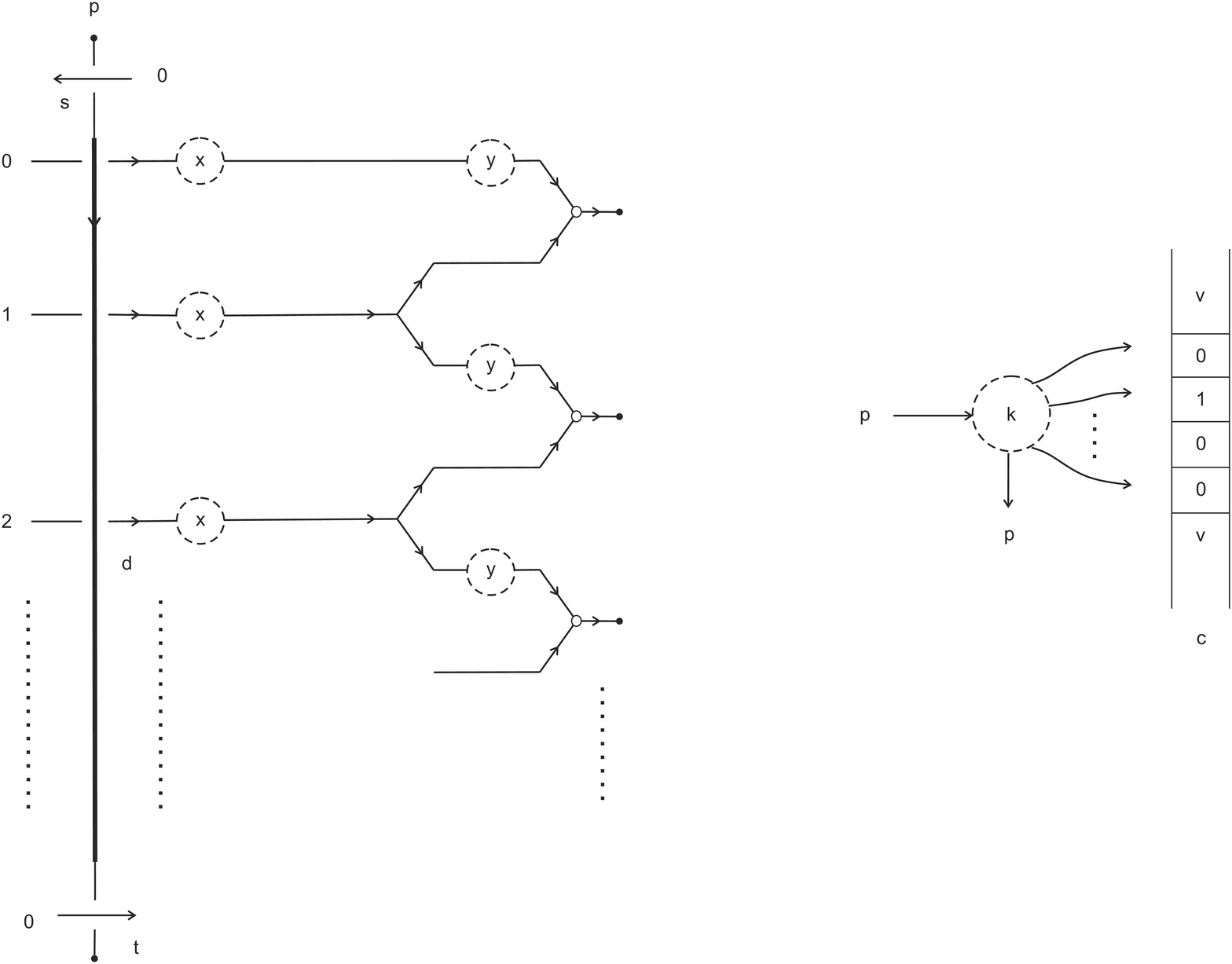}
\caption{\small A mask generating machine.}
\label{fig:mask}
\end{figure}

\subsection{Finite control}

The first step in realizing a Turing machine is to mimic the finite control unit, \textit{i.e.} to simulate the transition
function
\begin{equation}
\delta(q(k),u(k)) = (q(k+1), \; a, \; \epsilon)
\end{equation}
Given the \emph{current machine state} $q(k)$ and the \emph{symbol currently under the R/W head }$u(k)$, the \emph{transition function} determines the \emph{next machine state} $q(k+1)$ together with a pair of tape instructions: the \emph{symbol to be written} $a$, and an $\epsilon$ movement of the head to its next position along the tape. Without loss of generality we shall assume henceforth that the finite control states are all natural numbers, and in particular that $\mathcal{S} = \set{1, \ldots, n}$.

The basic building block of the finite control tangle machine is a \emph{hardwired transition}, that is a function $\delta_{i,j}: \mathcal{S} \times \Sigma \rightarrow \mathds{Z}^3$,
\begin{equation}
(q,u) \mapsto \left \{
\begin{array}{ll}
(\bar{q}+2, \; \bar{a}+2, \; \bar{\epsilon}+2), & \text{if $q=i$ and $u=j+1$;} \\
(-\bar{q}-2, \; -\bar{a}-2, \; -\bar{\epsilon}-2), & \text{otherwise.}
\end{array} \right .
\end{equation}
where $\bar{q}, \bar{a}, \bar{\epsilon}$ denote, respectively, the next assumed state and the tape instructions as specified in the definition of $\delta_{i,j}$. The parameters $i,j$ and the respective output of $\delta_{i,j}$, the triplet $(\bar{q},\bar{a},\bar{\epsilon})$, are hardwired into the tangle machine realization of $\delta_{i,j}$ through the quandle parameters. See Figure~\ref{fig:delta}.

\begin{figure}[htb]
\centering
\psfrag{q}[c]{\small $q$}
\psfrag{u}[c]{\small $u$}
\psfrag{k}[c]{\small $\; \gamma$}
\psfrag{y}[c]{\small $\; \beta$}
\psfrag{x}[l]{\small \emph{$i$th strand}}
\psfrag{z}[l]{\small \emph{$(j+1)$st strand}}
\psfrag{f}[c]{\small $\Bigg \} \delta_{i,j}$}
\psfrag{0}[c]{\small $0$}
\psfrag{s}[l]{\small $\trr_{\bar{q}+2}$}
\psfrag{t}[l]{\small $\trr_{\bar{a}+2}$}
\psfrag{r}[l]{\small $\trr_{\bar{\epsilon}+2}$}
\includegraphics[width=0.7\textwidth]{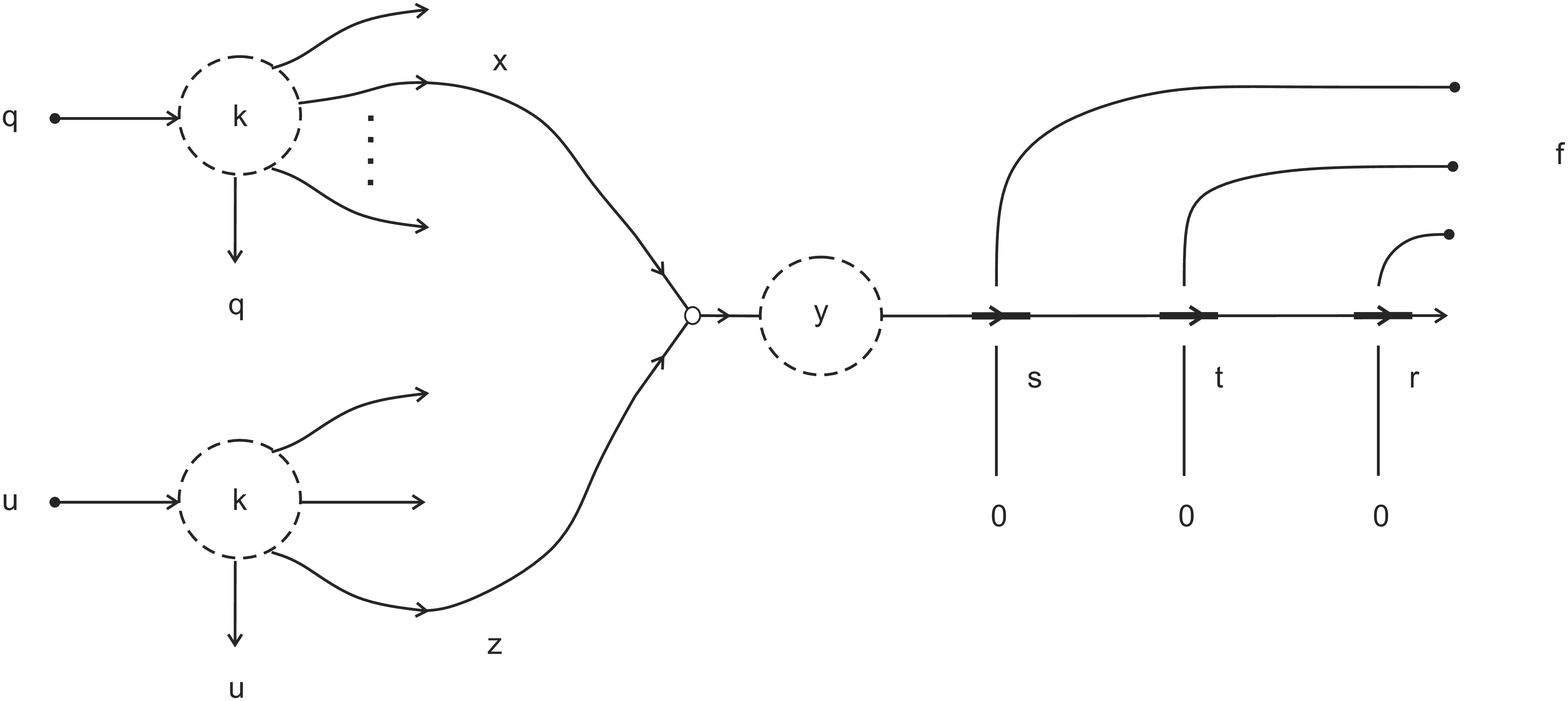}
\caption{\small Hardwired transition.}
\label{fig:delta}
\end{figure}

The hardwired transition is symbolically represented as
\[
\psfrag{q}[c]{\small $q$}
\psfrag{u}[c]{\small $u$}
\psfrag{k}[c]{\small $\; \delta_{i,j}$}
\psfrag{f}[c]{$\Bigg \} \delta_{i,j}$}
\includegraphics[width=0.25\textwidth]{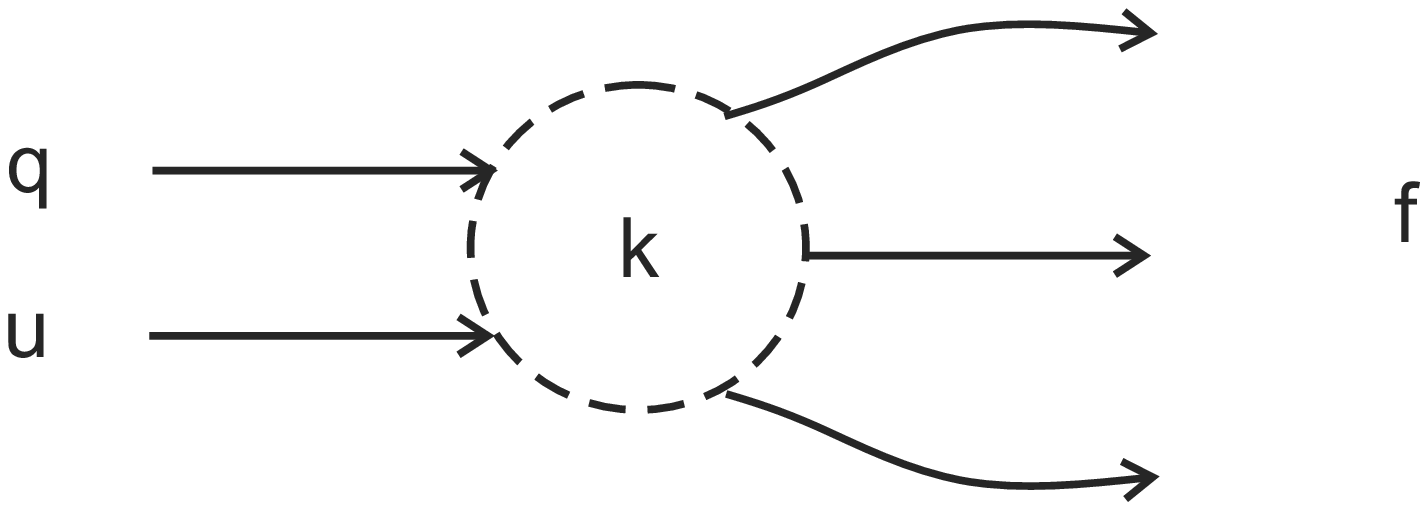}
\]

The transition function $\delta$ is constructed by combining several hardwired transitions. Note that there are not more than $3n$ possible transitions in the finite control ($n$ states multiplied by $3$ input symbols). Thus the number of distinct binary operations in $B$ does not exceed $9n + \mathcal{O}(1)$, which accounts for $3n$ triplets $(\trr_{\bar{q}+2},\trr_{\bar{a}+2},\trr_{\bar{\epsilon}+2})$ and a few other operations required for realizing the memory unit. A detailed construction of the finite control sub-machine is shown in Figure~\ref{fig:fc}.

\begin{figure}[htb]
\centering
\psfrag{q}[c]{\small $q(k)$}
\psfrag{u}[c]{\small $u(k)$}
\psfrag{2}[c]{\small $-2$}
\psfrag{k}[c]{\small $\delta_{1,0}$}
\psfrag{l}[c]{\small $\; \delta_{1,1}$}
\psfrag{m}[c]{\small $\; \delta_{1,2}$}
\psfrag{r}[c]{\small $\; \delta_{n,0}$}
\psfrag{s}[c]{\small $\; \delta_{n,1}$}
\psfrag{t}[c]{\small $\; \delta_{n,2}$}
\psfrag{a}[l]{\small $q(k+1)$}
\psfrag{b}[l]{\small $a$}
\psfrag{c}[l]{\small $\epsilon$}
\includegraphics[width=0.8\textwidth]{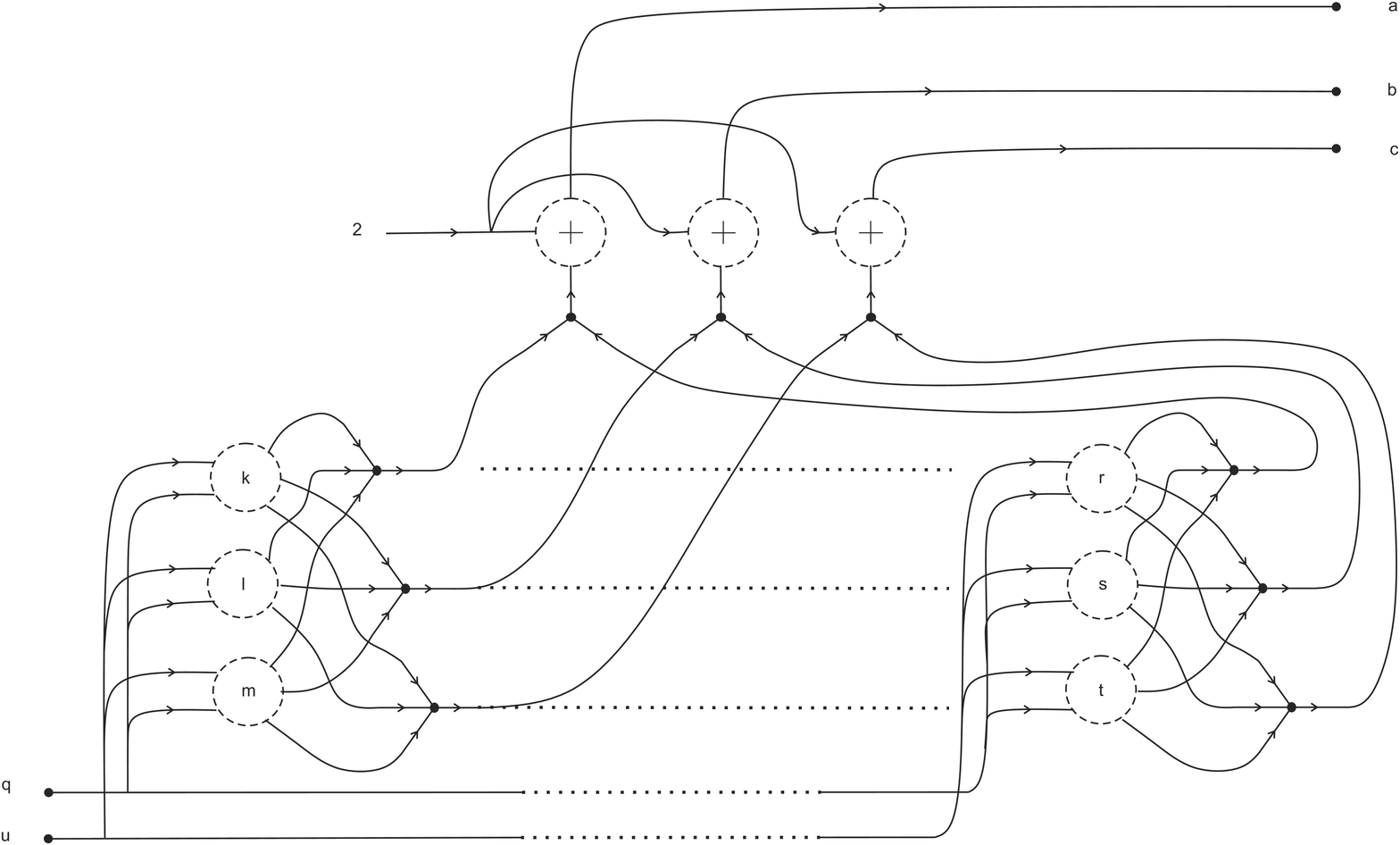}
\caption{\small A tangle machine realization of a finite control unit.} 
\label{fig:fc}
\end{figure}

\subsection{Memory unit and one step computation}

The memory unit consists mainly of the tape logic. It accepts a finite, possibly unbounded set of registers whose colours are manipulated in the basis of commands $a$ and $\epsilon$ received from the finite control. The majority of the registers of the memory unit correspond to tape cells whose content is represented by colours from the set $\Sigma$.

We construct the memory (tape) reading and writing operations using a mask generating machine together with selector machines to output the colour of the $p(k)$th register, either as it originally appeared or modified. See Figure~\ref{fig:write}.

\begin{figure}[htb]
\centering
\psfrag{p}[c]{\small $_{p(k)}$}
\psfrag{k}[c]{\small $\gamma$}
\psfrag{u}[c]{$\; \tau$}
\psfrag{b}[r]{\small $_{c_1(k)}$}
\psfrag{c}[r]{\small $_{c_2(k)}$}
\psfrag{d}[r]{\small $_{c_m(k)}$}
\psfrag{a}[c]{\small $_a$}
\psfrag{g}[l]{\small $_{c_1(k+1)}$}
\psfrag{h}[l]{\small $_{c_2(k+1)}$}
\psfrag{I}[l]{\small $_{c_m(k+1)}$}
\psfrag{v}[l]{\small $\vdots$}
\psfrag{z}[r]{\small $\vdots$}
\psfrag{s}[c]{\small $\trr_s$}
\psfrag{m}[c]{\small $\max$}
\psfrag{1}[c]{\small $_1$}
\psfrag{0}[c]{\small $_0$}
\psfrag{j}[c]{\small $_{u(k)}$}
\begin{subfigure}{.48\textwidth}
  \centering\includegraphics[width=0.74\textwidth]{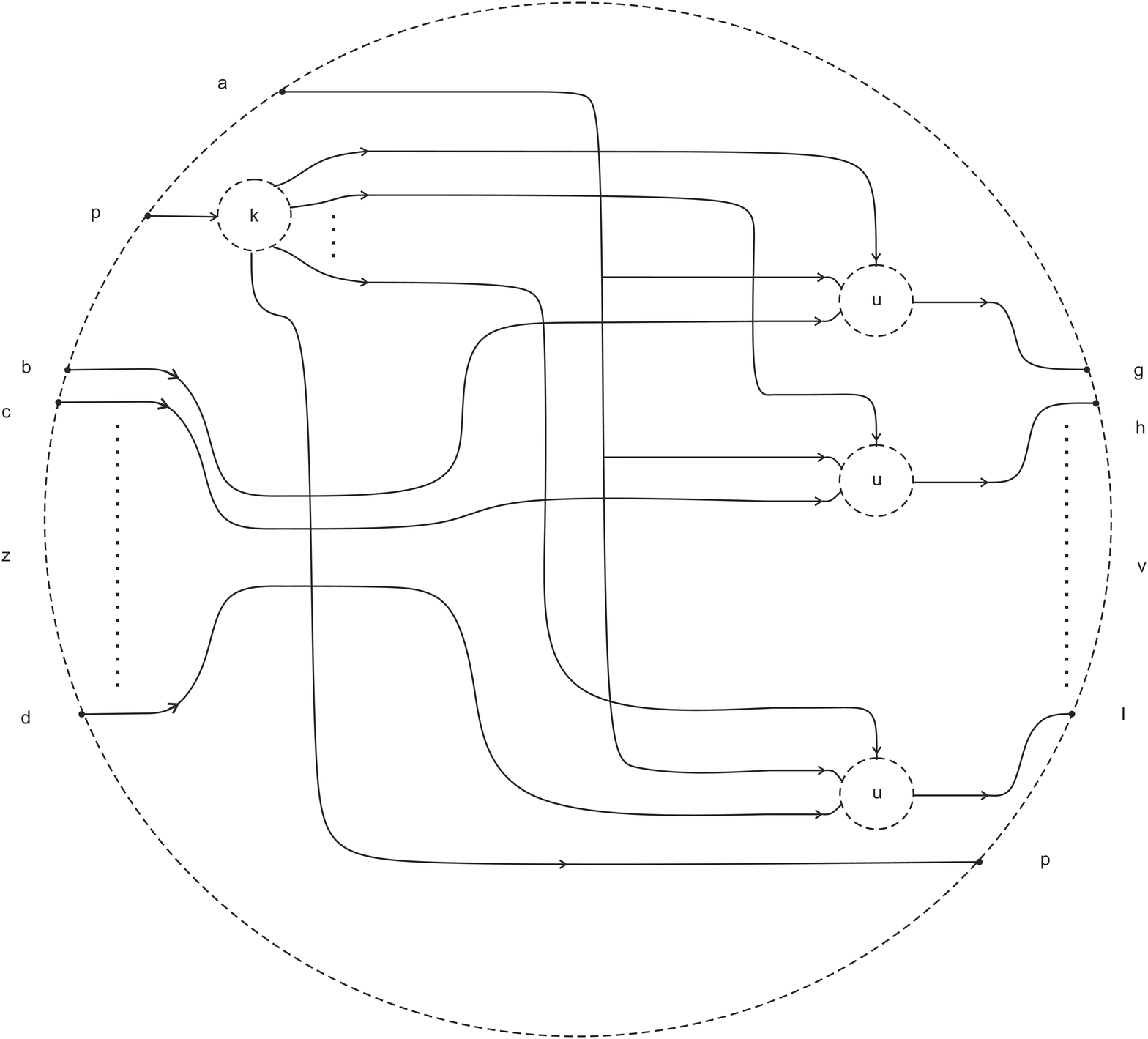}
  \caption{\emph{Write}}
\end{subfigure}
\begin{subfigure}{.48\textwidth}
  \centering
  \includegraphics[width=0.74\textwidth]{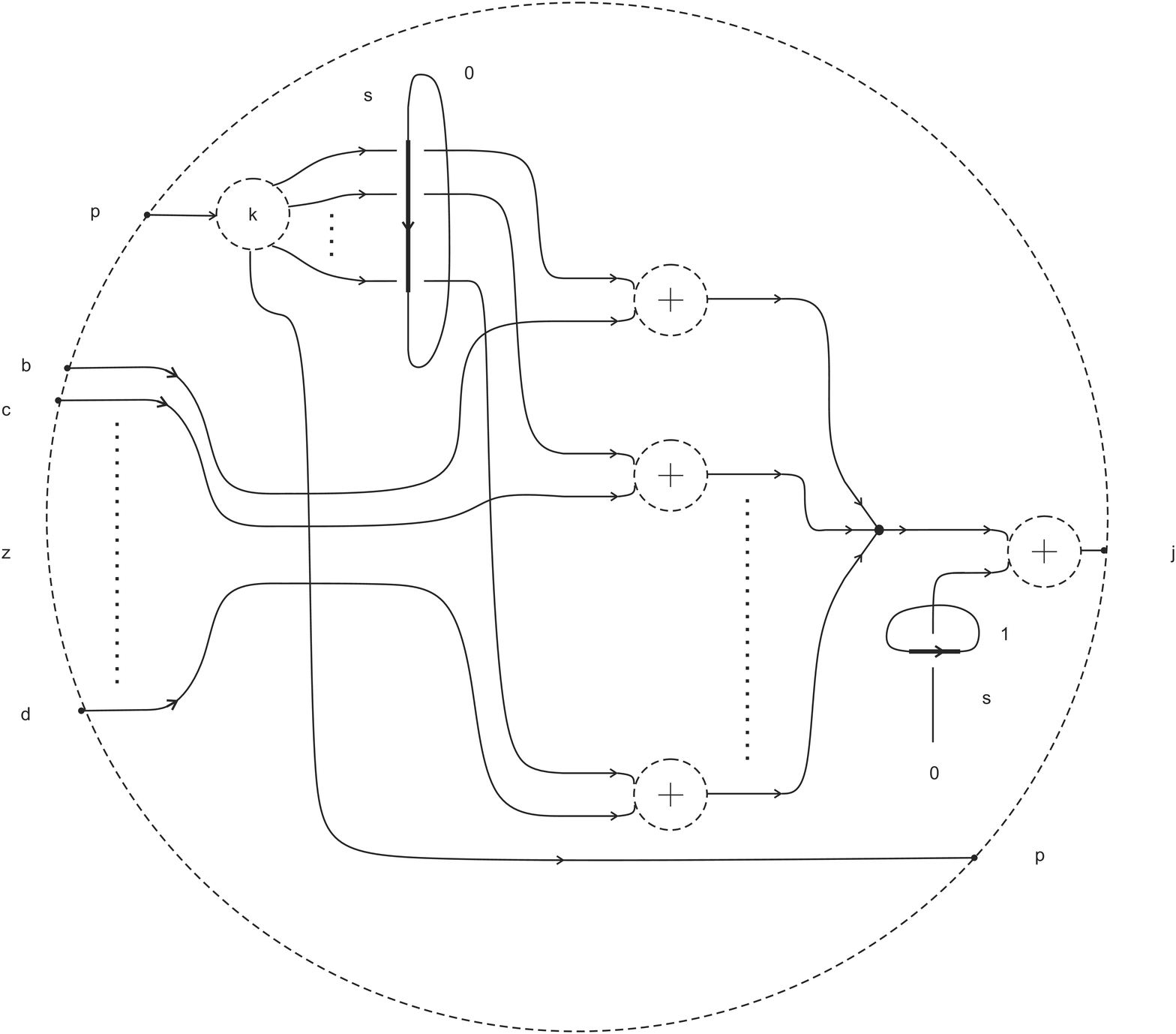}
  \caption{\emph{Read}}
\end{subfigure}
\caption{\small Memory reading and writing sub-machines. Here $\frac{1}{2}<s<1$ is an arbitrary parameter.}
\label{fig:write}
\end{figure}

Figure~\ref{fig:ts2} illustrates the memory unit, connected to the finite control unit. Its inputs are:

\begin{enumerate}
\item An \emph{integer pointer} register coloured $p(k)$, indicating the current head position.
\item A finite, possibly unbounded set of registers each of which represents a single (memory) cell on a tape. These tape registers are coloured $\{c_i(k)\}_{i > 0} \in \Sigma$.
\item A pair of registers coloured correspondingly by a pair instructions $(a, \epsilon)$, where $a$ denotes the symbol to be written in the current cell to which the head points (which is numbered $p(k)$), and $\epsilon$ denotes the (possibly zero) increment to be added to $p(k)$, so that $p(k+1)= p(k)+\epsilon-1$.
\end{enumerate}

The outputs of the memory unit are:

\begin{enumerate}
\item The updated pointer $p(k+1)$.
\item A finite, possibly unbounded set of registers whose colours $\{c_i(k+1)\}_{i > 0} \in \Sigma$ have been all passed unchanged, except for a single cell whose content may have been modified.
\item A strand coloured by $u(k+1)$, the content of the cell to which the head points in its new location $p(k+1)$
\end{enumerate}

\begin{figure}[htb]
\centering
\psfrag{5}[c]{\emph{W}}
\psfrag{4}[c]{\small $-1$}
\psfrag{6}[c]{\emph{R}}
\psfrag{a}[c]{\small $a$}
\psfrag{e}[c]{\small $\epsilon$}
\psfrag{x}[c]{\small \emph{memory}}
\psfrag{b}[r]{\small $c_1(k)$}
\psfrag{c}[r]{\small $c_2(k)$}
\psfrag{d}[r]{\small $c_m(k)$}
\psfrag{f}[r]{\small $p(k)$}
\psfrag{g}[l]{\small $c_1(k+1)$}
\psfrag{h}[l]{\small $c_2(k+1)$}
\psfrag{I}[l]{\small $c_m(k+1)$}
\psfrag{j}[l]{\small $p(k+1)$}
\psfrag{u}[l]{\small $u(k+1)$}
\psfrag{v}[l]{\small $\vdots$}
\psfrag{z}[r]{\small $\vdots$}
\psfrag{q}[c]{\small $q(k)$}
\psfrag{0}[l]{\small $q(k+1)$}
\psfrag{s}[c]{\small $u(k)$}
\psfrag{2}[c]{\small $-2$}
\includegraphics[width=0.5\textwidth]{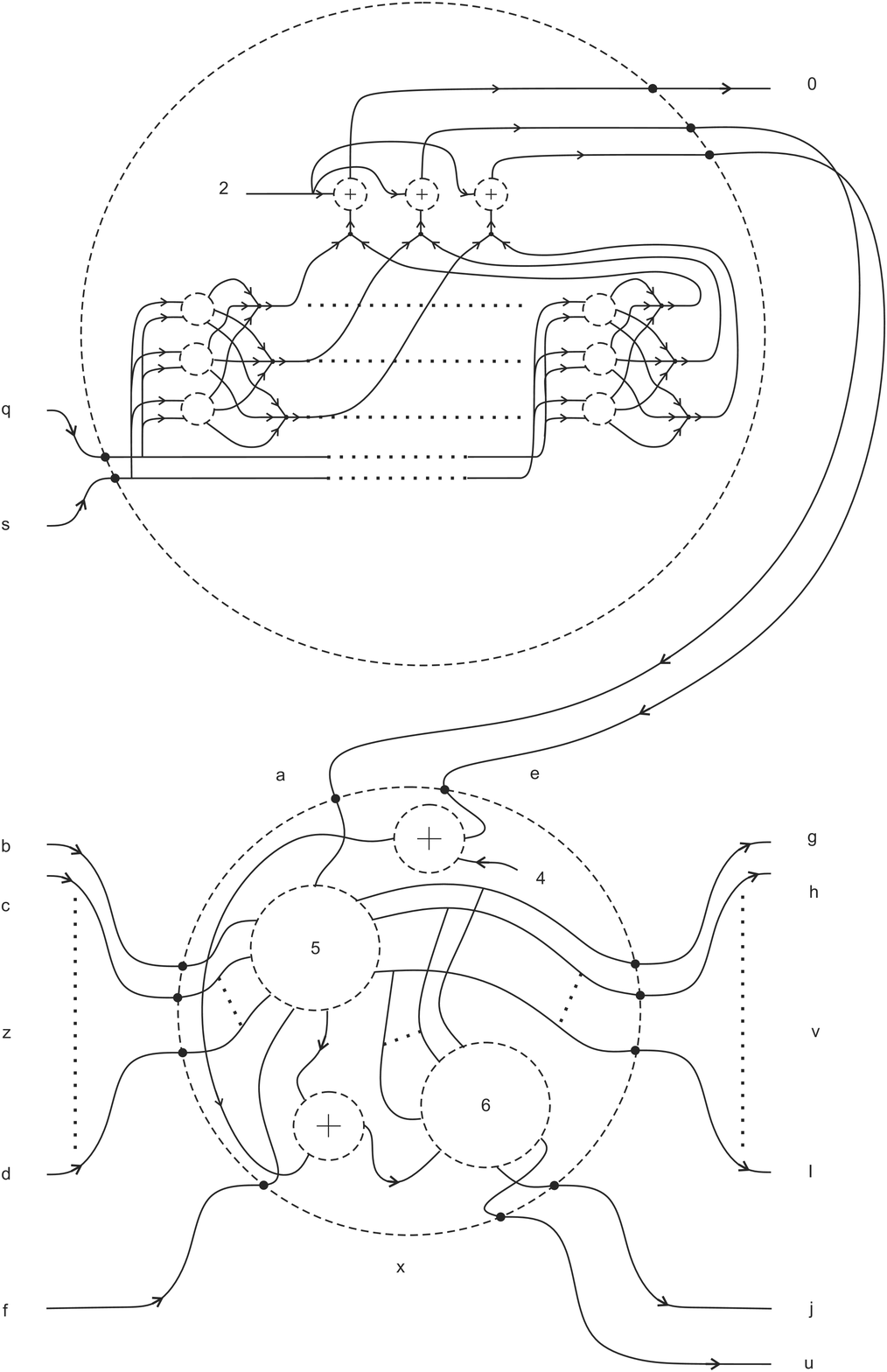}
\caption{\small One-step computation of a Turing tangle.}
\label{fig:ts2}
\end{figure}

To simulate the sequential operation of a Turing machine, copies of the finite control unit and of the memory unit are to concatenated in the obvious
manner. Concatenating $N$ copies of the machine in Figure~\ref{fig:ts2} simulates $N$ successive computations of a Turing machine (see Figure~\ref{fig:ts1}).
This justifies naming such a procedure \emph{iteration}.

\begin{figure}[htb]
\centering
\psfrag{x}[c]{\small \emph{memory}}
\psfrag{y}[c]{\small \emph{finite control}}
\includegraphics[width=0.65\textwidth]{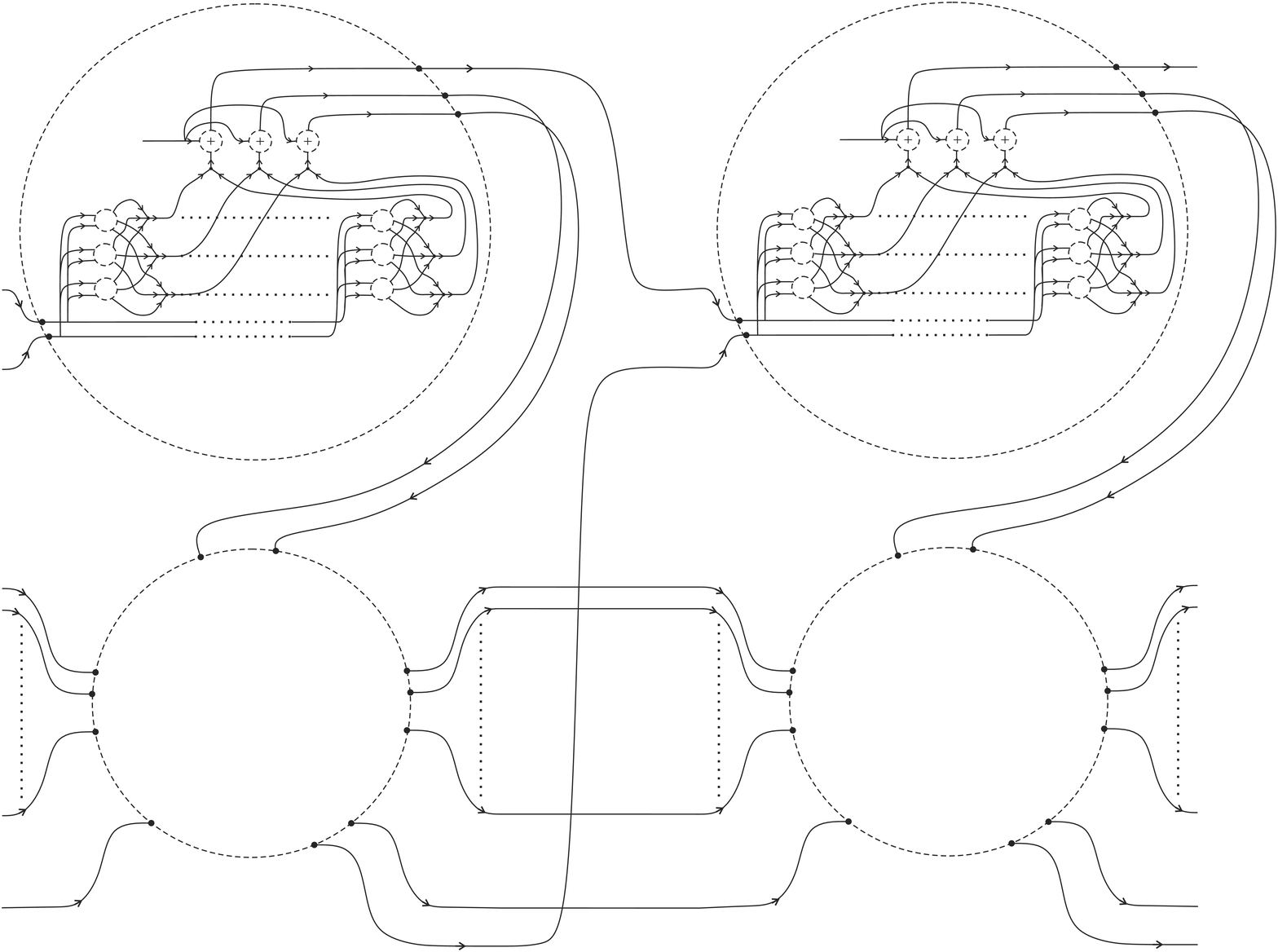}
\caption{\small Iterative computation in a Turing tangle machine.}
\label{fig:ts1}
\end{figure}

\subsection{Halting}

A halting state is a state for which:
\begin{equation}
\delta(q_h, u) = (q_h, u, 1)
\end{equation}
Once arriving at a halting state, further iterations do not alter the memory content of the machine. The state $q_h$ represents an equilibrium which may or may not be reached for a given input sequence $u(0), u(1), \ldots$. A machine whose input registers are coloured by a halting state can be closed by concatenating respective inputs and outputs as shown in Figure~\ref{fig:ts3}.

\begin{figure}[htb]
\centering
\psfrag{x}[c]{\small \emph{memory}}
\psfrag{y}[c]{\small \emph{finite control}}
\includegraphics[width=0.6\textwidth]{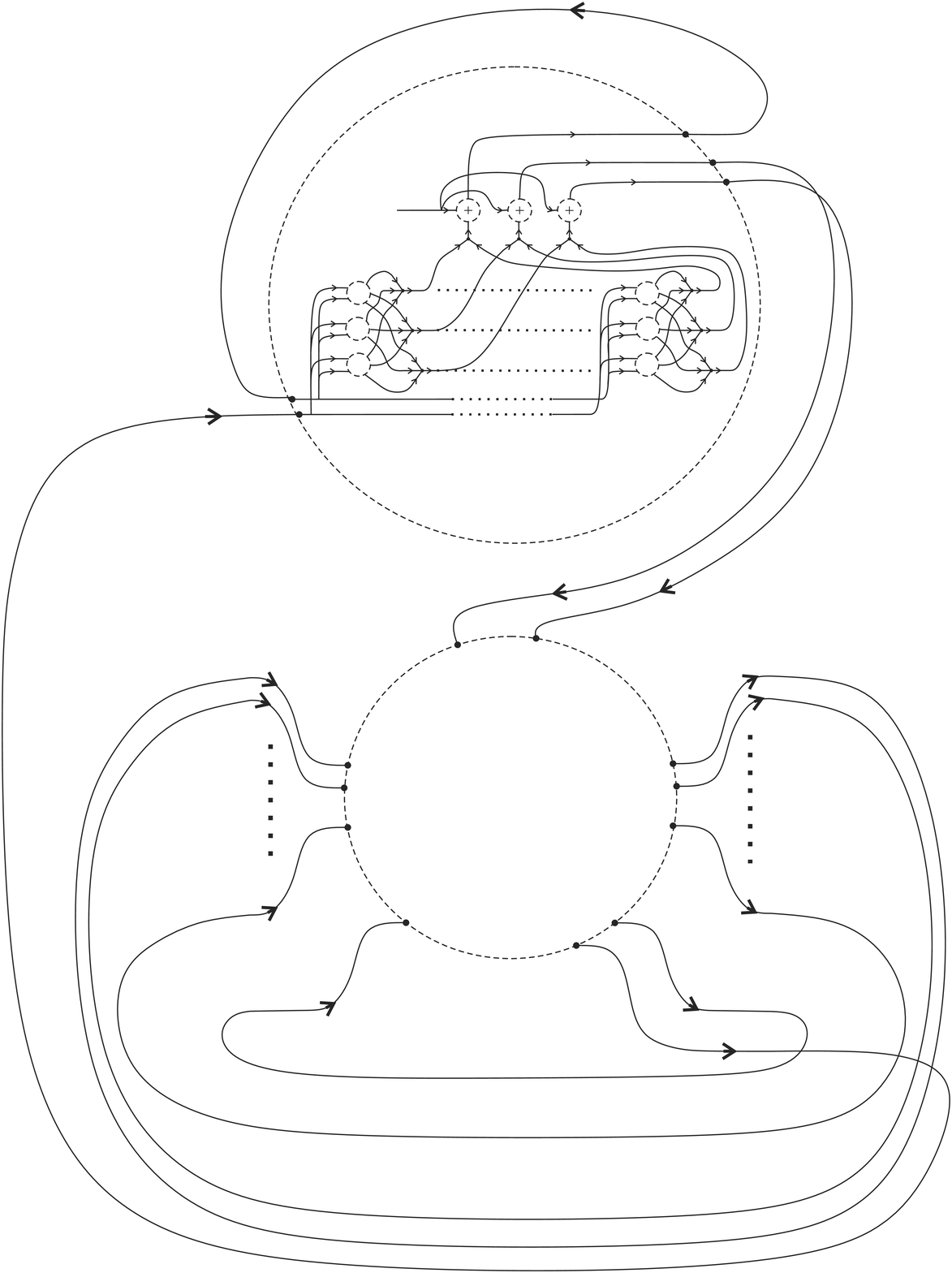}
\caption{\small Closure of a Turing tangle machine in its halting state.}
\label{fig:ts3}
\end{figure}

\section{Interactive proofs: Distribution of knowledge by deformation}\label{S:DeformingIP}

 The decision of a crowd can converge to a correct answer even when each individual has limited knowledge. This phenomenon is known as \emph{wisdom of the crowds} \citep{Surowiecki:05}. In line with `the many being smarter than the few', we extend the notion of interactive proof systems~\citep{Goldwasser:89} to a system in which a collection of verifiers interact to prove a claim together. Might such a crowd of verifiers collaborate to prove more than could be proven by any individual verifier in the crowd?

\subsection{Deformation of a single interaction}

Consider a family of verifiers, each with a belief concerning whether $x\in L$ or $x\notin L$. We model the belief of each verifier $W$ at time $t$ as a Bernoulli random variable $W_t$ whose realizations $w_t$ are either $\brak{\mathrm{True}}$ or $\brak{\mathrm{False}}$. We interpret $w_t=\brak{\mathrm{True}}$ as `$W$ believes at time $t$ that $x\in L$', and we interpret $w_t=\brak{\mathrm{False}}$ as `$W$ believes at time $t$ that $x\notin L$'.

Consider an interaction at time $t$ with agent $V$, one of whose patients is $W$. The realization $w_{t+1}$ of $W_{t+1}$ may equal either the belief of the agent $v_t$ or the belief of the patient $w_t$. In other words, $W$ either retains her belief or is `convinced' by $V$ to change her belief to that of $V$ (we use female pronouns for the verifiers, who are all `Alices'). Whether or not $V$ `succeeds in convincing $W$' depends on a message $\xi_t$ from a \emph{prover} $\Pi$ with access to an \emph{oracle}. 

Only the belief of patients changes at an interaction. The agents and the verifiers who do not participate in the interaction do not change their beliefs, so in particular $v_{t+1}=v_t$ always.

There are three constants associated to the agent $V$ at an interaction at time $t$: A \emph{completeness parameter} $c_V^t$, a \emph{soundness parameter} $s_V^t$ with $0<s_V^t<c_V^t\leq 1$, and a \emph{deformation parameter} $\dep_V^t\in \mathds{Q}\cap(0,1)$. For simplicity, we will assume that these three parameters are the same for all agents in the network, and $s_v^t,c_V^t,\dep_V^t$ will be written $s,c,\dep$ correspondingly.

Our basic requirement for an interaction is that the following pair of inequalities be satisfied:

\begin{equation}
\label{eq:ipdef4}
\begin{array}{ll}
\text{\emph{(deformed completeness)}} & x \in L \; \longrightarrow \; \Pr(W_{t+1} = v_t \mid V_t=v_t) \geq c \dep; \\[1ex]
\text{\emph{(deformed soundness)}} & x \notin L \; \longrightarrow \; \Pr(W_{t+1} = v_t \mid V_t=v_t) \leq s\dep.
\end{array}
\end{equation}

\begin{Remark}
The deformed completeness and soundness, $c \dep$ and $s \dep$, may both be below $\frac{1}{2}$ or may both be above $\frac{1}{2}$ and bounded away from $1$.
\end{Remark}

In the limit $\dep\to 1$, specific values of $c$ and of $s$ turn the pair of inequalities \eqref{eq:ipdef4} into familiar pairs of inequalities in interactive proof theory \citep{AroraBarak:2009}. For example, for $s=2^{-|x|^a}$ and $c=1-2^{-|x|^b}$, where $a,b > 0$, we obtain the completeness and soundness constraints of an $\mathrm{IP}$ verification.

\subsection{Statistics of beliefs and interactions}

When keeping track of the beliefs of many different verifiers at many different times, it is cumbersome to work directly with (\ref{eq:ipdef4}). Instead, we introduce a shorthand to keep track of the belief of a verifier, both if $x\in L$ and also if $x\notin L$, in a single expression.

The \emph{belief statistics} $\brak{W_t}$ of verifier $W$ at time $t$ is written:
\begin{equation}\label{eq:lowerbounds1}
\brak{W_t}\ass a\brak{\mathrm{True}}+ b\brak{\mathrm{False}}\enspace,
\end{equation}
where $a\in [0,1]$ denotes the \emph{greatest lower bound} for the belief of $W$ that $x\in L$ at time $t$ conditioned on this belief indeed being true, and $b\in[0,1]$ denotes the \emph{greatest lower bound} for the belief of $W$ that $x\notin L$ at time $t$ conditioned on this opposite belief indeed being true. Note that $a+b$ need not equal $1$. In particular:
\begin{equation}
\label{eq:lowerbounds2}
\begin{array}{ll}
x \in L \; &\longrightarrow \; a\leq \Pr(W_{t} = \brak{\mathrm{True}}); \\[1ex]
x \notin L \; &\longrightarrow \; b\leq \Pr(W_{t} = \brak{\mathrm{False}}).
\end{array}
\end{equation}

An interaction between $W$ and $V$ at time $t$ concludes either with $W$ accepting the belief of $V$ or with $W$ sticking to her own belief.
Denoting by $h$ the probability (or more precisely a lower bound on it) of $W$ switching to the belief of $V$ in the next time-frame, the distribution of possible beliefs of $W$ in the next time-frame is described by:

\begin{equation}\label{E:BeliefTransition}
\brak{W_{t+1}} = \brak{W_t}^{\brak{V_t}} \ass (1-h)\brak{W_t}+h\brak{V_t}.
\end{equation}

\begin{Remark}
Belief statistics written as in \eqref{eq:lowerbounds1} facilitate calculations of probabilities across a network. This tool is used
throughout the remainder of this note (some examples would be given shortly). Here we explain how \eqref{eq:lowerbounds1} and \eqref{E:BeliefTransition} combine to give a compact way of representing \emph{two entirely different interactions}, one assuming $x\in L$ and the other assuming $x\notin L$. This may be slightly confusing at first, so the reader's attention is called to this point.

Owing to \eqref{E:BeliefTransition}, the probabilities anywhere in the network at time $t>0$ depend on the parameter $h$. We may do all calculations and treat it as a formal parameter. Having in mind that an interaction is ultimately a procedure terminating with the statistics in \eqref{eq:ipdef4}, the parameter $h$ is set either to $c \dep$ or to $s \dep$ depending on whether or not $x$ is in $L$. To verify that a network \emph{decides} $L$ we repeat the computation twice, first for the case where $x \in L$ and then for $x \notin L$. In the former case we will be interested only in the coefficient of $\brak{\mathrm{True}}$ whereas in the latter case we will be interested only in the coefficient of $\brak{\mathrm{False}}$.

Here is an illustrative calculation for a single interaction. Let $W_t=\brak{\mathrm{False}}$ and $V_t=\brak{\mathrm{True}}$. Invoke \eqref{E:BeliefTransition} using the completeness and soundness parameters in \eqref{eq:ipdef4}: first using $h=c\dep$ and then using $h=s\dep$. Hence,
\[
\brak{W_{t+1}} = c\dep\brak{\mathrm{True}} + (1-s\dep)\brak{\mathrm{False}} \longrightarrow \left \{
\begin{array}{ll}
c\dep\brak{\mathrm{True}} + (\cdots) \brak{\mathrm{False}}, & x \in L; \\[0.5ex]
(\cdots) \brak{\mathrm{True}} + (1-s\dep)\brak{\mathrm{False}}, & x \notin L.
\end{array} \right.
\]
This interaction is said to decide $L$ only if $c\dep > \frac{1}{2}$ and $1-s\dep > \frac{1}{2}$.
\end{Remark}

\subsection{Expressive power of a network: The class $\mathrm{BraidIP}$}

Verifiers in our framework are assumed to be implemented as probabilistic polynomial-time Turing machines whose beliefs are either internal states or are stored on tapes. Similarly, an interaction is a polynomial-time procedure. Consider now a crowd of verifiers $V^1,V^2,\ldots, V^\mu$ whose initial beliefs at time $t=0$ are $\brak{V^1_0},\brak{V^2_0},\ldots, \brak{V^\mu_0}$. Allow them to interact at times $t=0,1,2,\ldots,\chi$, subject to parameters $0<c<s\leq 1$ and $\dep\in \mathds{Q}\cap(0,1)$. We write $M$ for this sequence of interactions. A language $L\subseteq \{0,1\}^\ast$ is said to be \emph{decided} by $M$ if $M$ contains a verifier $V$ whose belief at time $\chi$ is $\brak{V_\chi}\ass a\brak{\mathrm{True}}+ b\brak{\mathrm{False}}$, such that for a fixed constant $\kappa>0$, we have:

\begin{equation}
\label{eq:lang}
\begin{array}{l}
x \in L \longrightarrow a \geq \frac{1}{2} + \abs{x}^{-\kappa}; \\[0.5ex]
x \notin L \longrightarrow b \geq \frac{1}{2} + \abs{x}^{-\kappa}.
\end{array}
\end{equation}

This definition depends on the choice of $\kappa>0$. The class \emph{braided interactive polynomial time} ($\mathrm{BraidIP}$) consists of those languages which are decidable for any fixed $\kappa>0$ by some network $M$ in time $\chi$, polynomial in $\abs{x}$. We denote this class $\mathrm{BraidIP}\{\dep,\chi\}$ where $\chi$ is \emph{the number of interactions in $M$}.

\begin{Remark}
The definition of class $\mathrm{BraidIP}$ is similar in spirit to the class $\mathrm{BPP}$. We will see below that it includes class $\mathrm{IP}$. Letting \eqref{eq:lang} reflect the class $\mathrm{PP}$ (\textit{i.e.} taking strict inequalities and right-hand constants equal to $\frac{1}{2}$) will result in networks that decide any $L \in \mathrm{IPP}$.
\end{Remark}

\subsection{Braid of beliefs}

Our networks admit a convenient diagrammatic description. We represent an interaction as a wire cutting through other wires (see Figure~\ref{fig:inter}). The overcrossing wire, which becomes slightly thickened in an interaction, carries the belief statistics of an agent whereas the undercrossing wires carry the belief statistics of her patients. An example of many concatenated interactions is given in Figure~\ref{F:rumour2}.

\begin{figure}[htb]
\centering
\psfrag{b}[c]{\small $\brak{W_t}$}
\psfrag{d}[c]{\small $\brak{W_{t+1}}$}
\psfrag{a}[r]{\small $\brak{V_{t}}$}
\psfrag{c}[l]{\small $\brak{V_{t+1}}$}
\includegraphics[width=0.17\textwidth]{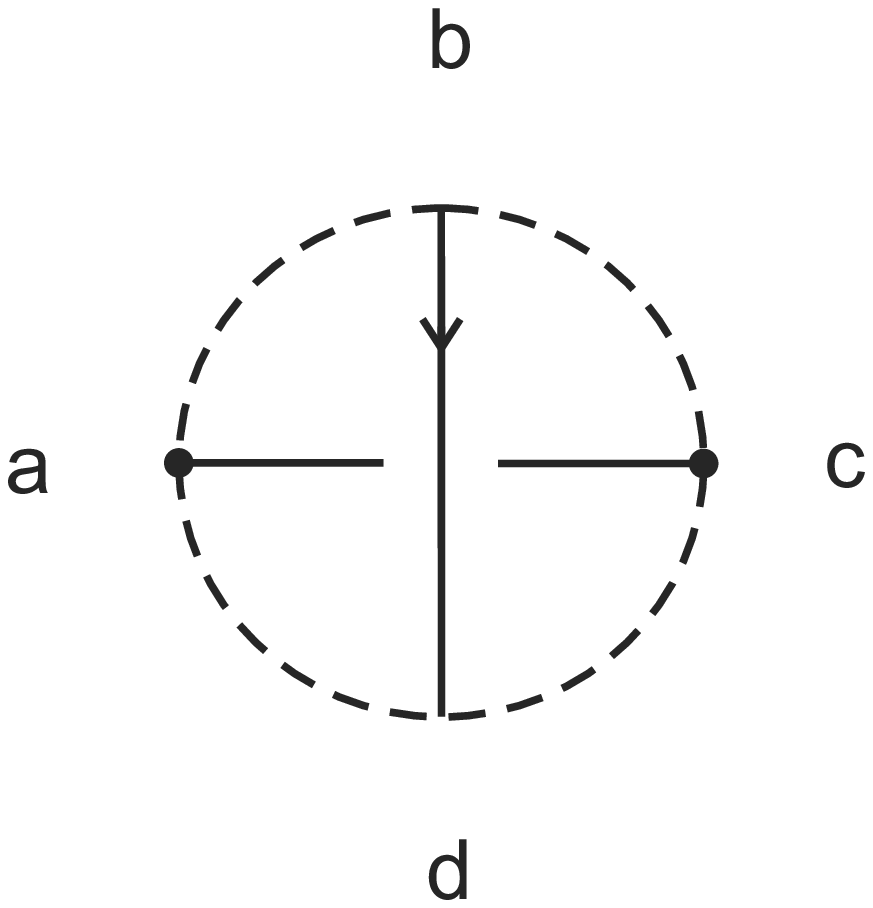}
\caption{\small Diagrammatic representation of an interaction.}
\label{fig:inter}
\end{figure}

\begin{figure}[htb]
\centering
\psfrag{x}[c]{$D_0$}
\psfrag{v}[c]{$B_0$}
\psfrag{u}[c]{$A_0$}
\psfrag{w}[c]{$C_0$}
\psfrag{0}[c]{\small $0$}
\psfrag{1}[c]{\small $1$}
\psfrag{2}[c]{\small $2$}
\psfrag{3}[c]{\small $3$}
\psfrag{4}[c]{\small $4$}
\psfrag{a}[c]{\small $_{\brak{D_0}}$}
\psfrag{b}[c]{\small $_{\brak{C_0}}$}
\psfrag{a1}[c]{\small $D_2$}
\psfrag{b1}[c]{\small $C_2$}
\psfrag{m}[c]{\small $_{\brak{D_2}}$}
\psfrag{n}[c]{\small $_{\brak{C_2}}$}
\psfrag{c}[c]{\small $_{\brak{B_0}}$}
\psfrag{d}[c]{\small $_{\brak{A_0}}$}
\psfrag{e}[c]{\small $_{\brak{A_3}}$}
\psfrag{f}[c]{\small $_{\brak{B_3}}$}
\psfrag{g}[c]{\small $_{\brak{C_3}}$}
\psfrag{h}[c]{\small $_{\brak{D_3}}$}
\psfrag{y}[c]{\small \emph{Tangle machine}}
\psfrag{q}[c]{\small \emph{Rumour-passing network}}
\psfrag{r}[c]{\small \emph{(flow of beliefs)}}
\psfrag{t}[c]{\small \emph{time}}
\includegraphics[width=0.96\textwidth]{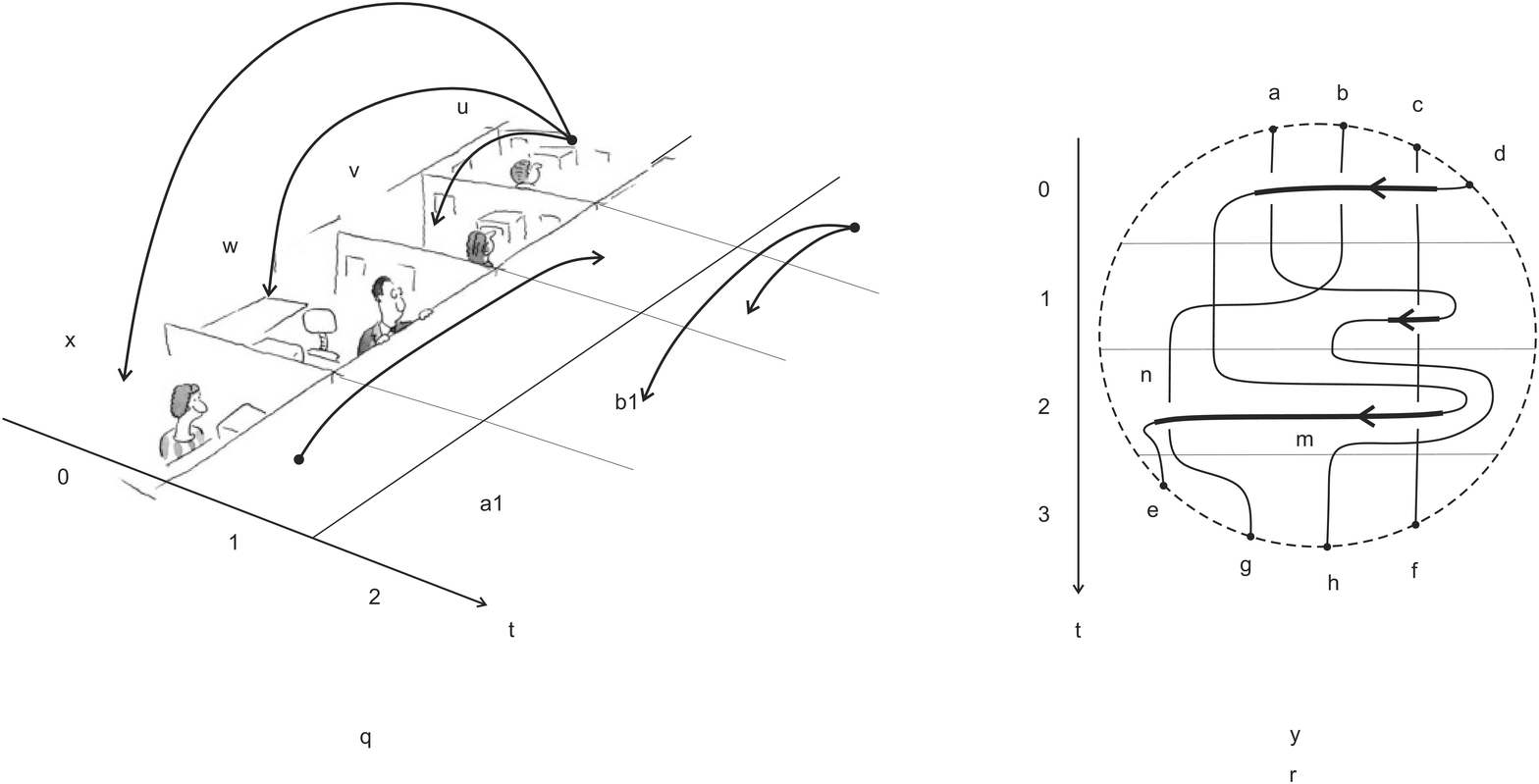}
\caption{\small A rumour-passing network and its respective diagram.}
\label{F:rumour2}
\end{figure}

The diagram representing the rumour passing network in Figure~\ref{F:rumour2} is a \emph{braid}, which is a special sort of \emph{tangle}. Here, such diagrams represent the flow of beliefs within a network of interacting machines (verifiers).

When many patients pass under a single agent, we define this to imply that for each patient, the belief sampled from that patient is \emph{independent} of the belief sampled from every other patient, and the belief sampled from the agent to update that patient's belief is independent of the belief sampled from the agent to update every other agent's belief. To say the same thing in a different way, multiple patients under the same agent are independent and unsynchronized during that time-frame. We will discuss this point further in Section~\ref{SS:reversibility}.

\subsection{An example}

The capacity of a network to prove or disprove a claim is an emergent property. Out of a number of uncertain interactions, none of which prove the claim, the truth may eventually materialize. As an example, consider the two pairs of two consecutive interactions pictured in Figure~\ref{fig:pcpm}. Both sequences involve three verifiers, designated $X$, $Y$ and $Z$. Their initial beliefs are shown at the bottom. 

\begin{figure}[htb]
\centering
\psfrag{a}[c]{\small $\rule{0pt}{12pt}_{\brak{Y_0} = \frac{1}{2}\brak{\mathrm{False}}+\frac{1}{2}\brak{\mathrm{True}}}$}
\psfrag{b}[c]{\small $_{\brak{X_0} = \brak{\mathrm{False}}}$}
\psfrag{c}[c]{\small $_{\brak{Z_0} = \brak{\mathrm{True}}}$}
\psfrag{d}[c]{\small $_{\brak{Y_2}}$}
\psfrag{f}[c]{\small $_{\brak{X_2}}$}
\psfrag{e}[c]{\small $_{\brak{X_1}}$}
\psfrag{e1}[c]{\small $_{\brak{\bar{X}_1}}$}
\psfrag{g}{}
\psfrag{t}[c]{\small \emph{time}}
\includegraphics[width=0.7\textwidth]{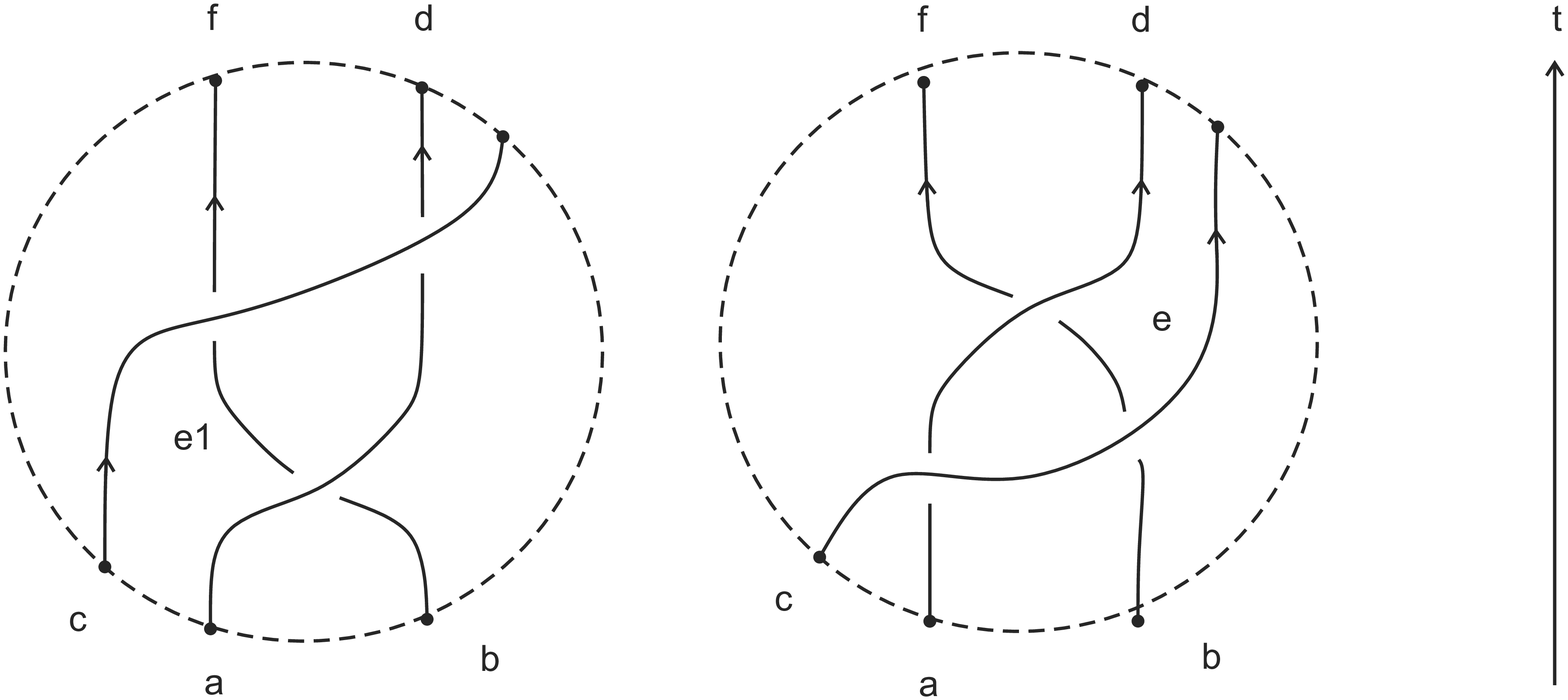}\rule{0pt}{100pt}
\caption{\small Equivalent networks of interactions.}
\label{fig:pcpm}
\end{figure}

Set the parameters to $s\ass \frac{1}{2}$, $c\ass 1$, and $\dep\ass \frac{1}{2}$. Allow the verifiers to interact in the left diagram. At time $t=2$, beliefs $X_0$ and $Y_0$ of $X$ and of $Y$ have been updated to $X_2$ and to $Y_2$ ($Z$ does not change his belief). The distributions are:
\begin{equation}
\brak{X_2}=\left(\brak{X_0}^{\brak{Y_0}}\right)^{\brak{Z_0}} \longrightarrow
\left \{ \begin{array}{ll}
\frac{3}{8} \brak{\mathrm{False}} + \frac{5}{8} \brak{\mathrm{True}}, & x \in L; \\[0.5ex]
\frac{21}{32}\brak{\mathrm{False}} + \frac{11}{32}\brak{\mathrm{True}}, & x \notin L.
\end{array} \right.
\end{equation}
Hence,
\begin{equation}
\begin{array}{lll}
\brak{X_2} = \left(\brak{X_0}^{\brak{Y_0}}\right)^{\brak{Z_0}}  & =
\frac{21}{32}\brak{\mathrm{False}} +  \frac{5}{8} \brak{\mathrm{True}}, & \text{(left network)}; \\[1ex]
\brak{X_2} = \left(\brak{X_0}^{\brak{Z_0}}\right)^{\left(\brak{Y_0}^{\brak{Z_0}}\right)} & =
\frac{21}{32}\brak{\mathrm{False}} +  \frac{5}{8} \brak{\mathrm{True}}, & \text{(right network)}.
\end{array}
\end{equation}

Similarly,

\begin{equation}
\brak{Y_2} = \brak{Y_1} = \brak{Y_0}^{\brak{Z_0}} =
\frac{3}{8}\brak{\mathrm{False}}+ \frac{3}{4} \brak{\mathrm{True}}.
\end{equation}

From this we see that $X$ decides correctly at time $\chi=2$ with probability at least $\frac{5}{8}$ or $\frac{21}{32}$, depending on whether
$x \in L$ or $x\notin L$. Both of these are greater that $\frac{1}{2}$, so the pair of inequalities~\ref{eq:lang}, for a suitable $\kappa>0$, a protocol underlied by the above interactions will succeed in deciding, at $\brak{X_2}$, whether or not $x\in L$.

Note again that
\[\brak{Y_1}=\brak{Y_0}^{\brak{Z_0}}= h\brak{Z_0}+(1-h)\brak{Y_0},\]
and we evaluate with $h= \frac{1}{2}$ for the coefficient of $\brak{\mathrm{True}}$ and with $h= \frac{1}{4}$ for the coefficient of $\brak{\mathrm{False}}$. This is the same for all interactions in both diagrams.

Note that both diagrams in Figure~\ref{fig:pcpm} have the same \emph{initial beliefs} $\brak{X_0}$, $\brak{Y_0}$, and $\brak{Z_0}$ and the same \emph{terminal beliefs} $\brak{X_2}$, $\brak{Y_2}$, and $\brak{Z_2}$, and differ only in the belief of $X$ at time $t=1$. Thus, these two diagrams underlie \emph{equivalent} deformed interactive proofs, each of which can uniquely be reconstructed from the other, which decide the same languages, but which differ at an intermediate step. This equivalence is the topic of Section~\ref{sec:lowdim}.

\section{Deformation of an IP system}
\label{sec:defip}

In this section, we show how we may deform an IP system with any soundness parameters $0<s<\frac{1}{2}<c\leq 1$, for any deformation parameter $\dep\in \mathds{Q}\cap(0,1)$. The completeness and soundness parameters of the deformed system will be $s\dep$ and $c\dep$ correspondingly. The deformation parameter $\dep$ serves to introduce noise between the prover and the verifiers. In the $\dep\to 1$ limit we recover $\mathrm{IP}$, and the information obtained by a verifier at each interaction shrinks as $\dep\to 0$. But Theorem~\ref{thm:existence} proves that we can recover $\mathrm{IP}$ from $\mathrm{BraidIP}$ by concatenating many consecutive deformed interactions.

Before describing how IP may be deformed, we outline the major differences between a single interaction in IP and BraidIP:\\[0.5ex]
\[
\small
\begin{tabular}{p{3cm} | p{4cm} | p{4cm}}
Description & IP system & Deformed IP system \\
\hline
Participants & Verifier, Prover & Many verifiers (patient), Verifier (agent), Prover \\
\hline
Verifier 'state of mind' & Accept/Reject & Belief True/False \\
\hline
Conclusion & Verifier decides Accept/Reject & If the two verifiers do not agree then the patient may change her belief. \\
\hline
Completeness, Soundness & $c,s$ & $c\dep,s\dep$
\end{tabular}
\]

\subsection{Two approaches to deform IP}

We present two approaches to deform an $\mathrm{IP}$ protocol. The end result is the same, but the `story' is different.

\subsubsection{Agent and patient as a single verifier}

We may think of a patient $W$ and an agent $V$ of an interaction at time $t$ as representing different aspects of a single verifier. In this approach we conceive of $W$ and $V$ as being a single unit $(W,V)$. The verifier $(W,V)$ transmits to the prover $\Pi$ the belief of both $W$ and $V$. We may imagine $W$ and $V$ as litigants in a court case, presenting their claims to the judge $\Pi$, where $W$ is the defendant and $V$ is the plaintiff.  If both $W$ and $V$ make the same claim, then $\Pi$ throws the case out (\textit{i.e.} $W_{t+1}=w_t$ and $V_{t+1}=v_t$). On the other hand, if $V$ disagrees with $W$, then $(W,V)$ query the prover $\Pi$ according to the original interactive protocol. If according to the original protocol, $W$'s claim should be accepted, then $\Pi$ rules in $W$'s favour (\textit{i.e.} $W_{t+1}=w_t$ and $V_{t+1}=v_t$). But if according to the original protocol $W$'s claim should be rejected and $V$'s claim should be accepted, then $\Pi$ picks an integer uniformly at random between $1$ and $N$. If the number $\Pi$ picked is less than $\dep N$, then $\Pi$ rules in favour of $V$ (\textit{i.e.} $W_{t+1}=v_t$ and $V_{t+1}=v_t$). Otherwise he rules in favour of $W$.

Perhaps $\dep$ represents a chosen standard of `reasonable doubt'. Constants $s$ and $c$ perhaps represent constants associated with the mechanics of the courthouse procedure. Note that as $\dep\to 1$, a single interaction involving two verifiers with opposite beliefs recovers IP.

\subsubsection{Verifiers communicating through a noisy channel}

The following approach has an information-theoretic interpretation. Consider the prover $\Pi$ as an information source, the agent verifier $V$ as an encoder, and the patient verifier $W$ as a decoder. The query information transmitted from $W$ to $\Pi$ is relayed via a perfect communication channel (\textit{i.e.} there is no loss of information in this direction). The replies from $\Pi$ are passed on to $V$ who encodes them and transmits them back to $W$, this time through a noisy channel. This means that the prover replies emerge corrupted on $W$'s end, which consequently influences her decision.

Introducing a noisy channel into the formalism restricts the information obtained by the patient verifier from the prover. It tempting to state that the combination prover-agent-noisy channel behaves like a mendacious agent, in that the agent $V$ decided whether to `tell the truth' or to `lie' to $W$. But to think of $V$ as a mendacious agent is inaccurate. For one thing, the agent's strategy whether or not to reliably relay $\Pi$'s replies to $W$ must account for the beliefs of both $V$ and $W$, either one of which may not be correct. Her behavior does not stem from her being more knowledgeable; rather we may think of it as a manifestation of her own beliefs.

It is important to note that although $V$ receives the replies from $\Pi$ she is not allowed to use this information to update her own belief. One can think of protocols taking advantage of the fact that $V$ is not aware of $W$'s queries wherein this restriction follows naturally. For now it is enough to assume that $V$ will not use the prover replies for her own benefit.

Here is how such a protocol may run. Upon disagreement between $W$ and $V$, \textit{i.e.} $v_t \neq w_t$, the patient $W$ sends her queries to $\Pi$. The replies to $W$'s queries are then sent by $\Pi$ to the agent $V$. At this point, $V$, who is a verifier much like $W$, runs her own verification test on the prover replies. He obtains $\xi_t=1$ for accept/true and $\xi_t=-1$ for reject/false. In case where $\xi_t=1$ she tampers with the prover replies such that when they are received by $W$ her verification would indicate $v_t$ with probability $\dep$. In case where $\xi_t=-1$ the agent $V$ tampers with the prover replies such that the test of $W$ would indicate $\neg v_t$.

Implicit in the above protocol is the fact that the capacity of $V$ to deceive $W$ is limited by $V$'s own belief. If her belief, $v_t$, coincides with the true nature of claim then she may potentially have more power to deceive $W$. 

\begin{figure}
\centering
\psfrag{z}[c]{\small $W_{t+1}$}
\psfrag{y}[c]{\small $V$}
\psfrag{o}[c]{\small $\Pi$}
\psfrag{t}[c]{\small $_{\brak{\mathrm{True}}}$}
\psfrag{f}[c]{\small $_{\brak{\mathrm{False}}}$}
\psfrag{a}[c]{\small $_{+1}$}
\psfrag{b}[c]{\small $_{-1}$}
\psfrag{1}[c]{\small $1$}
\psfrag{p}[c]{\small $\dep$}
\psfrag{q}[c]{\small $1-\dep$}
\includegraphics[width=0.9\textwidth]{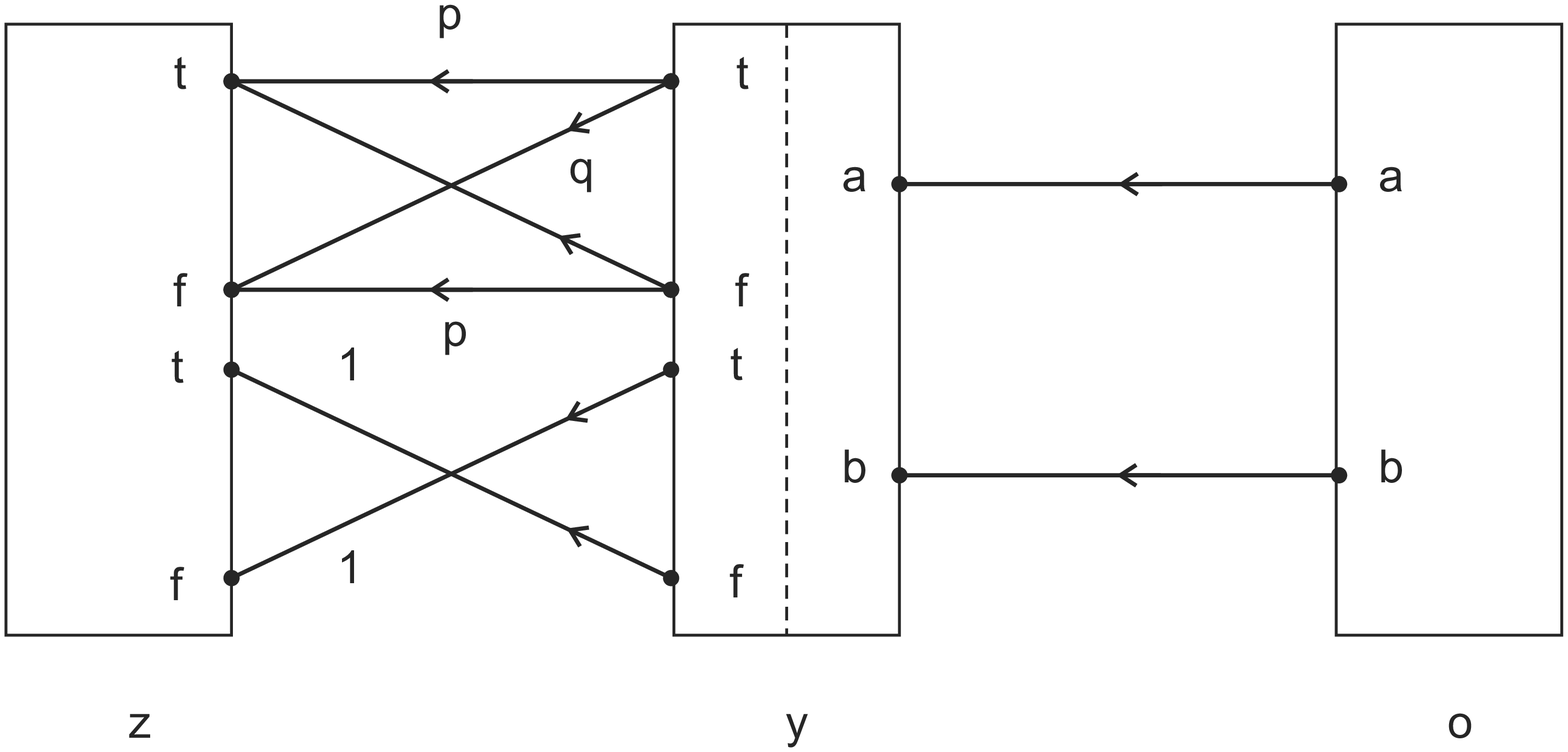}
\caption{\small Communication between $W$ and $V$ when $v_t \neq w_t$. The channels between $W$ and $V$ are symmetric. The labels above edges indicate transition probabilities. The box of $V$ is divided into two sections representing her belief $v_t$ (left section) and the outcome of his verification $\xi_t$ (right section).}
\label{fig:channel}
\end{figure}

The protocol just described underlies a noisy symmetric channel between $W$ and $V$. If $\xi_t=1$, this channel is characterized by $\dep$ and has a capacity of $1-H_2(\dep)$, where $H_2(\dep)$ is the Shannon entropy of a Bernoulli random variable with parameter $\dep$. It induces a maximal loss of $1$ bit of information for $\dep=\frac{1}{2}$. An illustration of the communication between the three parties patient-agent-prover is given in Figure~\ref{fig:channel}.

\subsubsection{Further metaphors for deformed interactions}

The agents in our picture all receive messages from the same oracle. This essentially suggests that a network of interactions is a construct \emph{quantizing} the oracle knowledge. At every location within the network only a quanta of this knowledge is used by way of interaction between a patient and an agent. Later on we will show that although a single interaction may be limited in its capacity to prove the claim, the proof may yet emerge somewhere in the network depending on its topology.  

The triple patients-agent-oracle brings to mind some basic models of reasoning and information transfer. Perhaps a patient is an entity whose beliefs reflect both prior knowledge and observations. The patient is exposed to a genuine phenomenon, which the patient has not seen before. The phenomenon, which is the metaphor for an oracle, is beyond the comprehension of the patient and hence a number of observations are collected in an attempt to reach a definitive conclusion.  These observations, however, may be distorted by limitations of the patient measuring apparatus, or perhaps they contradict prevailing explanations and beliefs. In either cases observations contain, or otherwise introduce, uncertainty. Observations are the metaphor for agents. What the patient tries to accomplish underlies the Bayesian inference paradigm. 

Here is another metaphor. A patient is a decoder, an oracle is an information source, and an agent is an encoder who relays the encoded oracle message through a noisy communication channel. Alternatively, an agent-patient pair is a verifier and the oracle is a prover who relays a message through a noisy communication channel. All metaphors reflect knowledge transfer subject to uncertainties.

\subsection{Probabilistic theorem proving in networks}

The goal of this section is to prove Theorem~\ref{thm:existence}, repeated below for convenience.

\begin{Theorem}
$\mathrm{IP} \subseteq \mathrm{BraidIP} \left\{ \dep, \chi \right\}$ where:
\begin{equation}
I(c\dep) < \chi  < \frac{1}{I(1-s\dep)},
\end{equation}
with $I(p) \ass -p^{-1} \log p$. The growth rate of $\chi$ is $\mathcal{O}(\frac{1}{\dep})$ as $\dep\to 0$.
\end{Theorem}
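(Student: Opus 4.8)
The plan is to start from an arbitrary language $L\in\mathrm{IP}$ with an interactive protocol having completeness $c$ and soundness $s$ (say $c=2/3$, $s=1/3$, or more aggressively amplified constants), and to build a network $M$ in $\mathrm{BraidIP}\{\dep,\chi\}$ deciding $L$. The basic building block is a single deformed interaction satisfying \eqref{eq:ipdef4}, obtained from the original IP verifier via one of the two constructions (the ``courthouse'' or the ``noisy channel'') of Section~\ref{sec:defip}: each deformed interaction multiplies the conditional convincing probability by the noise factor $\dep$, so $h=c\dep$ when $x\in L$ and $h=s\dep$ when $x\notin L$. The whole point is that a single such interaction is too weak (its $\brak{\mathrm{True}}$ coefficient $c\dep$ may be well below $1/2$), so we chain $\chi$ of them together to amplify.

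The core of the argument is a Chernoff-type amplification estimate. I would fix a single ``focus'' verifier $V$ and arrange a braid in which $V$ is repeatedly the patient of fresh agents $V^1,V^2,\dots,V^\chi$, each of whom holds the (correct) belief dictated by running the underlying IP protocol, and each interaction being independent in the sense stipulated after Figure~\ref{F:rumour2}. Using \eqref{E:BeliefTransition} iteratively, the belief statistics of $V$ after $\chi$ steps is a convex-combination recursion; tracking the coefficient of $\brak{\mathrm{True}}$ in the case $x\in L$ and of $\brak{\mathrm{False}}$ in the case $x\notin L$, one gets that $V$ ends up believing correctly unless it was never ``convinced'' across all $\chi$ rounds (or was convinced the wrong way in the $x\notin L$ case). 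In the $x\in L$ case the probability of never being convinced is at most $(1-c\dep)^{\chi}$, and one needs this below $\tfrac12-\abs{x}^{-\kappa}$, i.e. roughly $\chi\log\frac{1}{1-c\dep}$ large; writing this in terms of $I(p)=-p^{-1}\log p$ and using $\log\frac1{1-c\dep}\gtrsim c\dep$ gives the lower bound $\chi > I(c\dep)$ (up to the precise constant bookkeeping that $I$ is designed to absorb). In the $x\notin L$ case, each interaction can corrupt $V$ with probability at most $s\dep$, and a union/geometric bound over $\chi$ rounds shows the total corruption probability stays below $\tfrac12-\abs{x}^{-\kappa}$ precisely when $\chi I(1-s\dep) < 1$, i.e. $\chi < 1/I(1-s\dep)$; this is the upper constraint in \eqref{eq:bounds}. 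One then checks that the interval $\bigl(I(c\dep),\,1/I(1-s\dep)\bigr)$ is nonempty for the relevant parameter range and contains an integer, and that $\chi$ is polynomial in $\abs{x}$ when $s,c$ are taken as in an $\mathrm{IP}$ verification (e.g. $s=2^{-|x|^a}$, $c=1-2^{-|x|^b}$), so $L\in\mathrm{BraidIP}\{\dep,\chi\}$.

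Finally, the growth rate claim: as $\dep\to 0$, $I(c\dep)=-\frac{\log(c\dep)}{c\dep}\sim \frac{|\log\dep|}{c\dep}$ and likewise $1/I(1-s\dep)$ behaves like $\frac{1}{s\dep}$ up to logarithmic factors, so both endpoints — and hence any valid $\chi$ — scale like $\mathcal{O}(\tfrac1\dep)$ (absorbing logarithmic corrections); I would state this asymptotic and note it matches the Chernoff intuition that $\Theta(1/\dep)$ independent weak interactions are needed to compensate for noise level $\dep$.

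\textbf{Main obstacle.} The delicate part is the soundness (upper) bound: one must verify that stacking $\chi$ deformed interactions does not let the small per-step error probability $s\dep$ accumulate past $\tfrac12-\abs{x}^{-\kappa}$, which forces an \emph{upper} bound on $\chi$ and requires the independence/unsynchronization convention of Figure~\ref{F:rumour2} to be used carefully so that the errors really are independent and a clean product/union bound applies; getting the constants to line up exactly with the function $I(p)=-p^{-1}\log p$ in \eqref{eq:bounds}, and checking the resulting interval is nonempty and integer-populated, is where the real work lies.
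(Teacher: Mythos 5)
Your proposal follows essentially the same route as the paper's proof: a linear chain of $\chi$ deformed interactions all acting on a single patient verifier $W$, with the whole analysis reduced to tracking the geometric quantity $(1-h)^{\chi}$ for $h=c\dep$ (resp.\ $h=s\dep$) and reading off a lower and an upper constraint on $\chi$. The one substantive divergence is the initialization of the agents, and it is exactly the point you deferred as ``constant bookkeeping.'' The paper sets $\brak{W_0}=\brak{\mathrm{False}}$, gives only the \emph{final} agent the belief $\brak{\mathrm{True}}$, and gives every intermediate agent the uniform belief $\tfrac12\brak{\mathrm{True}}+\tfrac12\brak{\mathrm{False}}$; the output statistic is then
\begin{equation*}
(1-h)^{\chi}\brak{\mathrm{False}}+\bigl[1-h-(1-h)^{\chi}\bigr]\bigl(\tfrac12\brak{\mathrm{True}}+\tfrac12\brak{\mathrm{False}}\bigr)+h\,\brak{\mathrm{True}},
\end{equation*}
so the decision criteria are $(1-c\dep)^{\chi}<c\dep$ and $(1-s\dep)^{\chi}>s\dep$, giving the exact window $\frac{\log(c\dep)}{\log(1-c\dep)}<\chi<\frac{\log(s\dep)}{\log(1-s\dep)}$, from which the $I$-function bounds of \eqref{eq:bounds} are extracted as sufficient conditions via $\log(1-p)<-p$. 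In your version, where every agent asserts $\brak{\mathrm{True}}$, the thresholds become $(1-c\dep)^{\chi}<\tfrac12$ and $(1-s\dep)^{\chi}>\tfrac12$, so the admissible upper limit is roughly $\frac{\log 2}{s\dep}$, which for small $\dep$ is \emph{strictly smaller} than $\frac{1}{I(1-s\dep)}\approx\frac{1}{s\dep}$: the interval asserted in the theorem then contains values of $\chi$ for which your configuration fails soundness. Thus your construction does prove $\mathrm{IP}\subseteq\mathrm{BraidIP}$ for some polynomial $\chi$, but it does not yield the specific bounds \eqref{eq:bounds}; recovering those requires the paper's choice of intermediate agent beliefs. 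Two smaller remarks: the agents cannot hold ``the correct belief dictated by running the protocol'' (they do not know whether $x\in L$) --- their beliefs are fixed data, and correctness enters only through $h\in\{c\dep,s\dep\}$, which also makes your independence worry moot since each interaction convinces with probability exactly $h$ by fiat; and your observation that the lower endpoint $I(c\dep)\sim\abs{\log\dep}/(c\dep)$ carries a logarithmic factor beyond $\mathcal{O}(1/\dep)$ is accurate and is a point the paper itself glosses over.
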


\begin{proof}[Proof of {Theorem~\ref{thm:existence}}]
We explicitly construct a configuration of interactions which decide a language $L$ in $\mathrm{IP}$. This configuration, which is illustrated in Figure~\ref{fig:pm} for the case $\chi=4$ is a scaled up version of that in Figure~\ref{fig:pcpm}. It involves $\chi+1$ verifiers $W$ and $V^1,V^2,\ldots,V^\chi$, and $\chi$ interactions. The parameters of all interactions are the same, and are $c,s,\dep$.

\begin{figure}[htb]
\centering
\psfrag{a}[c]{\small $_{\brak{W_0}}$}
\psfrag{b}[c]{\small $_{\brak{V_0^1}}$}
\psfrag{c}[c]{\small $_{\brak{V_0^2}}$}
\psfrag{d}[c]{\small $_{\brak{V_0^3}}$}
\psfrag{e}[c]{\small $_{\brak{V_0^4}}$}
\psfrag{g}[c]{\small $_{\brak{W_4}}$}
\includegraphics[width=0.3\textwidth]{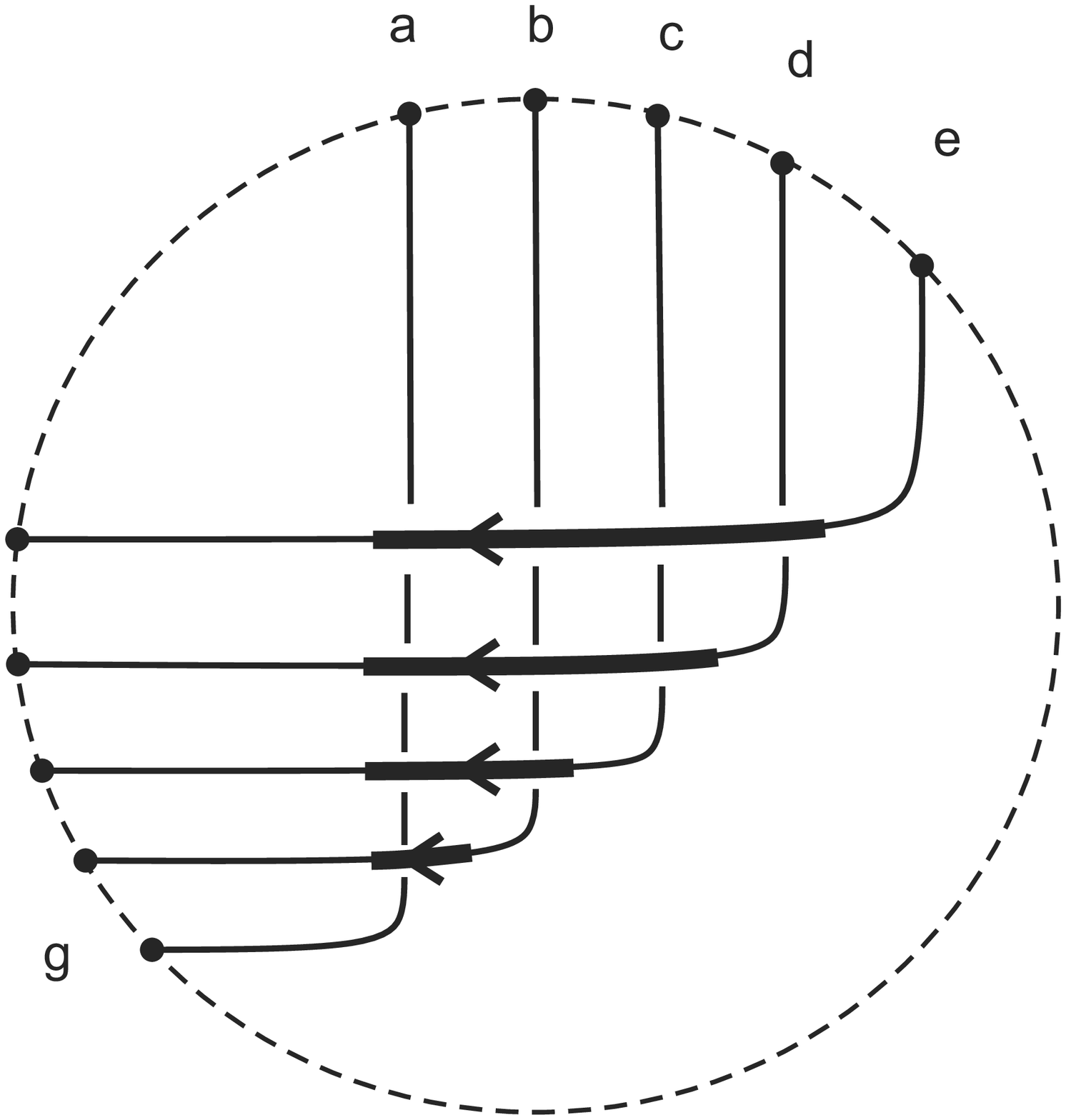}
\caption{\small An interactive $\mathrm{BraidIP}$ theorem proving network with $5$ verifiers $W$, $V^1$, $V^2$, $V^3$, and $V^4$, and $4$ interactions.}
\label{fig:pm}
\end{figure}

Let $L \in \mathrm{IP}$. The initial beliefs at time $t=0$ are set to $\brak{W_0}\ass \brak{\mathrm{False}}$ and
\begin{equation}
\brak{V_0^i} \ass \left \{
\begin{array}{ll}
\frac{1}{2} \brak{\mathrm{False}} + \frac{1}{2} \brak{\mathrm{True}}, & 1 < i < \chi; \\
\brak{\mathrm{True}}, & i=\chi.
\end{array} \right.
\end{equation}

Thus, at time zero there are two verifiers with opposite beliefs and $\chi-1$ verifiers whose initial belief is that the claim $x\in L$ is 50\% true and 50\% false.

Calculating the output statistic $\brak{W_\chi}$ (which occurs at time $\chi$) yields
\begin{equation}
\label{eq:l1}
\brak{W_\chi} \longrightarrow \left \{
\begin{array}{ll}
(1-c\dep)^\chi \brak{\mathrm{False}} + \left[ 1 - c\dep - (1-c\dep)^\chi \right] \left( \frac{1}{2} \brak{\mathrm{False}} + \frac{1}{2} \brak{\mathrm{True}} \right) + c\dep \brak{\mathrm{True}}, & x \in L; \\[0.5ex]
(1-s\dep)^\chi \brak{\mathrm{False}} + \left[ 1 - s\dep - (1-s\dep)^\chi \right] \left( \frac{1}{2} \brak{\mathrm{False}} + \frac{1}{2} \brak{\mathrm{True}} \right) + s\dep \brak{\mathrm{True}}, & x \notin L.
\end{array} \right.
\end{equation}
From \eqref{eq:l1} we see that the configuration in Figure~\ref{fig:pm} decides $L$ if and only if:
\begin{equation}
\label{eq:l2}
\begin{array}{l}
x \in L \longrightarrow (1-c\dep)^\chi < c\dep; \\[0.5ex]
x \notin L \longrightarrow (1-s\dep)^\chi > s\dep.
\end{array}
\end{equation}
From here we obtain the following bounds for $\chi$:
\begin{equation}
\label{eq:gg}
\frac{\log (c\dep)}{\log (1-c\dep)} < \chi < \frac{\log (s\dep)}{\log (1-s\dep)}.
\end{equation}
Equation~\ref{eq:bounds} follows upon noting that $\log(1-p) < -p$ for any $p \in (0,1)$. Therefore:
\begin{equation}
-\frac{1-p}{\log(1-p)} > \frac{\log p}{\log (1-p)} > - \frac{\log p}{p}.
\end{equation}
For $\chi$ within these bounds, the above configuration decides $L$.
\end{proof}

\begin{Remark}
Equation~\eqref{eq:bounds} tells us that $\chi$ has approximately the same growth rate in $\abs{x}$ as $\frac{1}{\delta}$. By definition of $\mathrm{BraidIP}$, $\chi$'s growth rate is polynomial in the word length $\abs{x}$, and so therefore $\dep$ is asymptotically bounded below by approximately one over a polynomial in $\abs{x}$.
\end{Remark}

\section{Efficient IP strategies: Tangled IP}
\label{sec:efficientip}

\subsection{The complexity class $\mathrm{TangIP}$}\label{SS:TangIP}

We may extend class BraidIP by allowing each verifier to have its own `local' time parameter, so that a patient belief $W_t$ may interact with an agent belief $V_s$ for $s\neq t$, and become updated to $W_{t+1}$.
\[
\psfrag{b}[c]{\small $\brak{V_s}$}
\psfrag{a}[c]{\small $\brak{W_t}$}
\psfrag{c}[l]{\small $\brak{W_{t+1}}$}
\includegraphics[width=0.17\textwidth]{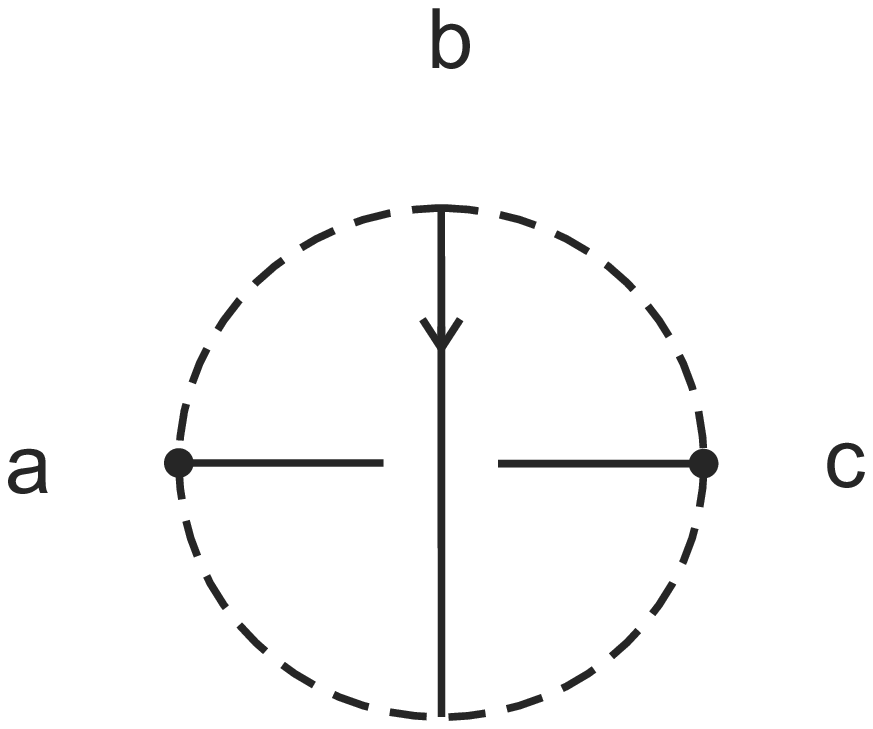}
\]
We also allow verifiers to travel backwards or forwards in time and to update their previous or future beliefs, so that $V_t$ may be updated by agent $V_s$ to become $V_{t+1}$, where $V_t$ and $V_s$ are beliefs of one and the same verifier. Thus, each verifier may update their beliefs with past or future beliefs of itself (via feedback loops) or of other verifiers. We write $M$ for a network of such concatenated interactions, subject to parameters $0<c<s\leq 1$ and $\dep\in \mathds{Q}\cap(0,1)$. An example of such a network is given in Section~\ref{sec:hopf}.

A language $L\subseteq \{0,1\}^\ast$ is said to be \emph{decided} by $M$ if $M$ contains a verifier $V$ whose belief at time $\chi$ is $\brak{V_\chi}\ass a\brak{\mathrm{True}}+ b\brak{\mathrm{False}}$, such that for a fixed constant $\kappa>0$ the inequalities~\eqref{eq:lang} are satisfied.

The class \emph{tangled interactive polynomial time} ($\mathrm{TangIP}$) consists of those languages which are decidable for any fixed $\kappa>0$ by some network $M$ which contains $\chi$ interactions, where $\chi$ is polynomial in $\abs{x}$. We denote this class $\mathrm{TangIP}\{\dep,\chi\}$. 

By Theorem~\ref{thm:existence} we know that
\[\mathrm{IP}\subseteq \mathrm{BraidIP}\subseteq \mathrm{TangIP}.\]

We wonder about the connection between our classes $\mathrm{BraidIP}$ and $\mathrm{TangIP}$ and multi-prover $\mathrm{IP}$ ($\mathrm{MIP}$)~\citep{BenOr:88}. In particular, we wonder whether $\mathrm{MIP}\subseteq \mathrm{BraidIP}$ or $\mathrm{MIP}\subseteq \mathrm{TangIP}$, particularly if we allow different interactions to have different parameters (in this note all interactions are required to have the same parameters because that's all we need, but there is no obstruction to considering the more general case).

\subsection{The Hopf--Chernoff configuration}
\label{sec:hopf}

Consider an $\mathrm{IP}$ system whose soundness $s$ is nearly equal to its completeness $c$ but for a small constant $\epsilon(\abs{x})$ that depends on the word length $\abs{x}$, \textit{i.e.} $c-s = \epsilon(\abs{x})$. As $\epsilon(\abs{x}) \to 0$, the $\mathrm{IP}$ system becomes inefficient in the sense that it accepts every word with probability nearly $c$ regardless of its membership in $L$. Yet we can still construct a deformed $\mathrm{IP}$ system that decides $L$. One may wonder how the number of interactions in such a system is affected by the decreasing gap $\epsilon(\abs{x})$.

 The number of interactions in a network of the form given in Figure~\ref{fig:pm} is implicit in Theorem~\ref{thm:existence}. Fix $\dep\in \mathds{Q}\cap(0,1)$ and note from Equation~\eqref{eq:bounds} that, as $\epsilon(\abs{x})$ decreases, the values bounding $\chi$ become nearly identical. But $\chi$ is an integer, so the two bounds must have an integer between them. In general, that means that the distance between them is at least $1$.  Therefore, the number of interactions grows as $\epsilon(\abs{x})$ decreases. We may need to take $\dep \to 0$ as $\epsilon(\abs{x}) \to 0$. In fact it can be shown that in this case $\dep(\abs{x}) = \mathcal{O}(\epsilon(\abs{x}))$ which means that we require $\chi=\mathcal{O}(1/\epsilon(\abs{x}))$ interactions.

Can fewer interactions decide $L$ ?  The configuration in Figure~\ref{fig:hopfm} decides $L$ for any $\epsilon(\abs{x})$ using substantially less than $\mathcal{O}(1/\epsilon(\abs{x}))$ interactions. This machine is a concatenation of a number of identical smaller configurations of interactions, denoted $M_0,M_1,M_2,\ldots$. When concatenated to form a single configuration, we require approximately $\chi=\mathcal{O}(\log(1/\epsilon(\abs{x})))$ copies of $M_0$ to decide $L$ (the precise argument is given below). If we were to trace its colours (probability generating functions) we would notice that it behaves much like a repetition of a binary random experiment (\textit{e.g.} coin flipping), hence the magnitude of $\chi$. We have named this configuration the \emph{Hopf--Chernoff configuration} suggesting both to its structure and, to some extent, its functionality.

\begin{figure}[htb]
\centering
\psfrag{u}[c]{\small $\mathrm{In}_1^i$}
\psfrag{v}[c]{\small $\mathrm{In}_2^i$}
\psfrag{x}[c]{\small $\mathrm{In}_1^0$}
\psfrag{y}[c]{\small $\mathrm{In}_2^0$}
\psfrag{1}[c]{\small $_{\brak{\alpha}}$}
\psfrag{2}[c]{\small $_{\brak{\beta}}$}
\psfrag{g}[l]{\small $\mathrm{Out}_1^i$}
\psfrag{h}[l]{\small $\mathrm{Out}_2^i$}
\psfrag{f}[c]{\small $M_i$}
\psfrag{m}[c]{\small $M_0$}
\psfrag{n}[c]{\small $M_1$}
\psfrag{k}[c]{\small $M_{\chi}$}
\psfrag{o}[l]{\small $\mathrm{Out}_1^{\chi}$}
\includegraphics[width=0.8\textwidth]{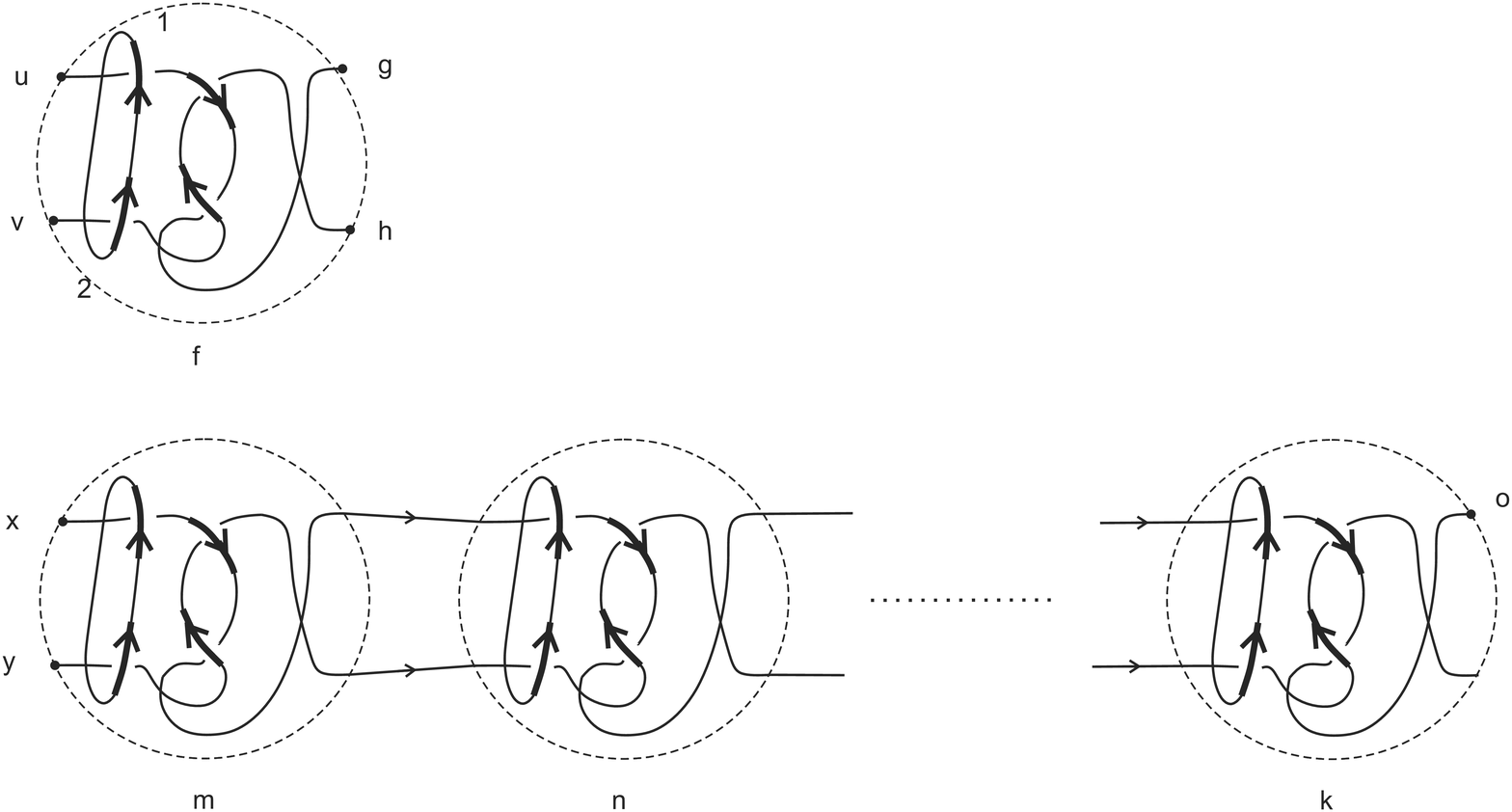}
\caption{\small Hopf--Chernoff configuration(open version).}
\label{fig:hopfm}
\end{figure}

\begin{Theorem}[Hopf--Chernoff configuration]
\label{thm:hopfm}
Consider the configuration of interactions in Figure~\ref{fig:hopfm} which underlies a deformed $\mathrm{IP}$ system with completeness
$c \dep$ and soundness $(c-\epsilon) \dep$, where $\epsilon > 0$. There exists a pair of beliefs, $\alpha$ and $\beta$, independent of any other belief in the machine, such that for any given set of initial beliefs $\mathrm{In}_j^0$ the machine decides any $L \in \mathrm{IP}$ using $\chi=\mathcal{O}(\log(1/\epsilon))$ submachines (and $4\chi$ interactions). In particular, letting
\begin{equation}
\label{eq:beliefsab}
\begin{array}{l}
\brak{\alpha} = \left( \frac{1}{4} + \frac{1}{12} \epsilon \dep \right) \brak{\mathrm{True}} + \left( \frac{3}{4} - \frac{1}{12} \epsilon \dep \right) \brak{\mathrm{False}}; \\[0.5ex]
\brak{\beta} = \left( 1 - \frac{1}{2} c \dep + \frac{1}{12} \epsilon \dep \right) \brak{\mathrm{True}} + \left( \frac{1}{2} c \dep - \frac{1}{12} \epsilon \dep \right) \brak{\mathrm{False}}.
\end{array}
\end{equation}
yields
\begin{equation}
\label{eq:stat}
 \mathrm{Out}_1^\chi \longrightarrow \left \{
\begin{array}{ll}
\left[\frac{1}{2} + \frac{1}{12}\epsilon\dep\right] \brak{\mathrm{True}}+\left[\frac{1}{2} - \frac{1}{12}\epsilon\dep\right]\brak{\mathrm{False}}, & x \in L; \\[0.5ex]
\left[\frac{1}{2} - \frac{1}{12}\epsilon\dep\right] \brak{\mathrm{True}}+\left[\frac{1}{2} + \frac{1}{12}\epsilon\dep\right]\brak{\mathrm{False}}, & x \notin L.
\end{array} \right.
\end{equation}
namely, $\mathrm{Out}_1^\chi = \left[\frac{1}{2} + \frac{1}{12}\epsilon\dep\right] \brak{\mathrm{True}} + \left[\frac{1}{2} + \frac{1}{12}\epsilon\dep\right] \brak{\mathrm{False}}$.
\end{Theorem}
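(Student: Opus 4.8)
The plan is to isolate a single submachine $M_i$ from Figure~\ref{fig:hopfm}, compute the map it induces on belief statistics, recognize the constant beliefs $\alpha,\beta$ of \eqref{eq:beliefsab} as exactly those which turn this map into a contraction with the prescribed fixed point, and then concatenate $\chi$ copies; the logarithmic bound on $\chi$ will appear as the number of copies needed to suppress the influence of the arbitrary seed beliefs $\mathrm{In}_j^0$ below the size of that fixed point's imbalance.

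First I would write $M_i$ out explicitly. It consists of four interactions, each of which, by the transition rule \eqref{E:BeliefTransition}, replaces a patient's statistics $\brak{W}$ by $(1-h)\brak{W}+h\brak{V}$. As in the Remark following \eqref{E:BeliefTransition}, this is really a pair of computations: the coefficient of $\brak{\mathrm{True}}$ is propagated with $h=c\dep$ (the world $x\in L$) and the coefficient of $\brak{\mathrm{False}}$ with $h=(c-\epsilon)\dep$ (the world $x\notin L$), in each case using the corresponding coefficient of $\brak{\alpha}$ and $\brak{\beta}$ from \eqref{eq:beliefsab}. Composing the four interactions in each world yields an affine map $\Phi_h$ sending $(\mathrm{In}_1^i,\mathrm{In}_2^i)$ to $(\mathrm{Out}_1^i,\mathrm{Out}_2^i)$, whose coefficients are explicit rational expressions in $h$, $\dep$, $\epsilon$ and the fixed coefficients of $\alpha$ and $\beta$.

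The heart of the argument is to verify that the values in \eqref{eq:beliefsab} are precisely those for which, on the $\mathrm{Out}_1$ slot, the maps $\Phi_{c\dep}$ and $\Phi_{(c-\epsilon)\dep}$ act as contractions towards the balanced statistic $\tfrac{1}{2}\brak{\mathrm{True}}+\tfrac{1}{2}\brak{\mathrm{False}}$ shifted by $\pm\tfrac{1}{12}\epsilon\dep$: their input-dependent (linear) part is multiplied by a constant $\rho<1$ depending only on $c$ and $\dep$, while their constant part is pinned to the value in \eqref{eq:stat}. Equivalently, $M_i$ \emph{forgets} its inputs and drives the $\mathrm{Out}_1$ imbalance towards a target of the same magnitude but opposite sign in the two worlds --- which is exactly what a pair of beliefs $\alpha,\beta$ \emph{independent of any other belief in the machine} buys us. This is a finite rational identity check, and I expect it to be the main obstacle, because a single pair $\alpha,\beta$ must simultaneously balance the $x\in L$ and the $x\notin L$ computations, leaving little freedom, so that the specific fractions $\tfrac{1}{4}$, $\tfrac{3}{4}$, $\tfrac{1}{2}c\dep$ and $\tfrac{1}{12}\epsilon\dep$ get pinned down by solving the resulting (mildly overdetermined) linear system.

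Finally I would concatenate. Stacking $M_0,M_1,\ldots,M_\chi$ composes the maps $\Phi_h$, so by induction $\mathrm{Out}_1^\chi$ equals the fixed value in \eqref{eq:stat} up to an error of order $\rho^\chi$ inherited from the arbitrary seed $\mathrm{In}_j^0$. Taking $\chi=\mathcal{O}(\log(1/\epsilon))$, which is legitimate since $\rho$ and $\dep$ are fixed, drives this error below $\tfrac{1}{24}\epsilon\dep$, so $\brak{\mathrm{Out}_1^\chi}$ has $\brak{\mathrm{True}}$-coefficient at least $\tfrac{1}{2}+\tfrac{1}{24}\epsilon\dep$ when $x\in L$ and $\brak{\mathrm{False}}$-coefficient at least $\tfrac{1}{2}+\tfrac{1}{24}\epsilon\dep$ when $x\notin L$. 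Since $\dep$ is a fixed rational and $\epsilon=\epsilon(\abs{x})$ is bounded below by an inverse polynomial in $\abs{x}$, this meets the decision condition \eqref{eq:lang} for a suitable $\kappa>0$, so the configuration decides every $L\in\mathrm{IP}$. The resource count is then immediate --- four interactions per submachine times $\chi$ submachines is $4\chi$ interactions --- and $\chi=\mathcal{O}(\log(1/\epsilon))$ is exponentially cheaper than the $\mathcal{O}(1/\epsilon)$ interactions forced on the braided network of Theorem~\ref{thm:existence}, which is the gain in passing from $\mathrm{BraidIP}$ to $\mathrm{TangIP}$.
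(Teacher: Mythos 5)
Your proposal is correct and follows essentially the same route as the paper: the paper likewise writes one submachine as an affine map $\mathrm{Out}=A(h)\,\mathrm{In}+B(h)(\brak{\alpha},\brak{\beta})^{T}$, notes $\lambda_1(A(h))=1-h<1$ so iteration converges to the steady state $(I-A(h))^{-1}B(h)(\brak{\alpha},\brak{\beta})^{T}$, solves the two steady-state equations (for $h=c\dep$ and $h=(c-\epsilon)\dep$) with $\sigma=\tfrac{1}{12}\epsilon\dep$ to pin down \eqref{eq:beliefsab}, and gets $\chi=\mathcal{O}(\log(1/\epsilon))$ from the geometric decay $(1-h)^\chi$. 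Your contraction/fixed-point phrasing and the explicit final comparison with \eqref{eq:lang} are just a restatement and a mild elaboration of that argument.
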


\begin{proof}
Let us begin by writing down the relations between the outputs $\mathrm{Out}_j$ and inputs $\mathrm{In}_j$ of this machine.
Note that
\begin{equation}
\label{eq:ff}
\mathrm{Out}_1 = \left(\mathrm{In}_1^{\brak{\alpha}}\right)^{\left(\mathrm{In}_2^{\brak{\beta}}\right)}, \quad \mathrm{Out}_2 = \left(\mathrm{In}_2^{\brak{\beta}}\right)^{\left(\mathrm{In}_1^{\brak{\alpha}}\right)}.
\end{equation}
Explicitly writing \eqref{eq:ff} using the formal parameter $h$ yields
\begin{equation}
\label{eq:dyn}
\begin{bmatrix}
\mathrm{Out}^i_1 \\
\mathrm{Out}^i_2
\end{bmatrix} =
\underbrace{\begin{bmatrix}
(1-h)^2 & h(1-h) \\
h(1-h) & (1-h)^2
\end{bmatrix}}_{\ass A(h)}
\begin{bmatrix}
\mathrm{In}^i_1 \\
\mathrm{In}^i_2
\end{bmatrix} +
\underbrace{\begin{bmatrix}
h(1-h) & h^2 \\
h^2 & h(1-h)
\end{bmatrix}}_{\ass B(h)}
\begin{bmatrix}
\brak{\alpha} \\
\brak{\beta}
\end{bmatrix}.
\end{equation}

Letting $\mathrm{In}_j^{i+1} = \mathrm{Out}_j^i$, $j=1,2$, equation \eqref{eq:dyn} underlies a linear dynamical system. It is easy to verify
that the eigenvalues of the transition matrix $A(h)$ all are within the unit circle, i.e. $\abs{\lambda(A(h))} < 1$ for any $h > 0$. That means
that the system \eqref{eq:dyn} reaches a steady-state as $i \to \infty$. The steady-state can be obtained as follows. Rewrite
\eqref{eq:dyn} as
\begin{equation}
\begin{bmatrix}
\mathrm{Out}_1 \\
\mathrm{Out}_2
\end{bmatrix} = A(h)
\begin{bmatrix}
\mathrm{Out}_1 \\
\mathrm{Out}_2
\end{bmatrix} + B(h)
\begin{bmatrix}
\brak{\alpha} \\
\brak{\beta}
\end{bmatrix},
\end{equation}
and solve for $\mathrm{Out}_1$ and $\mathrm{Out}_2$. Thus,
\begin{equation}
\label{eq:gg1}
\begin{bmatrix}
\mathrm{Out}_1 \\
\mathrm{Out}_2
\end{bmatrix} =
(I-A(h))^{-1} B(h)
\begin{bmatrix}
\brak{\alpha} \\
\brak{\beta}
\end{bmatrix} =
\frac{1}{3-2h}
\begin{bmatrix}
2(1-h) \brak{\alpha} + \brak{\beta} \\
\brak{\alpha} + 2(1-h) \brak{\beta}
\end{bmatrix}.
\end{equation}

Define
\begin{equation}
\label{eq:ab}
\begin{array}{l}
\brak{\alpha} \ass a \brak{\mathrm{True}} + (1-a) \brak{\mathrm{False}}; \\[0.5ex]
\brak{\beta} \ass b \brak{\mathrm{True}} + (1-b) \brak{\mathrm{False}}.
\end{array}
\end{equation}
For the network to decide $L$ we require the steady-state of $\mathrm{Out}_1$ to satisfy
\begin{equation}
\label{eq:og}
\mathrm{Out}_1 \longrightarrow \left \{
\begin{array}{ll}
(\frac{1}{2} + \sigma)\brak{\mathrm{True}} + (\frac{1}{2}-\sigma) \brak{\mathrm{False}}, & x \in L; \\
\rule{0pt}{12pt}(\frac{1}{2} - \sigma)\brak{\mathrm{True}} + (\frac{1}{2}+\sigma) \brak{\mathrm{False}}, & x \notin L.
\end{array} \right.
\end{equation}
for some $\sigma > 0$. Using both \eqref{eq:gg1} and \eqref{eq:ab} this requirement translates into the following set of equations
\begin{equation}
\label{eq:cc}
\begin{array}{ll}
(3-2c\dep)\left( \frac{1}{2}+\sigma \right) = 2(1-c\dep) a + b, & x \in L; \\[0.5ex]
(3-2(c-\epsilon)\dep)\left( \frac{1}{2}-\sigma \right) = 2(1-(c-\epsilon)\dep) a + b, & x \notin L.
\end{array}
\end{equation}
where the fact that $h=c\dep$ for $x \in L$ and $h=(c-\epsilon)\dep$ for $x \notin L$ has been used. Solving \eqref{eq:cc}
for the coefficients $a$ and $b$ while assuming $\sigma = \frac{1}{12} \epsilon \dep$ yields \eqref{eq:beliefsab}.
The underlying output probabilities in \eqref{eq:stat} are given by \eqref{eq:og}.

To complete the argument we need to show that the network converges within the stated number of iterations. It is sufficient
to consider the case where the output probabilities \eqref{eq:og} are attained to within the order $\mathcal{O}(\sigma) = \mathcal{O}(\epsilon)$.
Growth rate of the system \eqref{eq:dyn} is linear in $\abs{\lambda_1(A(h))}^\chi$ where $\lambda_1(A(h))$ denotes the largest eigenvalue
of $A(h)$. Simple calculation shows that $\lambda_1(A(h)) = 1-h$ which yields $\chi = \mathcal{O}(\log(1/\epsilon))$.
\end{proof}

The Hopf--Chernoff configuration is a recursive structure which is guaranteed to converge irrespective of its initial beliefs $\mathrm{In}_j^0$.
In fact it represents a two-dimensional homogeneous irreducible Markov chain whose rate of convergence is $\mathcal{O}(2^{-\chi})$. Its stationary distribution, which depends on whether $x\in L$ or $x\notin L$, is given by \eqref{eq:stat}. By virtue of its convergence properties we may just let it run forever (\textit{i.e.} $\chi \to \infty$) knowing that it will eventually reach a stationary distribution not far from \eqref{eq:stat}. For that reason we may as well substitute the open network in Figure~\ref{fig:hopfm} with its closed counterpart in Figure~\ref{fig:chopfm}.

\begin{figure}[htb]
\centering
\psfrag{1}[c]{\small $_{\brak{\alpha}}$}
\psfrag{2}[c]{\small $_{\brak{\beta}}$}
\psfrag{g}[c]{\small $\mathrm{Out}_1$}
\psfrag{h}[c]{\small $\mathrm{Out}_2$}
\includegraphics[width=0.3\textwidth]{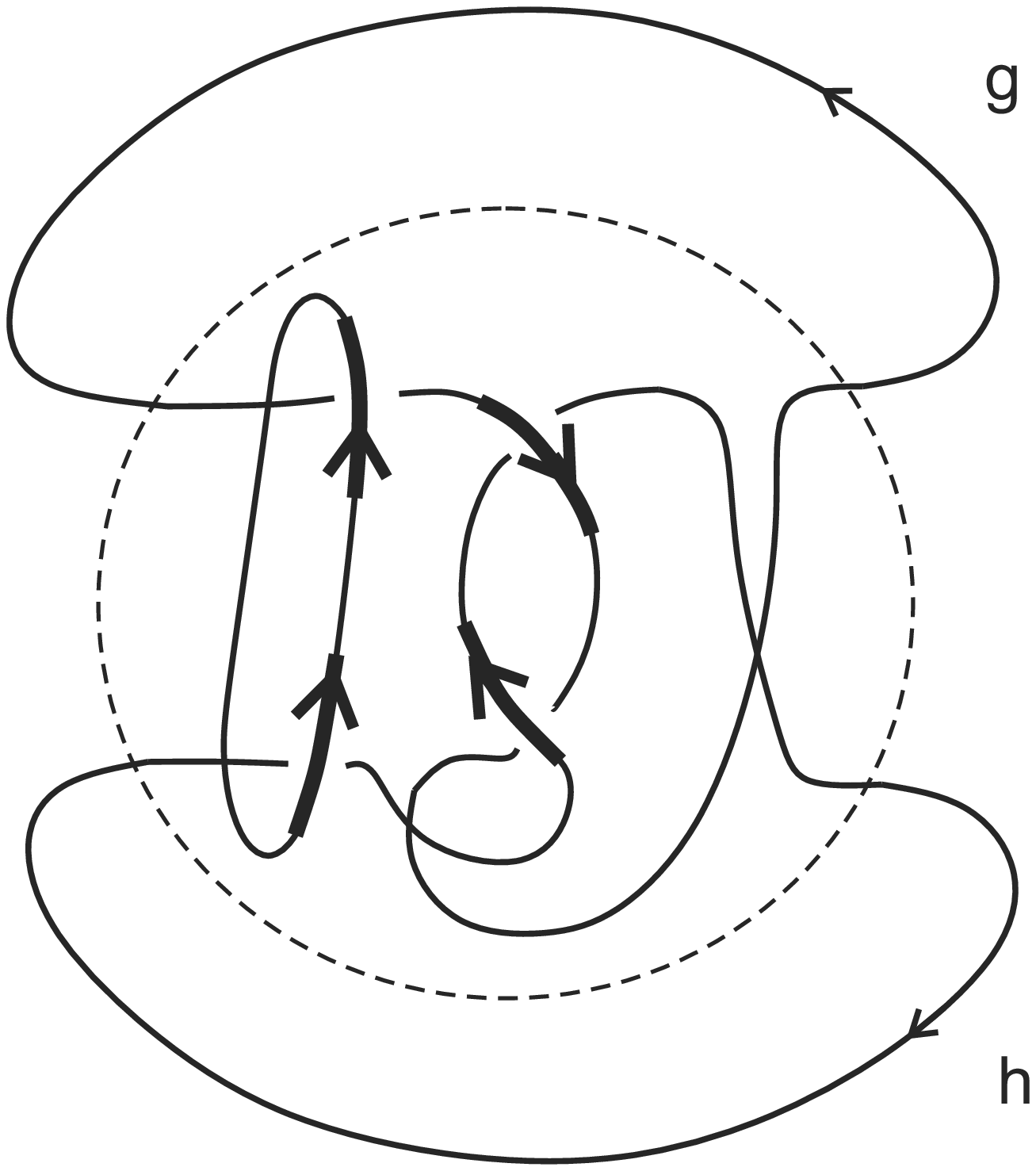}
\caption{\small Hopf--Chernoff configuration (closed version).}
\label{fig:chopfm}
\end{figure}

\section{PCP networks}
\label{sec:pcp}

In this section we specialize to non-adaptive $3$--bit verifiers, to exhibit that tangle machines may exhibit better performance parameters than classical systems.

We deform the H{\aa}stad PCP verifier, which has $c=1$ and $s\approx 0.75$. Suppose the two verifiers disagree, $v_t \ne w_t$, where $v_t,w_t \in \{\brak{\mathrm{True}},\brak{\mathrm{False}}\}$. In this case the interaction proceeds as follows. The patient verifier $W$ provides the addresses of three bits to $\Pi$. These bits are received by the agent verifier $V$ which computes a certificate $\xi_t$, which is equal either to $1$ meaning `accept' or to $-1$ meaning `reject'. The agent then flips one out of the three bits with the following probabilities:
\begin{itemize}
\item $(\xi_t=1)$ $\wedge$ $(v_t=\brak{\mathrm{True}})$ $\longrightarrow$ $V$ flips a bit with probability $1-\dep$.
\item $(\xi_t=1)$ $\wedge$ $(v_t=\brak{\mathrm{False}})$ $\longrightarrow$ $V$ flips a bit with probability $\dep$.
\item $(\xi_t=-1)$ $\wedge$ $(v_t=\brak{\mathrm{True}})$ $\longrightarrow$ $V$ flips no bit.
\item $(\xi_t=-1)$ $\wedge$ $(v_t=\brak{\mathrm{False}})$ $\longrightarrow$ $V$ flips a bit.
\end{itemize}
The corrupted set of bits is then sent back to $W$ who computes her own certificate. This protocol realizes the communication
channel in Figure~\ref{fig:channel}.

\subsection{A better-than-classical PCP verifier}\label{SS:PCPHC}

In this section, we construct a tangle machine whose interactions are deformed H{\aa}stad verifiers, which has perfect accuracy and a completeness of $\frac{2}{3}+\sigma\approx 0.667$. This is worse than the conjectured bound of $0.625$, but better than the best-known classical non-adaptive $3$--bit $\mathrm{PCP}$ protocol, whose soundness is only around $0.741$. Thus, our machine behaves like a single very good verifier.

Our machine makes use of the Hopf--Chernoff configuration in Figure~\ref{fig:hopfm}.

\begin{enumerate}
\item Choose input beliefs $\mathrm{In}_j^0$, $j=1,2$, arbitrarily from $\{\brak{\mathrm{True}},\brak{\mathrm{False}}\}$. Assume $c=1$ and fix $\dep$.
\item Let $\alpha$ be a Bernoulli distribution with parameter $\frac{1}{2}$. Draw a belief from $\alpha$ for the top agent, and set the bottom agent to the negation of that belief. We colour the bottom agent as $\neg \alpha$, which equals $\alpha$ as a distribution, to diagrammatically signify what we are doing.
\item Do the following for $i=1,\ldots, \chi$, where $\chi=\mathcal{O}(1)$:
\begin{itemize}
\item Using the underlying deformed $\mathrm{PCP}$ verifier, perform the four interactions of the Hopf--Chernoff configuration to propagate the beliefs of the two verifiers from $\mathrm{In}_j^i$ to $\mathrm{Out}_j^i$ according to the diagram below.
\[
\psfrag{u}[c]{\small $\mathrm{In}_1^i$}
\psfrag{v}[c]{\small $\mathrm{In}_2^i$}
\psfrag{1}[c]{\small $\alpha$}
\psfrag{2}[c]{\small $\neg\alpha$}
\psfrag{g}[l]{\small $\mathrm{Out}_1^i$}
\psfrag{h}[l]{\small $\mathrm{Out}_2^i$}
\includegraphics[width=0.27\textwidth]{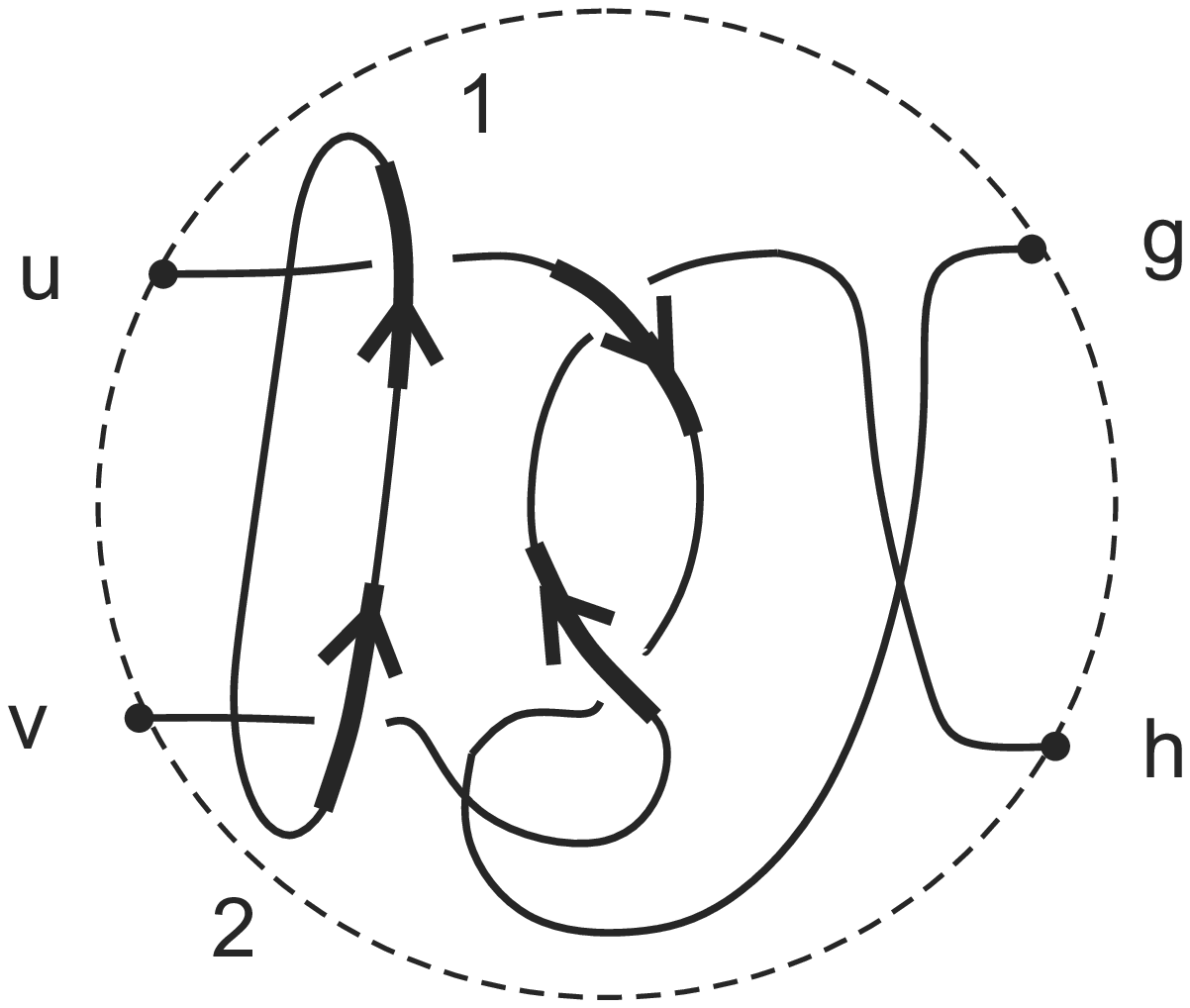}
\]

\item Set $\mathrm{In}_j^{i}=\mathrm{Out}_j^{i-1}$, $j=1,2$.
\end{itemize}
\item If the patient belief in the output $\mathrm{Out}_1^\chi = \neg \alpha$ then return $\brak{\mathrm{True}}$, otherwise return $\brak{\mathrm{False}}$.
\end{enumerate}

\begin{Theorem}
\label{thm:hopfmalg}
The Hopf--Chernoff machine approaches perfect completeness and soundness at most $\frac{1}{3-2s}$ as $\dep \to 1.$ In particular, if we put deformed H{\aa}stad verifiers at interactions, its soundness is at most $\frac{2}{3}$.
\end{Theorem}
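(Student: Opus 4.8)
\emph{Proof proposal.} The plan is to specialize the steady-state analysis behind Theorem~\ref{thm:hopfm} to the agents prescribed in the algorithm of Section~\ref{SS:PCPHC} --- the top agent carries a realization of $\alpha\sim\mathrm{Bernoulli}(\tfrac12)$ and the bottom agent carries its negation $\neg\alpha$ --- and then to read off the completeness and soundness of the resulting decider and send $\dep\to1$. The one genuine subtlety is that $\alpha$ and $\neg\alpha$ are perfectly \emph{anti}-correlated, so the belief-statistics shorthand of \eqref{eq:lowerbounds1}--\eqref{E:BeliefTransition}, which tacitly assumes independence of the strands it combines, cannot be applied to these two agents verbatim. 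The remedy is to re-coordinatise: for every register $X$ in the machine, track the scalar $q(X)\ass\Pr(X=\neg\alpha)$, the probability being over $\alpha$ together with all the interaction coins. Since the only fresh randomness produced at an interaction is its switching coin, and that coin is independent of $\alpha$, the update rule \eqref{E:BeliefTransition} descends to the linear recursion $q(W_{t+1})=(1-h)q(W_t)+h\,q(V_t)$ \emph{regardless of any correlation} among $W_t$, $V_t$ and $\alpha$. Hence the block recursion \eqref{eq:dyn} and its fixed point \eqref{eq:gg1} remain valid with the agent belief statistics $\brak\alpha,\brak\beta$ replaced by the scalars $q(\alpha)=\Pr(\alpha=\neg\alpha)=0$ and $q(\beta)=\Pr(\neg\alpha=\neg\alpha)=1$.

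Substituting $0$ for $\brak\alpha$ and $1$ for $\brak\beta$ in \eqref{eq:gg1} gives the steady state $q(\mathrm{Out}_1)=\frac{1}{3-2h}$ (and $q(\mathrm{Out}_2)=\frac{2(1-h)}{3-2h}$), attained exactly by the closed configuration of Figure~\ref{fig:chopfm} and up to an $\mathcal{O}\bigl((1-h)^{\chi}\bigr)$ error by the open configuration of Figure~\ref{fig:hopfm} with $\chi=\mathcal{O}(1)$ blocks. The decision rule returns $\brak{\mathrm{True}}$ exactly when $\mathrm{Out}_1^\chi=\neg\alpha$, so the machine's acceptance probability is precisely $q(\mathrm{Out}_1^\chi)$. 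Running the belief-statistics calculus of Section~\ref{S:DeformingIP} in the usual way --- with the switching probability instantiated to $h=c\dep$ under the hypothesis $x\in L$ and to $h=s\dep$ under $x\notin L$ (a lower and an upper bound, respectively, on the true switching probability by \eqref{eq:ipdef4}), and using that $h\mapsto\frac{1}{3-2h}$ is increasing --- I would conclude that the machine has completeness at least $\frac{1}{3-2c\dep}$ and soundness at most $\frac{1}{3-2s\dep}$. Since $c>s$, completeness strictly exceeds soundness, so this is a legitimate decider.

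It remains to let $\dep\to1$. The completeness bound tends to $\frac{1}{3-2c}$, which equals $1$ exactly when the underlying verifier has perfect completeness $c=1$; together with the trivial upper bound $1$ this shows the Hopf--Chernoff machine \emph{approaches perfect completeness}. The soundness bound tends to $\frac{1}{3-2s}$, and moreover $\frac{1}{3-2s\dep}\le\frac{1}{3-2s}$ for every $\dep\in(0,1]$ because $s\dep\le s<\tfrac32$, so the soundness is \emph{at most} $\frac{1}{3-2s}$. Specializing to deformed H{\aa}stad verifiers, for which $c=1$ and $s\approx\tfrac34$, then yields completeness approaching $1$ and soundness at most $\frac{1}{3-2s}=\tfrac23$, as claimed (and $\tfrac23\approx0.667$ lies below the best classical non-adaptive $3$--bit $\mathrm{PCP}$ soundness $\approx0.741$ yet above the conjectured floor $\tfrac58$).

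I expect the only real obstacle to be the conceptual point flagged above: recognising that the belief-statistics bookkeeping used throughout the paper is legitimate here only after changing to the ``(dis)agreement with the random agent $\alpha$'' coordinate, since $\alpha$ and $\neg\alpha$ are dependent. Two minor points also need care: that one and the same realization of $\alpha$ is reused at every interaction of every block, so that $q$ is a well-defined global functional; and that the input registers $\mathrm{In}_j^0$, being chosen independently of $\alpha$, satisfy $q(\mathrm{In}_j^0)=\tfrac12$, a value that does not affect the fixed point since $h\mapsto(1-h)$ is a contraction. Beyond these, the argument is a direct specialization of Theorem~\ref{thm:hopfm} together with a one-line limit.
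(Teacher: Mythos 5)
Your proposal is correct and follows essentially the same route as the paper: both arguments specialize the steady-state formula \eqref{eq:gg1} of Theorem~\ref{thm:hopfm} to agents coloured $\alpha$ and $\neg\alpha$, read off the acceptance probability $\frac{1}{3-2h}$, instantiate $h=c\dep$ for completeness and $h=s\dep$ for soundness, and let $\dep\to 1$. The only presentational difference is that the paper handles the dependence between the two agents by conditioning on the realization of $\alpha$ and averaging the two conditional bounds (its equations \eqref{eq:pr1}--\eqref{eq:pr3}), whereas you track the single coordinate $\Pr(X=\neg\alpha)$ --- two equivalent applications of the law of total probability.
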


\begin{proof}\hfill
\begin{description}
\item[Completeness] Assume that $x \in L$ and note that as $\dep \to 1$, so does $h=c\dep \to 1$. In the limit where $h=1$ note that by \eqref{eq:gg} the Hopf--Chernoff swaps the beliefs $\alpha$ and $\neg \alpha$ such that always $\mathrm{Out}_1=\neg\alpha$ and $\mathrm{Out}_2=\alpha$. Thus step (iv) of the algorithm concludes with $\brak{\mathrm{True}}$.

\item[Soundness] The soundness of the algorithm bounds the probability that $\mathrm{Out}_1 = \neg \alpha$ in case where $x \notin L$. This probability is
given by
\begin{multline}
\label{eq:pr1}
\Pr(\mathrm{Out}_1 = \neg \alpha) = \Pr(\mathrm{Out}_1 = \brak{\mathrm{True}} \mid \alpha = \brak{\mathrm{False}})\Pr( \alpha = \brak{\mathrm{False}})
\\
+ \Pr(\mathrm{Out}_1 = \brak{\mathrm{False}} \mid \alpha = \brak{\mathrm{True}})\Pr( \alpha = \brak{\mathrm{True}}).
\end{multline}
Although not truly essential, the algorithm assumes $\Pr( \alpha = \brak{\mathrm{True}}) = \frac{1}{2}$. The conditional probabilities above can be bounded using \eqref{eq:gg} as follows. Take $h=s\dep \to s$ and assume that $\brak{\alpha}=\brak{\mathrm{False}}$. In this case \eqref{eq:gg} implies:
\begin{equation}
\label{eq:pr2}
\Pr(\mathrm{Out}_1 = \brak{\mathrm{True}} \mid \alpha = \brak{\mathrm{False}}) \le \frac{1}{3-2s} \enspace.
\end{equation}
On the other hand, letting $\brak{\alpha}=\brak{\mathrm{True}}$, the same equation reads:
\begin{equation}
\label{eq:pr3}
\Pr(\mathrm{Out}_1 = \brak{\mathrm{False}} \mid \alpha = \brak{\mathrm{True}}) \le \frac{1}{3-2s}\enspace .
\end{equation}
This follows from the fact that the algorithm decides $x \in L$ if $\mathrm{Out}_1 = \neg \alpha$ irrespective of the beliefs themselves.
For that reason these equations coincide, though for different values of $\alpha$ and $\mathrm{Out}_1$. Both describe a failure of the algorithm to decide $x \notin L$. The theorem now follows from \eqref{eq:pr1}, \eqref{eq:pr2} and \eqref{eq:pr3}.
\end{description}
\end{proof}

\section{Low-dimensional topology and bisimulation}
\label{sec:lowdim}

\begin{Definition}\label{D:Bisimulation}
A tangle machines $M$ and $M^\prime$ which each come equipped with a distinguished set of input and output registers are \emph{bisimilar} if any computation that can be carried out on $M$ can be carried out on $M^\prime$ and vice versa.
\end{Definition}

Because computations are defined only with respect to the pre-chosen sets of input and of output registers, Definition~\ref{D:Bisimulation} encapsulates what may be thought of as a \emph{weak} notion of bisimulation (no requirement is made on `silent' or `internal' interactions).

In Section~\ref{SS:Equivalence} we formulate a set of local moves, such that any two machines related by these local moves are bisimilar. In Section~\ref{SS:reversibility} we discuss a feature of our formalism, that is the `unsplittability' of our agent registers. In Section~\ref{SS:ZeroKnowledge} we suggest an application of machine equivalence to define a notion of \emph{zero knowledge} for tangle machines and to utilize it to construct $\mathrm{TangIP}$ machines which are `more secure' in a specific sense. We give an example in Section~\ref{SS:EquivExample}. Finally, in Section~\ref{SS:EquivWye}, we extend our notion of machine equivalence to machines which may have wyes.

\subsection{Equivalence}\label{SS:Equivalence}
The key property of tangle machines is that they admit a local notion of equivalence \citep{CarmiMoskovich:15}. Two (quandle coloured, without wyes) tangle machines are \emph{equivalent} if they are related by a finite sequence of the moves in Figures~\ref{F:local_moves_machines} and~\ref{F:local_moves_machines1}. It is forbidden for these moves to involve input and output registers of a computation.

\begin{figure}[htb]
\centering
\psfrag{a}[c]{\small $\trr$}
\psfrag{b}[c]{\small $\rrt$}
\psfrag{V}[c]{\small \emph{I1}}
\psfrag{x}[c]{\small $x$}
\psfrag{T}[c]{\small \emph{VR1}}\psfrag{R}[c]{\small \emph{VR2}}\psfrag{S}[c]{\small \emph{VR3}}
\psfrag{Q}[c]{\small \emph{SV}}\psfrag{D}[c]{\small \emph{I2}}\psfrag{E}[c]{\small \emph{FM1}}\psfrag{F}[c]{\small \emph{FM2}}\psfrag{C}[c]{\small \emph{I3}}
\psfrag{Y}[c]{\small \emph{ST}}
\psfrag{X}[c]{\small \emph{ST}}
\includegraphics[width=0.85\textwidth]{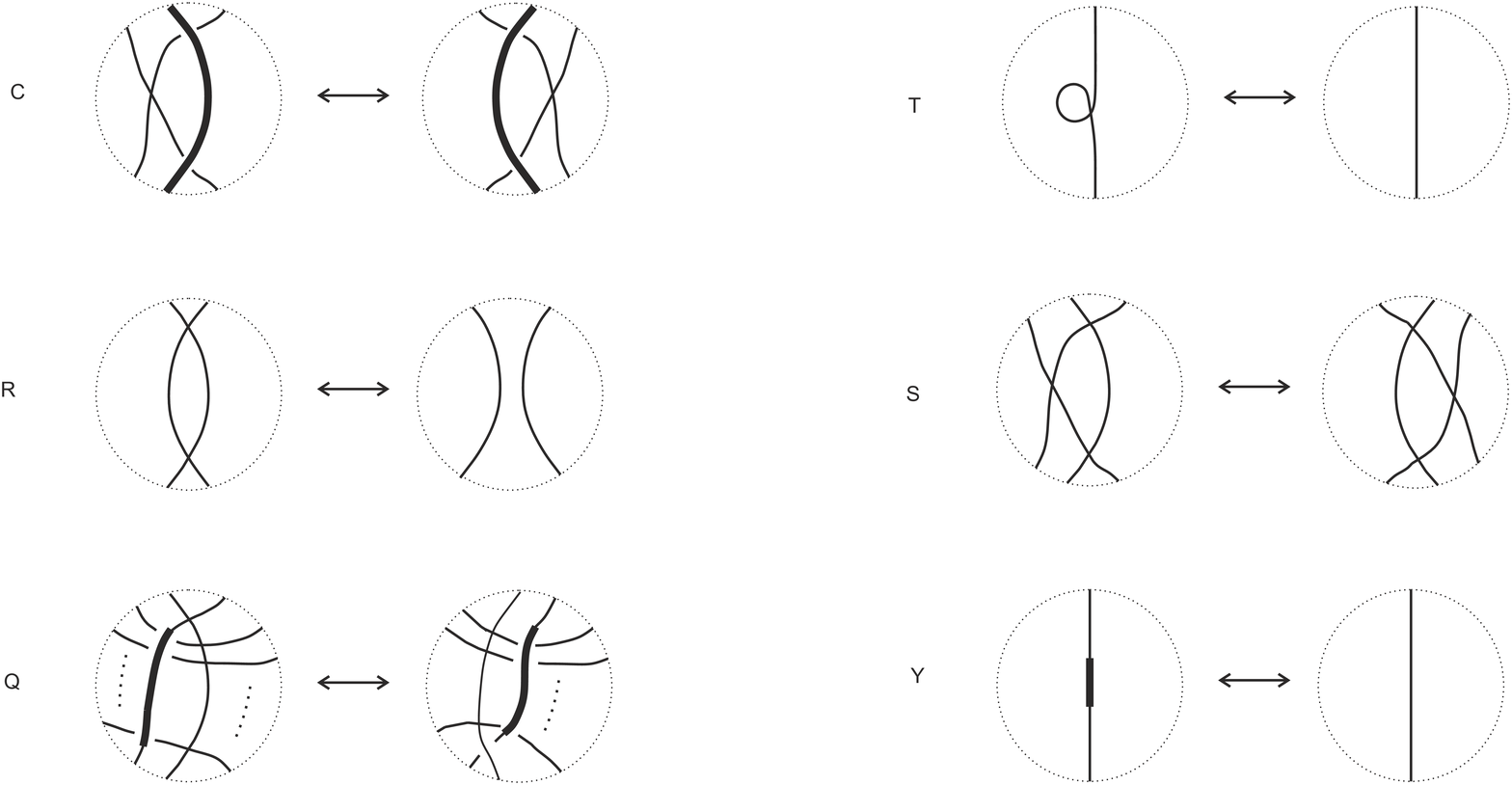}
\caption{\label{F:local_moves_machines} \small Cosmetic moves for machines. Directions are not indicated, meaning that the moves are valid for any directions, and the same for colourings.}
\end{figure}

\begin{figure}[htb]
\centering
\psfrag{T}[c]{\small \emph{VR1}}\psfrag{R}[c]{\small \emph{VR2}}\psfrag{S}[c]{\small \emph{VR3}}
\psfrag{Q}[c]{\small \emph{SV}}\psfrag{D}[c]{\small \emph{R1}}\psfrag{A}[c]{\small \emph{R2}}\psfrag{B}[c]{\small \emph{R3}}\psfrag{C}[c]{\small \emph{UC}}
\psfrag{X}[c]{\small \emph{ST}}
\includegraphics[width=0.85\textwidth]{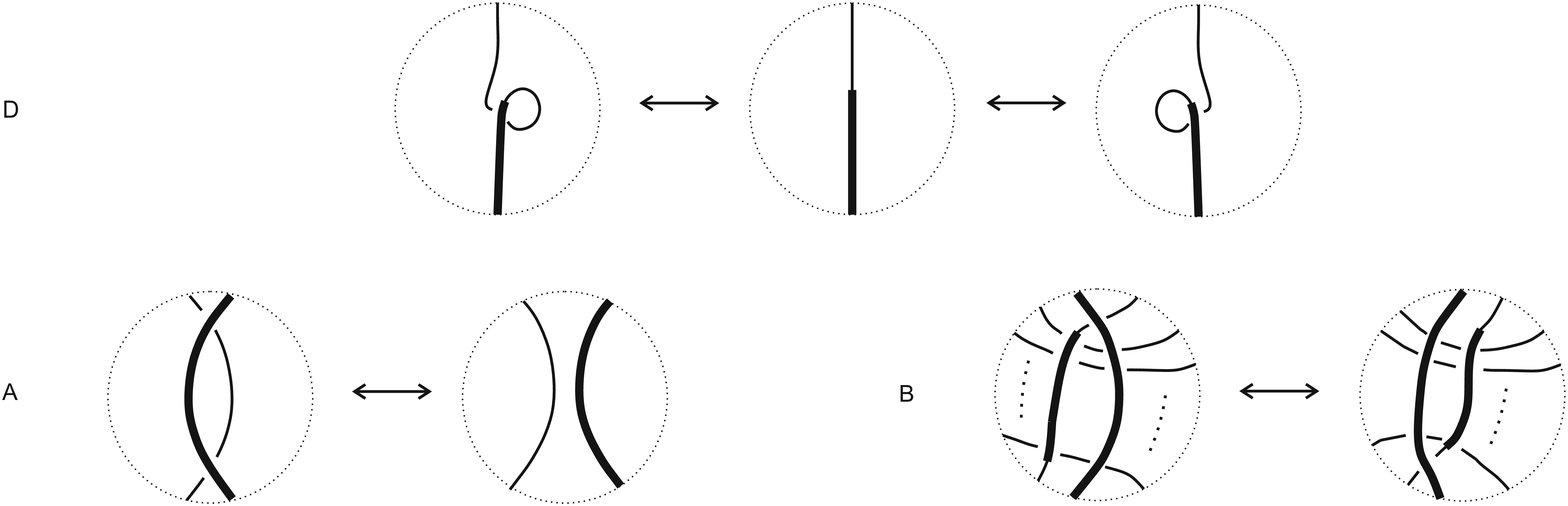}
\caption{\label{F:local_moves_machines1} \small Reidemeister moves for machines, valid for any directions of the agents. It is forbidden for these moves to involve input and output registers of a computation.}
\end{figure}

Two machines related by the local moves in Figure~\ref{F:local_moves_machines} carry out identical computations, and it follows from the construction of an interaction that two machines related by $\mathrm{R1}$ differ only by a trivial computation at which `nothing happens'. Thus, the interesting moves for us are $\mathrm{R2}$ and $\mathrm{R3}$ in Figure~\ref{F:local_moves_machines1}, which we will say more about later on.

\begin{Remark}
First note that, for $\mathrm{R2}$ to make sense, all participating colours must be defined. This requirement is non-trivial for a machine coloured by a quandloid.
\end{Remark}

\begin{Remark}
For a machine coloured by a quagma, the $\mathrm{R3}$ move is replaced by the following

\begin{equation}
\psfrag{b}[c]{\small $_\brr$}
\psfrag{a}[c]{\small $\trr$}
\psfrag{c}[c]{\small R3}
\includegraphics[width=0.4\textwidth]{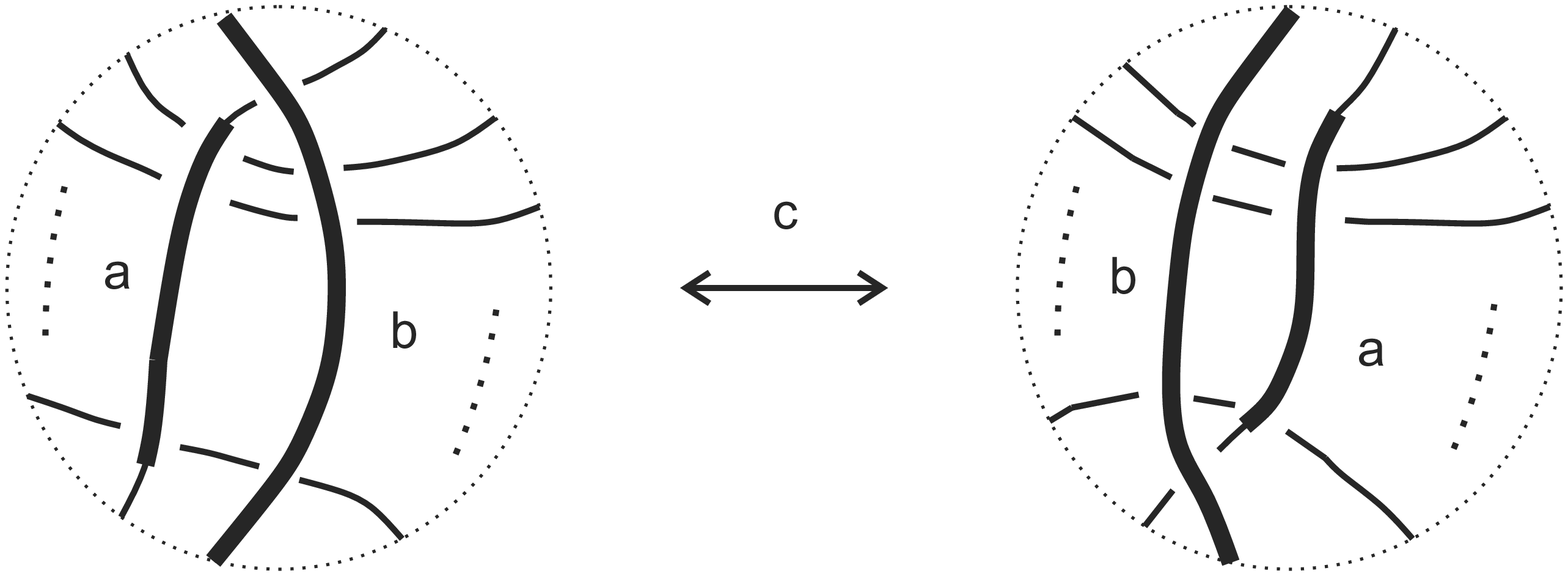}
\end{equation}

for all $\trr,\brr\in B$ satisfying $(x \trr y)\brr z= (x\brr z)\trr (y\brr z)$ for all $x,y,z\in Q$.
\end{Remark}

If we choose input and output registers to be machine endpoints, then equivalent machines have identical initial and terminal beliefs which implies that both machines have the same computational power in terms of deciding a language. Nevertheless, the local behaviour of equivalent may be different, in that the colours of intermediate interactions in between the same initial and terminal statistics may be different in equivalent machines. As in the earlier example in Figure~\ref{fig:pcpm}, equivalent machines may two different prover strategies arriving at the same proof.

To expand that example, Figure~\ref{F:pm_rmoves} features several equivalent prover strategies for the machine in Figure~\ref{fig:pm} all which are obtained by application of $\mathrm{R3}$ moves.

\begin{figure}
\centering
\psfrag{a}[c]{\small $_{_{\brak{W_0}}}$}
\psfrag{b}[c]{\small $_{_{\brak{V_0^1}}}$}
\psfrag{c}[c]{\small $_{_{\brak{V_0^2}}}$}
\psfrag{d}[c]{\small $_{_{\brak{V_0^3}}}$}
\psfrag{e}[c]{\small $_{_{\brak{V_0^4}}}$}
\psfrag{g}[c]{\small $_{_{\brak{W_4}}}$}
\includegraphics[width=0.8\textwidth]{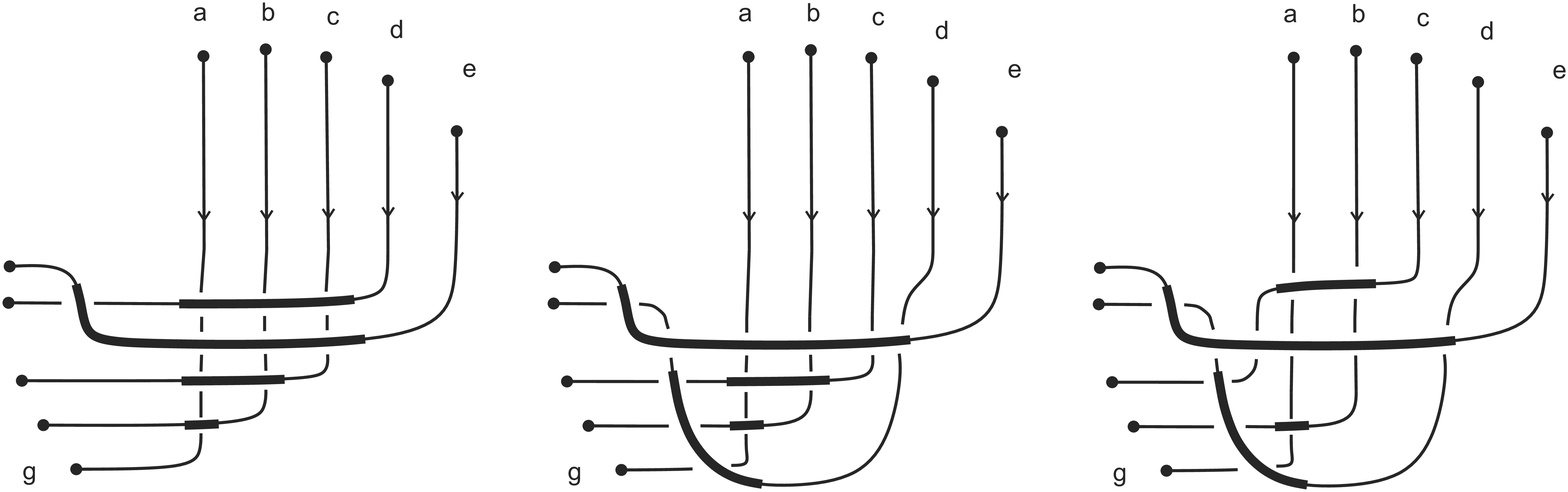}
\caption{\label{F:pm_rmoves} Equivalent prover strategies for the machine in Figure~{\ref{fig:pm}}.}
\end{figure}

\subsection{The single agent in R2 and R3}\label{SS:reversibility}

In this section we discuss the single agent which acts on numerous patients and cannot be split. Such an agent features in Moves $\mathrm{R2}$ and $\mathrm{R3}$, and distinguishes our approach \textit{e.g.} from w--tangles \citep{BarNatanDancso:13}.

The $\mathrm{R2}$ move tells us that computations are reversible, in the sense that any operation $\trr \in B$ has an inverse operation $\rrt \in B$ such that no information is computed from $(x\trr y)\rrt y$ for any $x,y\in Q$. Because we are working not only with colours but with realizations of belief statistics, we are saying more than just $(x\trr y)\rrt y=x$. We require that there be zero knowledge gain about realizations of $(x\trr y)\rrt y$ from a realization of $x$.

\begin{equation}
\label{eq:r2}
\psfrag{a}[c]{\small $y$}
\psfrag{c}[c]{\small $y$}
\psfrag{e}[c]{\small $y$}
\psfrag{h}[c]{\small $y$}
\psfrag{b}[c]{\small $x$}
\psfrag{d}[c]{\small ${x}$}
\psfrag{f}[c]{\small $x$}
\psfrag{I}[l]{\small ${{(x \trr y) \rrt y}}$}
\includegraphics[width=0.38\textwidth]{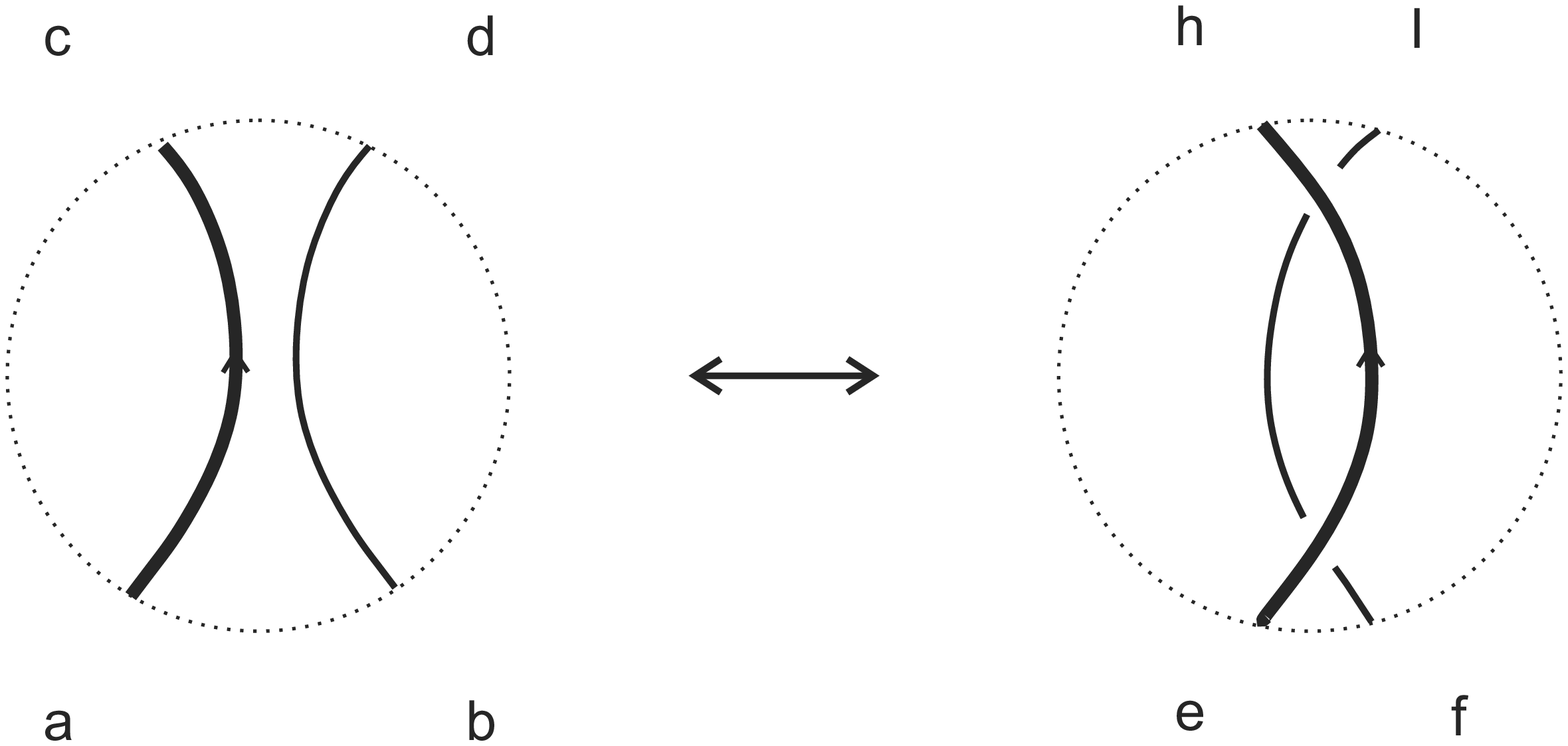}
\end{equation}

Our formalism features agents that act on multiple patients. These actions are independent by definition. Conversely, as we saw in Section~\ref{sec:hopf}, different agents may cooperate, for example by coordinating their realizations to be the same or to be opposed to one another. 


We do not impose the following a-priori reasonable generalization of $\mathrm{R2}$.

\begin{equation}\label{E:void_r2}
\includegraphics[width=0.38\textwidth]{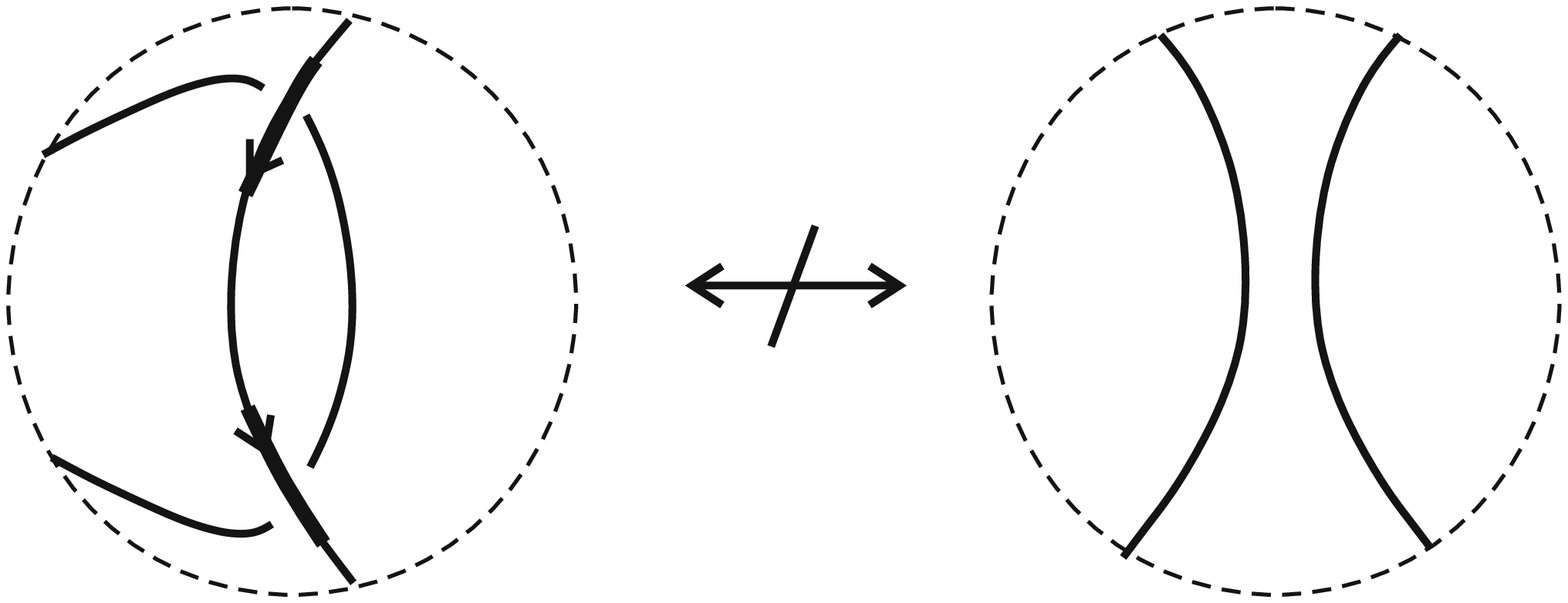}
\end{equation}

One reason that we do not impose \eqref{E:void_r2} can be seen by considering the example in which the two agents in the machine on the left-hand side adopt the strategy of always offering the same realization. If the realization of the patient and of the agent coincide, we can compute the realizations of both other patients. If not then we cannot. Thus we can compute the colours of the remaining patients in \eqref{E:void_r2} for some realizations but not for others. This behaviour is not shared by the machine on the right hand side, in which colours the realizations of patients can never be computed unless they are already given. If the choice of realization is independent for both patients, \textit{i.e.} if there is only a single interaction, as in the case of `honest' $\mathrm{R2}$, there is no such phenomenon, and no choice of realizations for input registers is distinguished from any other. Note also that \eqref{E:void_r2} represents two distinct computations, each of which can be considered separately and each of which is non-trivial, which is not true for the right-hand side of the `honest' $\mathrm{R2}$.

For the same reason, we do not impose the following `fake $\mathrm{R3}$ move':

\begin{equation}\label{E:void_r3}
\includegraphics[width=0.38\textwidth]{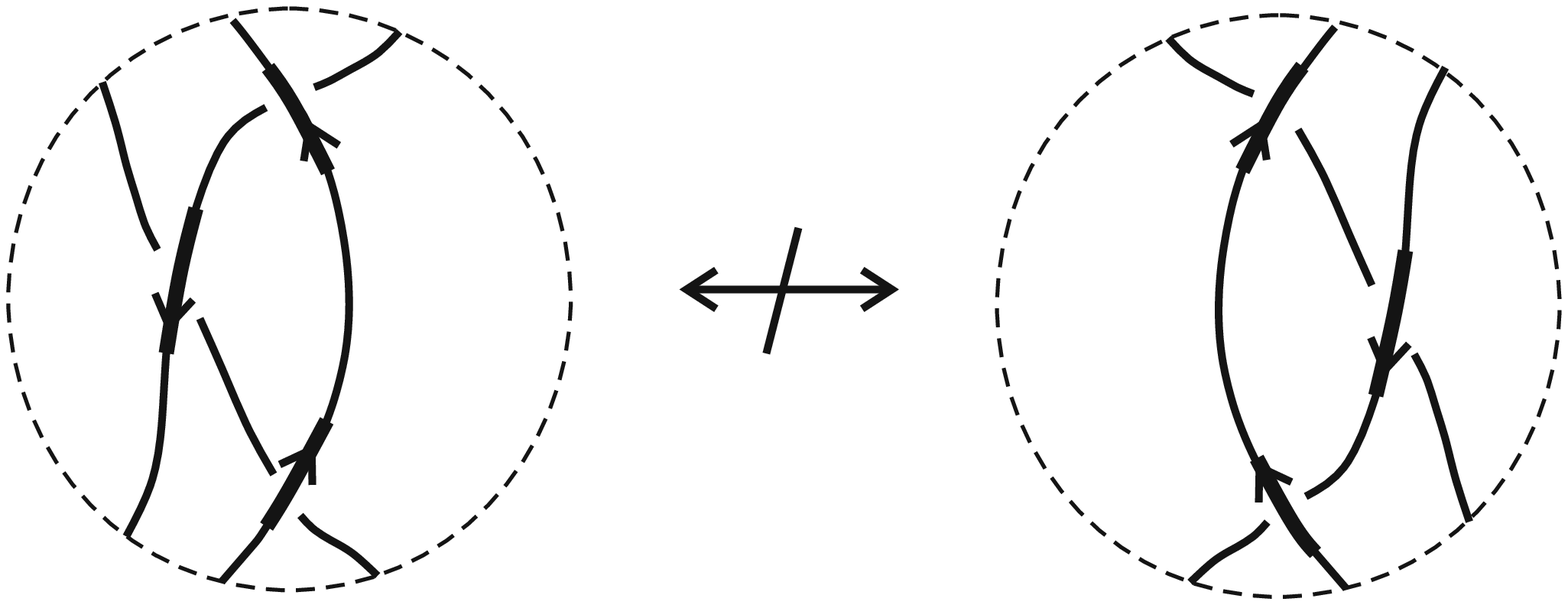}
\end{equation}

\subsection{Zero knowledge}\label{SS:ZeroKnowledge}

The theory of $\mathrm{IP}$ features the notion of a zero-knowledge proof \citep{Goldwasser:89,zkp:2010}. In a zero-knowledge proof, the information that may be gained by the verifier in the course of her interactions with the prover are restricted. This is useful when the verifier may not always be trustworthy. The definition makes use of a simulator which is an arbitrary feasible algorithm that is able to reproduce the transcript of such an interaction without ever interacting with the prover.

We suggest the following definition as a $\mathrm{TangIP}$ analogue to the notion of zero knowledge.

\begin{Definition}[Zero knowledge tangle machine]\label{D:ZeroKnowledge} 
A tangle machine $M$ that decides a language $L\in \mathrm{TangIP}$ is said to be \emph{zero knowledge} if the following is satisfied.
\begin{enumerate}
\item There are no intermediate interactions in $M$ that decides $L$.
\item There exists an equivalent machine $M^\prime$ which decides $L$ at one of its intermediate interactions.
\end{enumerate}
\end{Definition}

\begin{Remark}
The idea of zero knowledge tangled $\mathrm{IP}$ parallels the authors' model of fault-tolerant information fusion networks, except that there we wanted intermediate registers to `know as much as possible' whereas here we want them to `know as little as possible' \citep{CarmiMoskovich:14}.
\end{Remark}

As a generic example, consider machines $M^\prime$ and  $M$ in Figure~\ref{fig:zero}. Both share the same initial and terminal belief statistics. The explicit structure of the machines is mostly irrelevant except that they both contain a submachine $S$ which we graphically represent by a blank disk, with the property that $M$, $M^\prime$, and some of $S$'s terminal statistics $\brak{Y_i}$ decide $L$. In $M$, the verifier $Z$ is an agent to all initial states of $S$ and the resulting beliefs from this interaction are $\brak{X_i}$, $i=1,2,\ldots,$. In $M^\prime$ the same verifier is an agent to the terminal states of $S$ and the resulting beliefs from this interaction are characterized by $\brak{Y_i}$.

That $M$ is zero-knowledge implies the proof should not appear somewhere within it. Assuming none of the $\mathrm{In}'s$ decide $L$, and neither do any intermediate belief states of $S$, this requirement implies that none of the $\brak{X_i}=(\mathrm{In}_i)^{\brak{Z}}$ decide $L$. On the other hand, the machine $M^\prime$ shows us that an interaction between the terminal states of $S$, some of which decide $L$, with the agent $Z$ are able to produce the proof. That is, some of $\mathrm{Out}_i = \brak{Y_i}^{\brak{Z}}$ decide $L$. These requirements completely characterize the belief distribution $\brak{Z}$. Perhaps unsurprisingly, it turns out that $\brak{Z} = \frac{1}{2} \brak{\mathrm{True}} + \frac{1}{2} \brak{\mathrm{False}}$.

Ideally we would require that $S$ reproduces the proof as if it was produced by $M$ itself. This requirement translates into
\begin{equation}
\label{eq:cd}
\brak{Y_i} = \mathrm{Out}_i,
\end{equation}
for any $\mathrm{Out}_i$ that decides $L$. As both sides of \eqref{eq:cd} depend on the deformation parameter $\dep$, this equation may be
used to determine $\dep$ such that $M$ is zero-knowledge. Below we give an example.

\begin{figure}[htb]
\centering
\psfrag{m}[c]{\small $M$}
\psfrag{n}[c]{\small $M^\prime$}
\psfrag{a}[c]{\small $\mathrm{In}_i$}
\psfrag{b}[c]{\small $\mathrm{Out}_i$}
\psfrag{s}[c]{\small $S$}
\psfrag{c}[c]{\small $_{\brak{X_i}}$}
\psfrag{d}[c]{\small $_{\brak{Z}}$}
\psfrag{e}[c]{\small $_{\brak{Y_i}}$}
\includegraphics[width=0.8\textwidth]{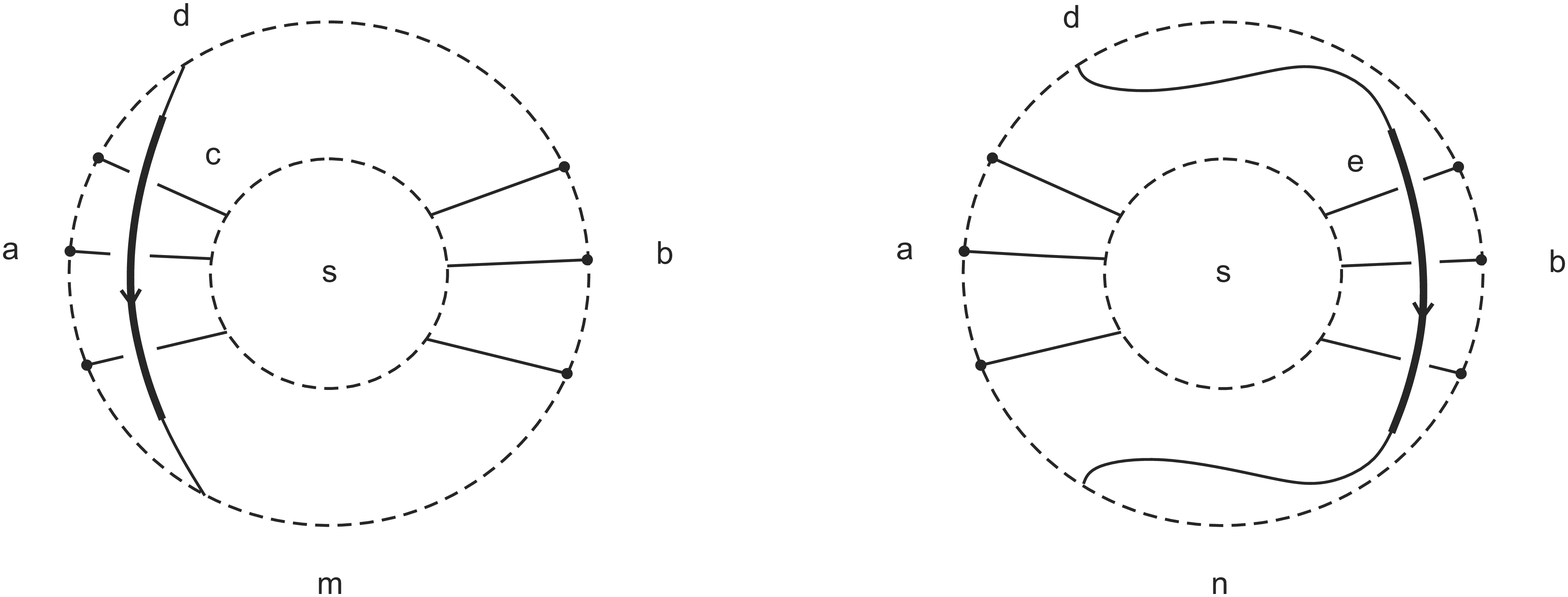}
\caption{\small Two equivalent machines. The machine $M$ is zero-knowledge. The proper submachine $S$ determines $L$.}
\label{fig:zero}
\end{figure}

\subsection{Example}\label{SS:EquivExample}

Consider the two machines in Figure~\ref{fig:zkm}. Assume they both employ a deformed IP system whose completeness and soundness are $\dep$ and $\frac{1}{2} \dep$. Let $\brak{Z}=\frac{1}{2}\brak{\mathrm{False}}+\frac{1}{2}\brak{\mathrm{True}}$ and let us first see what the value of $\dep$ is for the machine to decide $L$. The output $\brak{X_2}$ of either machines is given by
\begin{equation}
\brak{X_2} \longrightarrow \left \{ \begin{array}{ll}
\left[(1-\dep)^2+\frac{1}{2}\dep\right] \brak{\mathrm{False}} + \left[\dep(1-\dep)+\frac{1}{2}\dep\right] \brak{\mathrm{True}}, & x \in L; \\[0.5ex]
\left[(1-\frac{1}{2}\dep)^2+\frac{1}{4}\dep\right] \brak{\mathrm{False}} + \left[\frac{1}{2}\dep(1-\frac{1}{2}\dep)+\frac{1}{4}\dep\right] \brak{\mathrm{True}}, & x \notin L.
\end{array} \right.
\end{equation}
from which we conclude that $\dep > \frac{1}{2}$. The machine on the right in this figure is zero-knowledge because $\brak{X_1}$ does
not decide $L$:
\begin{equation}
\brak{X_1} \longrightarrow \left \{ \begin{array}{ll}
\left[1-\frac{1}{2}\dep\right] \brak{\mathrm{False}} + \frac{1}{2}\dep \brak{\mathrm{True}}, & x \in L ;\\[0.5ex]
\left[1-\frac{1}{4}\dep\right] \brak{\mathrm{False}} + \frac{1}{4}\dep \brak{\mathrm{True}}, & x \notin L.
\end{array} \right.
\end{equation}
and on the other hand the submachine inside the small disk on the left decides $L$:
\begin{equation}
\brak{\bar{X}_1} \longrightarrow \left \{ \begin{array}{ll}
\left[1-\dep\right] \brak{\mathrm{False}} + \dep \brak{\mathrm{True}}, & x \in L; \\[0.5ex]
\left[1-\frac{1}{2}\dep\right] \brak{\mathrm{False}} + \frac{1}{2}\dep \brak{\mathrm{True}}, & x \notin L.
\end{array} \right.
\end{equation}

\begin{figure}[htb]
\centering
\psfrag{c}[c]{\small $_{\brak{Z}}$}
\psfrag{b}[c]{\small $_{\brak{X_0} = \brak{\mathrm{False}}}$}
\psfrag{a}[c]{\small $_{\brak{Y_0} = \brak{\mathrm{True}}}$}
\psfrag{d}[c]{\small $_{\brak{Y_1}}$}
\psfrag{f}[c]{\small $_{\brak{X_2}}$}
\psfrag{e}[c]{\small $_{\brak{X_1}}$}
\psfrag{e1}[c]{\small $_{\brak{\bar{X}_1}}$}
\psfrag{t}[c]{\small \emph{time}}
\includegraphics[width=0.7\textwidth]{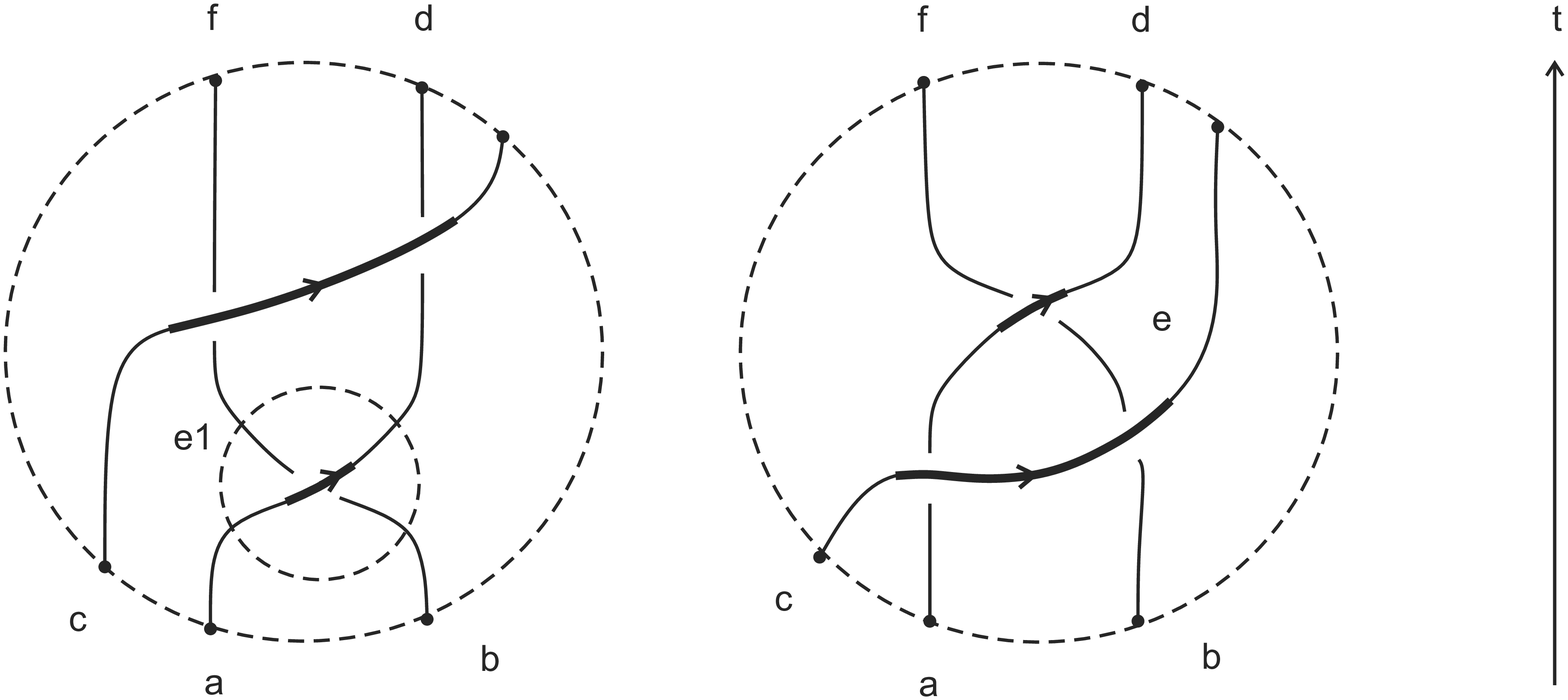}
\caption{\small Example of a zero-knowledge machine.}
\label{fig:zkm}
\end{figure}

If we further restrict the value of $\dep$ so as to satisfy \eqref{eq:cd}, namely,
\begin{equation}
\brak{\bar{X}_1} = \brak{X}_2 \longrightarrow \dep = \dep(1-\dep) + \frac{1}{2}\dep,
\end{equation}
we obtain $\dep = \frac{1}{2}$ which obviously contradicts the basic requirement of deciding $L$. If we slightly relax this
condition to allow a small discrepancy between the underlying distributions then we may take $\dep = \frac{1}{2} + \kappa(x)$
where $\kappa(x)$ is a statistical distance which potentially depends on $x$.

\subsection{Equivalence for machines with wyes}\label{SS:EquivWye}

Machines coloured by quagmas as machines with wyes also have a notion of equivalence, giving them a certain flexibility as a diagrammatic language. Two trivalent machines are \emph{equivalent} if they are related by a finite sequence of moves in Figure~\ref{F:local_moves_machines}, \ref{F:local_moves_machines1}, and \ref{F:WyeR}.

\begin{figure}
\centering
\psfrag{a}[c]{\emph{VYR1}}\psfrag{b}[c]{\emph{YR3}}\psfrag{c}[c]{\emph{VYR3}}
\includegraphics[width=0.5\textwidth]{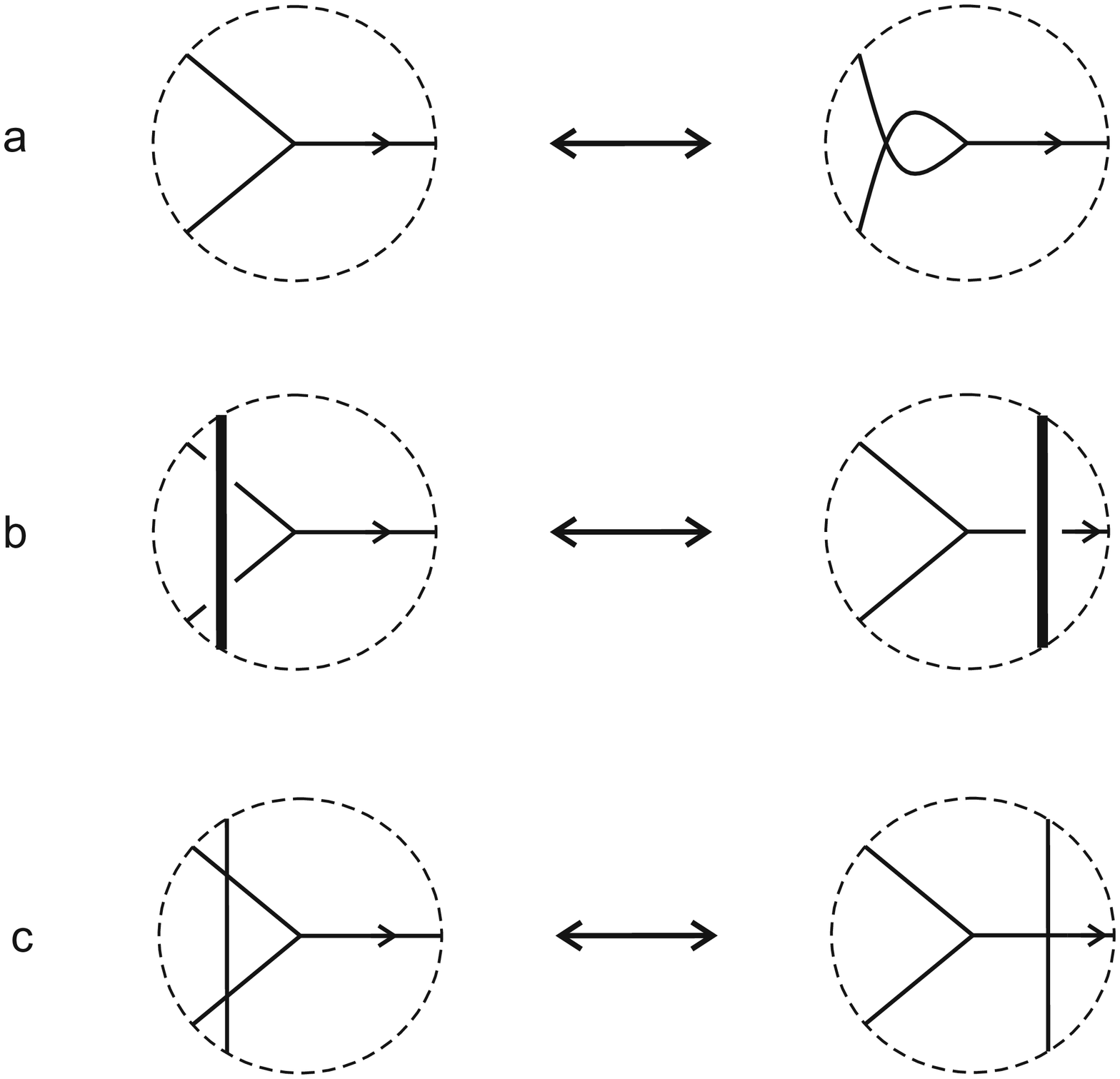}
\caption{\label{F:WyeR} Local moves for wyes. Note that $\mathrm{YR}3$ may reverse the label on the wye ($\max$ to $\min$ or vice versa), depending on what the colours are.}
\end{figure}

\section{Conclusions}

In this paper we have suggested a Turing-complete diagrammatic model of computation, in which computers are drawn as tangles of decorated coloured strings. With bounded resources, our `tangle machines' can decide any language in complexity $\mathrm{IP}$, sometimes more efficiently than known classical single--verifier models. Our machines admit a notion of equivalence that they inherit from low-dimensional topology, with equivalent machines representing bisimilar computations. Topological invariants of our machines would be characteristic quantities for these computations which are invariant over bisimilarity classes.


\end{document}